\newif\iflong\longtrue
\iflong \usepackage[bibliography=common]{apxproof}
\else   \usepackage[bibliography=common,appendix=strip]{apxproof}
\title{A finite presentation \\ of graphs of treewidth at most three}
\titlerunning{A finite presentation of graphs of treewidth at most three}
\author{Amina Doumane}{Plume, LIP, CNRS, ENS de Lyon, France}{amina.doumane@ens-lyon.fr}{}{}
\author{Samuel Humeau}{Plume, LIP, CNRS, ENS de Lyon, France}{samuel.humeau@ens-lyon.fr}{}{}
\author{Damien Pous}{Plume, LIP, CNRS, ENS de Lyon, France}{damien.pous@ens-lyon.fr}{[0000-0002-1220-4399]}{}
\authorrunning{Amina Doumane, Samuel Humeau, and Damien Pous}
\keywords{Graphs, treewidth, connectedness, axiomatisation, series-parallel expressions}
\newtheorem{statement}[theorem]{Statement}
\newtheorem{convention}[theorem]{Convention}
\newcommand\citeapp[2]{Appendix~\ref{#1}}
\newcommand\citeapp[2]{\cite[Appendix~#2]{dhp:icalp24:tw3:hal}}
\newcommand\id{\mathrm{id}}
\newcommand\eqdef\triangleq
\newcommand\NN{\mathbb N}
\newcommand\set[1]{\left\{#1\right\}}
\newcommand\tuple[1]{\left\langle #1\right\rangle}
\newcommand\tf{{\rm f}}
\newcommand\tff{\tf\tf}
\newcommand\tl{{\rm l}}
\newcommand\tll{\tl\tl}
\newcommand\ts{{\rm s}}
\newcommand\tss{{\circ}}
\newcommand\tset{\mathcal{T}}
\newcommand\tiso{\cong}
\newcommand\giso{\simeq}
\newcommand\teq{\equiv}
\newcommand\LL{\mathcal L}
\renewcommand\top\emptyset
\newcommand\glt{\prec}%
\newcommand\sfa[1]{S(#1)}%
\newcommand\tc[1]{#1^\circ}
\newcommand\f{\tf}
\newcommand\goft{g}
\begin{document}

\maketitle

\begin{abstract}  
  We provide a finite equational presentation of graphs of treewidth
  at most three, solving an instance of an open problem by Courcelle
  and Engelfriet. 

  We use a syntax generalising series-parallel expressions, denoting
  graphs with a small interface.  We introduce appropriate notions of
  connectivity for such graphs (components, cutvertices, separation
  pairs).  We use those concepts to analyse the structure of graphs of
  treewidth at most three, showing how they can be decomposed
  recursively, first canonically into connected parallel components,
  and then non-deterministically. The main difficulty consists in
  showing that all non-deterministic choices can be related using only
  finitely many equational axioms.
\end{abstract}

\section{Introduction}

Treewidth is a graph parameter measuring how close a graph is from a
forest.  It is frequently encountered in parameterised complexity:
many NP-complete problems become polynomial or even linear once
parameterised using treewidth~\cite{DBLP:conf/mfcs/Bodlaender97}.
This parameter admits many equivalent definitions.  It was discovered
at least by Bertelè and Brioschi~\cite{DBLP:journals/jct/BerteleB73},
Halin~\cite{halin1976s}, and Robertson and
Seymour~\cite{DBLP:journals/jal/RobertsonS86} via tree decompositions
for their celebrated graph minor theorem.  It was subsequently
characterised using $k$-trees~\cite{arnborg1986characterization},
$k$-elimination
graphs~\cite{wimer1987linear,scheffler1987linear,sanders1993linear},
and chordal graphs~\cite{DBLP:journals/jal/RobertsonS86}.

One may also consider syntaxes, most often generalising
series-parallel expressions~\cite{DBLP:journals/jacm/ArnborgCPS93,
  DBLP:books/daglib/0030488, DBLP:conf/gst/Courcelle91}. There, the
idea is to use terms to denote graphs, and an important problem is to
understand when two terms denote the same graph.  To this end,
Courcelle and Engelfriet gave an equational axiomatisation for
arbitrary graphs~\cite[p. 117]{DBLP:books/daglib/0030804}: a
structured set of equations on terms from which it is possible to
equate all terms denoting a given graph.  Unfortunately, this
axiomatisation is infinite. They note how finite fragments of the
syntax make it possible to capture precisely the graphs up to a given
treewidth, while finite restrictions of their axiomatisation seem to
be incomplete.

This brings them to the following
question~\cite[p.~118]{DBLP:books/daglib/0030804}: for a given bound
$k$, is there a finite equational presentation of graphs of treewidth
at most $k$?  The case of $k=1$ concerns forests and is relatively easy.
The case of $k=2$ has been given a solution a few years
ago~\cite{DBLP:conf/mfcs/Cosme-LlopezP17, DBLP:conf/mfcs/DoczkalP18}. %
Here we give a positive answer for $k=3$: we provide a syntax of terms
with an interpretation map $\goft$ from terms to graphs whose image is
exactly the set of graphs of treewidth at most three, and we give a
finite list of equations whose equational theory $(\teq)$
characterises graph isomorphism $(\giso)$. In symbols, we prove
that for all terms $t,u$,
\begin{align*}
  g(t)\giso g(u)\quad\text{ if and only if }\quad t\teq u
\end{align*}
Like Arnborg, Courcelle, Proskurowski and
Seese~\cite{DBLP:journals/jacm/ArnborgCPS93}, we work with hypergraphs
with a list of designated vertices, the \emph{sources}, used as an
interface to perform the following operations:
\begin{itemize}
\item \emph{parallel composition} $(\parallel)$: glue the graphs
  together along their sources.
\item \emph{permutation} $(pG)$: given a permutation $p$, reorder the
  sources of $G$ according to $p$.
\item \emph{lift} $(\tl G)$: add an isolated vertex to $G$ and append
  it as last source.
\item \emph{forget} $(\tf G)$: remove the last source of $G$ (keeping
  it as a mere vertex of the graph).
\end{itemize}
\begin{figure}[t]
  \begin{align*}
    \arraycolsep=3.5pt
    \begin{array}{ccccccc}
      \includegraphics{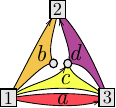}
      &\raisebox{2em}{$\giso$}&
       \includegraphics{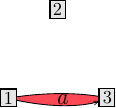}
      &\raisebox{2em}{$\parallel$}&
       \includegraphics{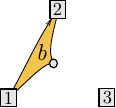}
      &\raisebox{2em}{$\parallel$}&
       \includegraphics{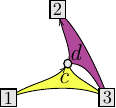}
      \\\\
      \tl\tf b \parallel \tf\big((24)\tl (c\parallel(23)\tl a)\parallel (14)\tl d\big)
      &\teq&(23)\tl a
      &\parallel&\tl\tf b
      &\parallel&\tf\big((24)\tl c\parallel (14)\tl d\big)\\
    \end{array}
  \end{align*}
  \caption{Two parsings of a given graph.}
  \label{fig:graphs}
\end{figure}
Consider the four graphs depicted in Figure~\ref{fig:graphs}, each
with three sources denoted with numbered squares. The neighbours of
each edge are ordered, which we indicate by drawing an arrow from the
first to the second neighbour. The first graph on the left is the
parallel composition of the three other ones.  The second one can be
obtained from a binary edge $a$ (with interface its endpoints) by
applying a lift to add a third isolated source and then swapping the
last two sources: $(23)\tl a$.  The third one can be obtained from a
ternary edge $b$ (again with interface its endpoints), by forgetting
the third source and adding a fresh one via a lift: $\tl\tf b$.  The
last one can be constructed as
$\tf\big((24)\tl c\parallel (14)\tl d\big)$: reasoning top-down, we
promote the inner vertex as a fourth source, and then we put in
parallel two graphs each with a single ternary edge connecting three
out of the four sources---both being obtained as appropriate
permutations of lifted edges.

We call \emph{parsings} the expressions we obtain when decomposing
graphs as above.

The way we parsed the graph on the right generalises to all graphs: first
promote all inner vertices as sources, and then build a large parallel
composition of appropriately permuted and lifted edges.  For instance,
the first graph on the left also admits the following parsing:
\begin{align*}
  \tf\tf\big((23)\tl\tl\tl a\parallel (35)\tl\tl b\parallel (24)\tl\tl c\parallel (124)\tl\tl d\big)
\end{align*}
While such a parsing exists for all graphs, it goes through an
intermediate graph with a large interface. Instead, the graphs of
treewidth at most $k$ are exactly those for which we can find a
parsing where all intermediate graphs have at most $k+1$ sources
(Proposition~\ref{prop:tw:terms}, \cite[Proposition
4.1]{DBLP:journals/jacm/ArnborgCPS93}). This syntax for denoting
precisely the graphs of treewidth at most $k$ is the starting point
for the present work.

A derived operation is of particular interest. Consider three graphs
$x$, $y$, $z$ each with three sources, and combine them as depicted on
the right to obtain a new graph $\tss(x,y,z)$. This operation can be
defined from the previous ones using the expression below.
\begin{align*}
  \raisebox{-2.3em}{\includegraphics{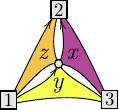}}&&
  \tss(x,y,z) \eqdef \tf\big((14)\tl x \parallel (24)\tl y \parallel (34)\tl z \big)
\end{align*}
It can be generalised into a $n$-ary operation on graphs with $n$
sources, and when $n=2$ we recover the usual notion of \emph{series
  composition} (with its arguments reversed).

Our goal is then to understand which laws are satisfied by the
previous operations. Amongst the natural ones, we have that parallel
composition is associative and commutative, that permutations commute
over parallel compositions, and that applying two permutations in a
row amounts to applying the composite permutation. There are more
involved ones. For instance, we may also parse the first graph of
Figure~\ref{fig:graphs} by keeping edges $a$ and $c$ together as long
as possible, resulting in the expression written below it. We thus
have two rather different parsings for the same graph, which our
axiomatisation should equate.

We proceed in two steps.  First we use connectivity arguments in order
to decompose any graph into a parallel composition of permuted lifts
of \emph{full prime graphs}: non-empty graphs which are connected through
paths not using sources except possibly at their
endpoints. Figure~\ref{fig:graphs} actually provided an example of such a
decomposition, which is always unique for a given graph. Using a few natural
axioms, we show that every term can be rewritten under such a form
(Proposition~\ref{prop:tm:decomp}). This makes it possible to focus on
full prime graphs in the second step, which is where the main
difficulties arise.

A key property of full prime graphs of bounded treewidth is that
either they are \emph{atomic} (i.e., reduced to a permutation of an
edge), or they have a \emph{forget point}: at least one of their inner
vertices can be promoted into a source without increasing the
treewidth, thus making it possible to parse the graph as a forget
operation.  The difficulty is that forget points are not unique,
resulting in several ways of parsing non-atomic full prime
graphs. Consider for instance the following tetrahedron on the left,
with only two vertices marked as sources.
\[
  \includegraphics{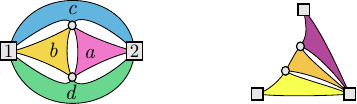}
\]
Each of the two inner vertices is a forget point, so that modulo
appropriate permutations of edges, we may parse this graph as
$\tf(\tss(a,b,c)\parallel d)$ or as $\tf(\tss(a,b,d)\parallel c)$.  In
this case, the two forget points are relatively close, and one of our
axioms makes it possible to jump directly from one parsing to the
other.  A similar situation arises with the graph given on the right.

The cornerstone of our completeness proof is the fact that two
parsings of a non-atomic full prime graph can always be rewritten so
as to agree on their forget point. This is Lemma~\ref{lem:fp:sync},
and most of the paper is devoted to proving it.

To do so, we first delimit a class of vertices which we call
\emph{anchors}. Those can be thought of as a generalisation of
cutvertices~\cite[Section 1.4]{DBLP:books/daglib/0030488}. When a full
prime graph has an anchor, we show that we can use it as a universal
agreement point (Lemma~\ref{lem:anchor:parsing}).

Unfortunately, there are graphs without anchors.  We call them
\emph{hard}, they can be thought of as specific unseparable
graphs~\cite[Section 1.4]{DBLP:books/daglib/0030488}. The rest of the
proof consists in a structural analysis of hard graphs of treewidth at
most three.
We show that they admit \emph{separation pairs}: every hard graph has
the shape on the left of Figure~\ref{fig:sep}, and every parsing
can be rewritten as a double forget on some of these separation
pairs (Lemma~\ref{lem:reaching:sep}). %
\begin{figure}[t]
  \centering
  \includegraphics{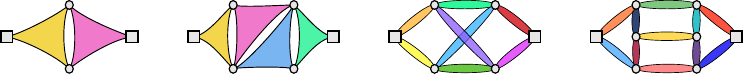}
  \caption{Separation pairs and graphs with several separation pairs.}
  \label{fig:sep}
\end{figure}
However, as illustrated on the right, separation pairs are not unique.
In order to finish the proof, we analyse how separation pairs
relate to each other.  There are easy cases like in the second graph
of Figure~\ref{fig:sep} where the diagonal separation pair connects
the two vertical ones, and previously encountered axioms make it
possible to conclude. By analysing the shape of graphs excluding the
triangle between their sources as a minor
(Proposition~\ref{prop:triangle_property}---note that cycles may still
appear in such graphs), we show that all other situations reduce to
one of the two graphs on the right of Figure~\ref{fig:sep}.  There,
the two outer-vertical pairs of inner vertices are the only separation
pairs, and they are disjoint. They correspond to
distinct parsings of the same graph, and we must include the corresponding
equations to complete our list of axioms.

\medskip
\iflong%
  For the sake of readability, some proofs and details are deferred to the appendix.
  Statement headers contain hyperlinks to ease navigation.
\else%
  For the sake of readability, some proofs and details are provided in the appendix of the full version~\cite{dhp:icalp24:tw3:hal}.
\fi%

\section{Graphs, treewidth}
\label{sec:graphs}

A \emph{ranked set} (or \emph{signature}) is a set where every element
has an associated natural number called its \emph{arity}. Throughout
the paper we fix a ranked set $\Sigma$ of \emph{letters}, which we
call the \emph{alphabet}.
Given a set $V$, we denote by $\LL(V)$ the ranked set of
duplicate-free lists over $V$, where the arity of a list is its
length.

We consider labelled and ordered hypergraphs with interfaces, defined as follows:
\begin{definition}
  \label{def:graphs}
  A \emph{graph} is a tuple $\tuple{V,E,n,l,i}$ where
  $V$ is a finite set of \emph{vertices},
  $E$ is a finite ranked set of \emph{edges},
  $i\in\LL(V)$ is the \emph{interface},
  and $n\colon E\to\LL(V)$ and 
  $l\colon E\to\Sigma$ are arity-preserving functions respectively
  giving the \emph{neighbours} and the \emph{label} of each edge.
\end{definition}
The \emph{elements} of a graph are its vertices and edges. The
vertices appearing in the interface of a graph are its
\emph{sources}. %
Vertices (resp. elements) which are not sources are called \emph{inner
  vertices (resp. elements)}.  %
A vertex is \emph{isolated} if it is not in the neighbourhood of any
edge. %
The \emph{size} of a graph $G$, written $|G|$, is its number of
elements. %
The \emph{arity} of a graph is that of its interface.
Two graphs $G,H$ are \emph{isomorphic}, written $G\giso H$, if their
vertex and edge sets are related by structure-preserving bijections
(cf. \citeapp{app:terms}{A}).
A graph is:
\begin{itemize}
\item \emph{empty} if its only elements are its sources;
\item \emph{atomic} if its only inner element is an edge, whose
  neighbours comprise all the sources.
\end{itemize}
\begin{toappendix}
  \label{app:terms}
  We first give a formal definition of graph isomorphism.
  \begin{definition}
    \label{def:giso}
    Two graphs $G=\tuple{V_G,E_G,n_G,l_G,i_G}$ and $H=\tuple{V_H,E_H,n_H,l_H,i_H}$ are \emph{isomorphic} if there are bijections $\varphi\colon V_G\to V_H$ and $\psi\colon E_G\to E_H$ such that
    \begin{itemize}
    \item $n_H\circ \psi = \LL\varphi \circ n_G$ (preservation of neighbourhoods),
    \item $l_H\circ \psi = l_G$ (preservation of labels),
    \item and $i_H = \LL\varphi (i_G)$ (preservation of the interface),
    \end{itemize}
    where $\LL\varphi$ denotes the pointwise extension of $\varphi$ to duplicate-free lists.
  \end{definition}
\end{toappendix}

Our choice of considering ordered hypergraphs with interfaces comes
from the need to be able to substitute graphs for edges.
A \emph{substitution} is an arity-preserving function from the
alphabet to graphs. %
Given a graph $G$ and a substitution $\sigma$, we write $G\sigma$ for
the graph obtained from $G$ by replacing all its $a$-labelled edges by
copies of the graph $\sigma(a)$, identifying pairwise the neighbours
of the edges with the interfaces of the copies. %
We say that $G$ \emph{has shape} $H$ when $G\giso H\sigma$ for some
substitution $\sigma$.  %
For instance, the first graph of Figure~\ref{fig:graphs} has the
following shapes (amongst other ones):
\begin{align*}
  \includegraphics{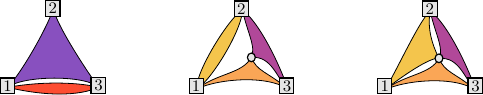}
\end{align*}
(Note that the edge orientation is irrelevant in a graph used as a
shape, as well as the labelling function---as long as it is injective,
which will always be the case in the present paper.)

\begin{proposition}
  \label{prop:equational}
  For all graphs $G,H$ and all substitutions $\sigma,\rho$, if
  $G\giso H$ and $\sigma(a)\giso\rho(a)$ for all letters $a$, then
  $G\sigma\giso H\rho$.
\end{proposition}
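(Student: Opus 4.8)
The plan is to prove this by exhibiting explicit isomorphisms and chasing the definitions of substitution and graph isomorphism given in Definition~\ref{def:giso}. The statement asserts that the substitution operation $G\mapsto G\sigma$ is well-defined up to isomorphism in both arguments simultaneously. First I would reduce the claim to two separate congruence properties: (i) if $G\giso H$ then $G\sigma\giso H\sigma$ for a fixed substitution $\sigma$; and (ii) if $\sigma(a)\giso\rho(a)$ for all letters $a$, then $G\sigma\giso G\rho$ for a fixed graph $G$. The full statement then follows by composing the two, since $G\sigma\giso H\sigma\giso H\rho$, and isomorphism is transitive.

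For part (i), I would take an isomorphism $(\varphi,\psi)$ witnessing $G\giso H$, so that $\varphi$ relates vertices, $\psi$ relates edges, and the three preservation conditions of Definition~\ref{def:giso} hold. The graph $G\sigma$ is built by taking the vertices of $G$ together with a fresh copy of the inner vertices of $\sigma(a)$ for each $a$-labelled edge of $G$, with the neighbours of that edge identified with the interface of the copy. I would define the isomorphism $G\sigma\giso H\sigma$ by using $\varphi$ on the original vertices, and on the copied inner vertices using the identity between the copy attached to an edge $e$ of $G$ and the copy attached to the edge $\psi(e)$ of $H$ (these copies are of the same graph $\sigma(l(e))=\sigma(l(\psi(e)))$, the label being preserved). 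That the identifications along interfaces are respected follows from the neighbourhood-preservation condition $n_H\circ\psi=\LL\varphi\circ n_G$, which guarantees that $\varphi$ maps the neighbours of $e$ to the neighbours of $\psi(e)$ in the same order. Edge labels and the global interface are preserved because $\psi$ and $\varphi$ preserve them already in $G\giso H$.

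For part (ii), I would fix $G$ and, for each letter $a$, choose an isomorphism $(\varphi_a,\psi_a)$ witnessing $\sigma(a)\giso\rho(a)$. The graph $G\rho$ has the same shape of construction as $G\sigma$, differing only in which copies are glued in. I would build the isomorphism $G\sigma\giso G\rho$ as the identity on the vertices and edges coming from $G$, and on the copy attached to an $a$-labelled edge $e$ I would use $\varphi_{a}$ (respectively $\psi_{a}$) on the inner vertices and edges of that copy. The only subtlety is the gluing interface: the vertices of the copy that get identified with the neighbours of $e$ must be mapped consistently to those same neighbours. This works precisely because each $(\varphi_a,\psi_a)$ preserves the interface, i.e. $\LL\varphi_a(i_{\sigma(a)})=i_{\rho(a)}$, so the interface vertices of the two copies correspond and are both glued to the same neighbours in $G$. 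Hence the isomorphism is well-defined across the quotient that performs the gluing.

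The routine calculations are the verifications that the candidate maps are bijections and that they satisfy the three preservation conditions; these are direct but notationally heavy because one must track the partition of the vertex set of $G\sigma$ into the base part and the per-edge copies. The main obstacle is therefore bookkeeping rather than conceptual: one has to set up notation for the disjoint-union-then-quotient construction underlying substitution precisely enough that the compatibility of the isomorphisms with the gluing identifications is transparent. Once the interface-preservation conditions are seen to be exactly what makes the gluings match up, the rest is a mechanical diagram chase.
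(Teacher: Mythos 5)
Your proof is correct. Note that the paper states Proposition~\ref{prop:equational} without any proof at all, treating it as a routine consequence of the definitions; your argument---splitting into the two congruences $G\sigma\giso H\sigma$ and $H\sigma\giso H\rho$ and composing by transitivity, with the interface-preservation clause of Definition~\ref{def:giso} doing the work of making the gluings match---is exactly the routine verification the paper implicitly relies on, so there is nothing to contrast.
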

A graph in the sense of Definition~\ref{def:graphs} can be seen as a
simple graph, its \emph{skeleton}, by turning all its edges and its
interface into cliques (removing duplicate edges if necessary). %
This operation makes it possible to use the standard notion of
treewidth~\cite[Section 12.4]{DBLP:books/daglib/0030488}: the
\emph{treewidth} of a graph is that of its skeleton. Equivalently, we
can define it as follows. %
\begin{definition}\label{def:treewidth}
  A \emph{tree decomposition} of a graph is a tree whose nodes are
  labelled by sets of vertices, called \emph{bags}, such that
  \begin{itemize}
  \item there is a bag containing all sources,
  \item for each edge there is a bag containing its neighbours,
  \item for each vertex, the bags containing it form a subtree.
  \end{itemize}
  The \emph{width} of a tree decomposition is the size of its largest
  bag minus one. %
  The \emph{treewidth} of a graph is the minimal width of its possible
  tree decompositions.
\end{definition}
Observe that since the sources of a graph must be contained in a bag
for all tree decompositions, the treewidth of a graph is at least its
arity minus one. %
Therefore, at treewidth at most three, we only have graphs of arity
up to four. The same constraint holds for the edge arities.
\begin{propositionrep}
  \label{prop:tw:subst}
  If a graph $G$ and the graphs in the image of a substitution
  $\sigma$ all have treewidth at most $k$, then so does $G\sigma$.
\end{propositionrep}
\begin{proof}
  Starting from a tree decomposition $T$ for $G$, we obtain a tree
  decomposition for $G\sigma$ by adding, for each $a$-labelled edge
  $e$ of $G$, a copy $T_e$ of a tree decomposition for $\sigma(a)$,
  and an edge connecting a bag of $T$ containing the neighbours of $e$ to
  a bag of $T_e$ containing the sources of $\sigma(a)$. This
  construction does not create bags, and thus preserves their maximal
  size.
\end{proof}

\section{Graph operations}
\label{sec:ops}

We can now define the operations discussed in the introduction.
\begin{definition}
  \label{def:graph-op}
  Let $G,H$ be graphs of arity $k$.
  \begin{itemize}
  \item The \emph{parallel composition} of $G$ and $H$,
    $G\parallel H$, is the graph of arity $k$ obtained from the
    disjoint union of $G$ and $H$ by pairwise merging their sources.
  \item If $k\geq 1$, the \emph{forget} of $G$, $\tf G$, is the graph
    of arity $k-1$ obtained from $G$ by removing the last vertex from
    its interface (keeping it in the vertex set).
  \item The \emph{lift} of $G$, $\tl G$, is the graph of arity $k+1$
    obtained from $G$ by adding a new isolated vertex, and appending
    it to its interface.
  \item For a permutation $p$ of $[1,k]$, we denote by $pG$ the graph
    obtained from $G$ by permuting its interface according to $p$.
  \item We write $\top_k$ for the empty graph of arity $k$, omitting
    $k$ when it is clear from the context.
  \item For a letter $a\in\Sigma$, we also write $a$ for the atomic
    graph whose edge is labelled $a$ and has its neighbours appearing
    in the same order as in the interface.
  \end{itemize}
\end{definition}
These operations all arise as substitutions of well chosen small
graphs. By Proposition~\ref{prop:tw:subst}, this entails that
they preserve treewidth (except of course for lifts, which may
increase it when reaching the maximal arity).
It remains to establish a few structural properties, in order to show
that every graph of treewidth at most $k$ can be constructed from the
above operations up to arity $k+1$ (Proposition~\ref{prop:tw:terms} below).

A \emph{path} in a graph is a sequence of elements such that any two
consecutive elements consist of an edge and one of its neighbours. %
An \emph{inner path} is a path made only of inner elements, except
possibly for the endpoints.
\begin{definition}
  A graph is \emph{prime} if it is not empty and all its inner elements
  are connected by inner paths; it is \emph{full} if none of its
  sources are isolated.
\end{definition}
Observe that a graph is full prime iff all its elements are connected
by inner paths. Amongst the graphs in Figure~\ref{fig:graphs}, the
first one is full but not prime, the three other ones are prime, and
only the last one is full prime.

\begin{lemmarep}
  \label{lem:glift:decomp}
  Every graph is isomorphic to a permutation of lifts of a full graph.
  This decomposition is unique up to permutation and isomorphism.
\end{lemmarep}
\begin{proof}
  Starting from $G$, build a full graph $G'$ by removing all isolated
  sources both from the graph and from the interface. The graph $G$ can be
  obtained from $G'$ using as many lifts as there were isolated
  sources, followed by a permutation to move those isolated sources at
  their original places in the interface.
\end{proof}
\begin{lemmarep}
  \label{lem:prime:decomp}
  Every graph is isomorphic to a parallel composition of prime graphs.
  This decomposition is unique up to reindexing and isomorphism.
\end{lemmarep}
\begin{proof}
  The relation $C$ on inner elements such that $xCy$ when $x$ and $y$
  are connected via an inner path is an equivalence relation. Its
  equivalence classes seen as subgraphs provide the desired
  components.
\end{proof}
\noindent
We call \emph{(prime) components of a graph} the prime graphs occurring
in the latter decomposition. %
We call \emph{reduced components of a graph} the full prime graphs
obtained by removing isolated sources from its components.

\medskip

We now give two key properties of full prime graphs of bounded
treewidth. %
First, they are always atomic at maximal arity.
\begin{propositionrep}
  \label{prop:fp:atomic}
  Full prime graphs of treewidth at most $k$ and arity $k+1$ are
  atomic.
\end{propositionrep}
\begin{proof}
  Let $G$ be a full prime graph of treewidth at most $k$ and arity
  $k+1$. %
  We use the fact that graphs of treewidth at most $k$ cannot have the
  clique with $k+2$ vertices as a minor. If $G$ has an inner vertex
  $x$, then $x$ must be connected by inner paths to all vertices; we
  can thus contract all the inner vertices onto $x$, and obtain a
  clique with the sources and $x$, a contradiction. Thus the only
  inner elements of $G$ are edges; there must be exactly one since $G$
  is prime, and it must connect all sources since $G$ is
  full. Therefore $G$ is atomic.
\end{proof}
In the case of treewidth at most three, this means that when we
decompose a graph of arity four into prime components, then except for
a number of four-edges between the sources, we only get non-full
components: drawing the graph as a tetrahedron between its sources,
the non-atomic components are glued through the faces (or edges, or
vertices, or nothing), and the interior of the tetrahedron remains
empty. This corresponds to the characterisations of treewidth at most
$k$ graphs as partial $k$-trees~\cite{arnborg1986characterization}
or as subgraphs of chordal graphs
whose cliques are of size at most $k+1$~\cite{DBLP:journals/jal/RobertsonS86}.

Second, non-atomic full prime graphs have \emph{forget points}, which
we define as follows.
\begin{definition}
  When $x$ is an inner vertex of a graph $G$, we write $(G,x)$ for the
  graph obtained from $G$ by appending $x$ to its interface; if this
  graph has treewidth at most $k$ then we say that $x$ is a
  \emph{$k$-forget point of $G$}.
\end{definition}

\begin{toappendix}
  \begin{lemma}
    \label{lem:forget:points}
    Graphs of treewidth and arity at most $k$, and with an inner
    vertex, have $k$-forget points.
  \end{lemma}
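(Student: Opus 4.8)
The plan is to reformulate the statement in terms of tree decompositions and then manipulate a single optimal one. Write $S$ for the set of sources of $G$; since the arity of $G$ is at most $k$, we have $|S|\le k$. Unfolding Definition~\ref{def:treewidth}, an inner vertex $x$ is a $k$-forget point of $G$ exactly when $G$ admits a tree decomposition of width at most $k$ having a bag that contains $S\cup\set{x}$: indeed $(G,x)$ and $G$ share the same vertices and edges, so the tree-decomposition conditions on edges and on the subtrees of individual vertices are identical, and the only extra constraint imposed by $(G,x)$ is that its sources $S\cup\set{x}$ all occur in a common bag. Thus it suffices to exhibit a width-at-most-$k$ tree decomposition of $G$ with a bag of the form $S\cup\set{x}$ for some inner vertex $x$; note that such a bag has size at most $k+1$ thanks to the arity bound, which is precisely what keeps the width within $k$.

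First I would fix a tree decomposition $T$ of $G$ of width at most $k$, \emph{reduced} in the sense that no bag is included in an adjacent one (any violation can be removed by contracting the offending edge, which neither breaks the tree-decomposition conditions nor increases the width). By Definition~\ref{def:treewidth} some bag $B_0$ contains $S$. If $B_0$ strictly contains $S$, then any $x\in B_0\setminus S$ is an inner vertex and $B_0\supseteq S\cup\set{x}$, so $T$ itself already witnesses that $x$ is a $k$-forget point and we are done.

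The remaining case, which I expect to be the only real obstacle, is when every bag containing $S$ equals $S$. Here I would argue that $T$ must have more than one bag: $G$ has an inner vertex $v$, which occurs in some bag, and that bag cannot be $S$. Hence $B_0=S$ has a neighbour $B'$ in $T$; by reducedness $B'\not\subseteq S$, so $B'$ contains some vertex $w\notin S$, necessarily inner. I would then modify $T$ by adding $w$ to the bag $B_0$. This preserves all tree-decomposition conditions---the only vertex whose collection of bags changes is $w$, and it stays connected because the newly enlarged bag $B_0$ is adjacent to $B'$, which already contained $w$---while enlarging $B_0$ only to $S\cup\set{w}$, of size at most $k+1$. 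The modified decomposition thus has width at most $k$ and a bag containing $S\cup\set{w}$, so $w$ is a $k$-forget point.

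Putting the two cases together yields the claim. The point requiring care is the interplay between the reduction step and the final bag enlargement; the arity hypothesis $|S|\le k$ is used crucially to ensure $|S\cup\set{x}|\le k+1$, matching exactly the constraint that explains why Proposition~\ref{prop:fp:atomic} rules out forget points at arity $k+1$.
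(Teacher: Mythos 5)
Your overall strategy is sound and close in spirit to the paper's: both proofs start from an arbitrary width-at-most-$k$ tree decomposition and modify it so that a bag containing the sources also contains an inner vertex, exploiting the fact that a bag consisting only of sources has size at most $k$ and hence room for one more element. The mechanisms differ: the paper takes a \emph{shortest path} in the decomposition tree from a bag containing the sources to a bag containing some inner vertex $x$ (minimality forces every bag on the path except the last to contain only sources) and adds $x$ to all bags along that path, whereas you first make the decomposition \emph{reduced} (no bag included in an adjacent one) and then absorb a single inner vertex from the bag adjacent to the source bag. Your variant is more local and avoids the path argument; both are valid manipulations.

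There is, however, one step you should not take for granted: ``$G$ has an inner vertex $v$, which occurs in some bag.'' Definition~\ref{def:treewidth} forces sources into a bag (first condition) and neighbours of edges into bags (second condition), but whether an \emph{isolated} inner vertex must occur in any bag depends on reading ``the bags containing it form a subtree'' as requiring a non-empty subtree. The paper's own proof does not rely on that reading: it disposes of isolated inner vertices as a separate first case (``if there is an isolated inner vertex, then it is a forget point'') precisely because afterwards the only way it places an inner vertex inside a bag is via the edge condition, which needs the vertex to be non-isolated. Concretely, if every inner vertex of $G$ is isolated, then under the lax reading the single bag $S$ is a legal width-at-most-$k$ tree decomposition; your case 2 then has no neighbour $B'$ to draw a vertex from, and the argument stalls even though the lemma still holds (one can attach a fresh bag $S\cup\set{v}$ as a leaf next to $B_0$). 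The repair is cheap---treat isolated inner vertices separately as the paper does, or justify coverage of a non-isolated inner vertex through the edge condition---but as written your case analysis has this hole; the rest of the argument (the reduction step, the legality and width of the enlarged decomposition, and the reformulation of $k$-forget points in terms of a bag containing $S\cup\set{x}$) is correct.
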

  \begin{proof}
    Take a tree decomposition of width at most $k$. %
    If there is an isolated inner vertex, then it is a forget point. %
    Otherwise, the given inner vertex is in the neighbourhood of an
    edge, and there are bags containing inner vertices. %
    Take one of the shortest paths from a bag containing the sources
    to a bag containing some inner vertex $x$. By minimality, all bags
    along this path but the last one must contain only sources. %
    Add $x$ to all of them to obtain a tree decomposition for $(G,x)$
    of width at most $k$, establishing that $x$ is a $k$-forget point
    of $G$.
  \end{proof}
\end{toappendix}

\begin{propositionrep}
  \label{prop:fp:analysis}
  Non-atomic full prime graphs of treewidth at most $k$ have
  $k$-forget points.
\end{propositionrep}
\begin{proof}
  Direct consequence of Proposition~\ref{prop:fp:atomic} and
  Lemma~\ref{lem:forget:points}.
\end{proof}

\section{Terms and axioms}
\label{sec:terms}

We finally provide a notion of \emph{term} for denoting graphs. %
We use a multisorted syntax: the family $(\tset_k)_{k\in\NN}$ of sets
of terms of a given arity is defined inductively by the following
rules (where $p$ ranges over permutations of $[1,k]$ and $a$ over
letters of arity $k$):
\begin{align*}
  \frac{t,u\in \tset_k}{t\parallel u\in\tset_k}&&
  \frac{t\in \tset_k}{\tl t\in\tset_{k+1}}&&
  \frac{t\in \tset_{k+1}}{\tf t\in\tset_k}&&
  \frac{t\in \tset_k}{p t\in\tset_k}&&
  \frac{}{\top\in\tset_k}&&
  \frac{}{a\in\tset_k}
\end{align*}
This syntax matches the operations in Definition~\ref{def:graph-op}. %
Accordingly, we obtain a recursive arity-preserving function $\goft$
from terms to graphs.
We say that a term $t$ \emph{denotes} the graph $G$, or that $t$ is a
\emph{parsing} of $G$, when $\goft(t)\giso G$. %
For instance, the expressions below the graphs in
Figure~\ref{fig:graphs}, seen as terms, are parsings of these
graphs. %
Note that the number of inner vertices of a graph is the number of
forgets appearing in any of its parsings.

We shall sometimes mention terms and refer implicitly to their graphs;
for instance writing that a term is prime to mean that its graph is
so.

The \emph{width} of a term is the maximal arity of its subterms, minus
one.
\begin{proposition}
  \label{prop:tw:terms}
  A graph has treewidth at most $k$ iff it has a parsing of width at
  most $k$.
\end{proposition}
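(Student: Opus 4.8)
The plan is to prove the two implications separately, the backward one by structural induction and the forward one by well-founded induction using the decomposition lemmas already established.

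For the backward direction I would show, by induction on the structure of a width-$\le k$ term $t$, that every subterm denotes a graph of treewidth at most $k$. The base cases are the empty and atomic graphs of arity $m\le k+1$: the skeleton of either is the clique $K_m$ (the interface, and in the atomic case the single edge, are turned into cliques on the same $m$ vertices), hence their treewidth is $m-1\le k$. For the inductive step, recall that each operation $\mathrm{op}$ is realised as a substitution into a fixed gadget graph $D_{\mathrm{op}}$ of arity $\le k+1$ and treewidth $\le k$; the only delicate gadget is the one for the lift, a $(k+1)$-ary graph carrying a single $k$-ary edge, whose skeleton is $K_{k+1}$ and whose treewidth is therefore exactly $k$. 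Proposition~\ref{prop:tw:subst} then propagates treewidth $\le k$ from the arguments (and from $D_{\mathrm{op}}$) to the result. Since the width bound forces every intermediate arity to be $\le k+1$, each gadget stays within treewidth $k$ and the induction goes through.

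For the forward direction I would construct a parsing recursively, proceeding by well-founded induction on the pair $(\nu(G),|G|)$ ordered lexicographically, where $\nu(G)$ denotes the number of inner vertices of $G$. Let $G$ have treewidth $\le k$, so that its arity is $\le k+1$, and distinguish three cases. (i) If $G$ has an isolated source, Lemma~\ref{lem:glift:decomp} writes $G\giso p\,\tl^m G'$ with $G'$ full, of strictly smaller size but with the same inner vertices; I recurse on $G'$ (a subgraph, hence of treewidth $\le k$) and prepend the lifts and the permutation. (ii) If $G$ is full but not prime, it is either empty—parsed by $\top$—or a parallel composition $G_1\parallel\dots\parallel G_r$ of $r\ge 2$ prime components (Lemma~\ref{lem:prime:decomp}), each a subgraph of $G$ of treewidth $\le k$ and of strictly smaller size; I recurse on each and take their parallel composition. (iii) If $G$ is full prime, then either it is atomic and parsed by $p\,a$ for a suitable permutation $p$ and letter $a$ of arity $\le k+1$; or, by Proposition~\ref{prop:fp:atomic}, its arity is $\le k$ and Proposition~\ref{prop:fp:analysis} supplies a $k$-forget point $x$, so that $G=\tf(G,x)$ with $(G,x)$ of treewidth $\le k$, arity $\le k+1$, and exactly one fewer inner vertex; I recurse on $(G,x)$ and apply a forget.

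In every recursive call the lexicographic measure $(\nu,|G|)$ strictly decreases: the size drops in cases (i) and (ii) while $\nu$ does not increase, and $\nu$ drops by one in case (iii). Hence the recursion terminates, and a routine check confirms that every arity occurring in the constructed term lies between $0$ and $k+1$, so the term has width $\le k$. The main obstacle is exactly the termination of case (iii): forgetting a vertex keeps the size of the graph unchanged, so a naive induction on $|G|$ fails; the fix is to give priority to $\nu$, the number of inner vertices, which the forget step always strictly decreases while the other two steps leave it non-increasing.
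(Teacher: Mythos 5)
Your proposal is correct and takes essentially the same route as the paper: the backward implication by induction on terms via Proposition~\ref{prop:tw:subst}, and the forward implication by the identical case analysis (isolated sources via Lemma~\ref{lem:glift:decomp}, prime decomposition via Lemma~\ref{lem:prime:decomp}, then atomic graphs or forget points via Propositions~\ref{prop:fp:atomic} and~\ref{prop:fp:analysis}). The only difference is the termination measure: you order lexicographically by (number of inner vertices, size), while the paper uses (size, $k+1$ minus arity); both are valid, since the forget step preserves size but strictly decreases your first component, respectively the paper's second.
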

\begin{proof}
  The backward implication is proven by induction on terms, using
  Proposition~\ref{prop:tw:subst} and the observation that all
  operations arise as substitutions. %
  For the direct implication, we proceed by lexicographic induction on
  the size of the graph followed by $k+1$ minus its arity (recall that
  the arity is always lower or equal to the treewidth plus one).

  If the graph has isolated sources, we use
  Lemma~\ref{lem:glift:decomp} to write it as a permutation of lifts
  and proceed recursively. %
  Otherwise it is full, and we decompose it into primes via
  Lemma~\ref{lem:prime:decomp}:
  \begin{itemize}
  \item if there are no components then the graph is empty and has a
    trivial parsing;
  \item if there are at least two components then they are smaller, and we proceed
    recursively;
  \item otherwise the graph is (full) prime; either it is atomic and
    it admits a permutation of a letter as a parsing, or, by
    Proposition~\ref{prop:fp:analysis}, it can be written as the
    forget of a graph of the same size but increased arity, which we
    may parse recursively.\qedhere
  \end{itemize}
\end{proof}

The previous proposition holds for all bounds $k$ on the treewidth. %
Some of our results below also hold generically, and we will discuss
these generalisations in Section~\ref{sec:ccl}. Still, from this point
on we focus on treewidth at most three.
\begin{convention}
  In the remainder, we only work with graphs of treewidth at most
  three. We simply call \emph{terms} the terms of width at most three,
  and \emph{forget points} the $3$-forget points.
\end{convention}

A \emph{(term) substitution} is an arity-preserving function from the alphabet to terms.
Such a function $\sigma$ extends uniquely to a homomorphism $\hat\sigma$ from terms to terms:
  \begin{align*}
    \hat\sigma(t\parallel u) &\eqdef \hat\sigma(t)\parallel \hat\sigma(u)&%
    \hat\sigma(\tl t) &\eqdef \tl\hat\sigma(t)\\
    \hat\sigma(\top) &\eqdef \top&%
    \hat\sigma(\tf t) &\eqdef \tf\hat\sigma(t)\\
    \hat\sigma(a) &\eqdef \sigma(a)&%
    \hat\sigma(p t) &\eqdef p\hat\sigma(t)
  \end{align*}
A \emph{context of arity $i\to o$} is a term of arity $o$ with a single occurrence of a designated letter $h$ of arity $i$, called the \emph{hole}. Given such a context $c$ and a term $t$ of arity $i$, we write $c[t]$ for the term $c$ where the hole is replaced by $t$. Note that $c[t]=\hat\sigma_t(c)$ for the substitution $\sigma_t$ mapping $h$ to $t$ and fixing all other letters.

An \emph{equational theory} is an equivalence relation $R$ on terms, relating only terms of
the same arity, and which is closed under contexts and substitutions:
\begin{itemize}
\item $(t,u)\in R$ entails $(c[t],c[u])\in R$ for all contexts $c$ of appropriate arity, and
\item $(t,u)\in R$ entails $(\hat\sigma(t),\hat\sigma(u))\in R$ for all substitutions $\sigma$.
\end{itemize}

Given two terms $t,u$, we write $t\tiso u$ when $\goft(t)\giso\goft(u)$. %
Thanks to Proposition~\ref{prop:equational}, this relation $(\tiso)$ is an
equational theory, and our goal is to provide a finite list of axioms that
generates it.

We give such a list below; most axioms can be written explicitly, but
three of them relate rather large terms, which are best presented by
their graphs. This is why we rely on the following concept of forget axiom.

A \emph{forget axiom} for a graph $G$ with two forget points $x,y$ is
an equation of the shape $\tf t\equiv \tf u$ for some parsings $t$ of
$(G,x)$ and $u$ of $(G,y)$.

\newcommand\formatax[1]{\textcolor{lipicsGray}{\sffamily\bfseries\upshape\mathversion{bold}#1}}
\newcommand\tformatax[1]{\text{\formatax{#1}}}
\newcommand\ax[1]{\hyperlink{firstaxioms}{\formatax{A#1}}\xspace}%
\newcommand\axFS[1][]{\hyperlink{firstaxioms}{\formatax{FS$_{#1}$}}\xspace}%
\newcommand\axFK{\hyperlink{anchoraxioms}{\formatax{FK}}\xspace}%
\newcommand\axFX{\hyperlink{hardaxioms}{\formatax{FX}}\xspace}%
\newcommand\axFD{\hyperlink{hardaxioms}{\formatax{FD}}\xspace}%
\begin{definition}[Finite axiomatisation]
  \label{def:axioms}
  We write $\teq$ for the least equational theory containing the
  following axioms, for letters $a,b,c$ of appropriate arities%
  \hypertarget{firstaxioms}{:}
  \begin{enumerate}[{A}1.]
  \item $a\parallel (b\parallel c) \teq (a\parallel b)\parallel c$,
    $a\parallel b \teq b\parallel a$, and $a\parallel\top \teq a$;
  \item $pqa\teq (p\circ q)a$ for all permutations
    $p,q$, and $\id a \teq a$ where $\id$ is the identity permutation;
  \item $p(a\parallel b)\teq p a\parallel p b$ and
    $p\top\teq\top$ for all permutations $p$;
  \item $\tl(a\parallel b)\teq\tl a\parallel\tl b$
    and $\tl\top\teq\top$;
  \item $p\tf a \teq \tf\dot p a$ and
    $\tl p a \teq \dot p \tl a$ where $\dot p$ is the extension of a
    permutation $p$ of $[1,k]$ to $[1,k+1]$;
  \item $\tl\tf a\teq\tf r\tl a$ and
    $\tl\tl a\teq r\tl\tl a$ for the permutation $r$ that swaps the
    last two elements;
  \item $\tf a\parallel b \teq \tf(a \parallel\tl b)$;
  \item[\formatax{FS.}] $\tf\tf a\teq \tf\tf r a$ for the permutation
    $r$ that swaps the last two elements;
  \end{enumerate}
  as well as three forget axioms for the graphs with forget points \axFK,
  \axFX and \axFD in Figures~\ref{fig:anchor:axioms}\&\ref{fig:hard:axioms} (for some arbitrary choice of interface
  ordering, edge orientation, and injective edge labelling).
\end{definition}
\begin{figure}[t]
  \centering
  $\begin{array}{c@{\qquad\qquad}c@{\qquad\qquad}c@{\qquad\qquad}c}
    \includegraphics{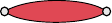} &
    \includegraphics{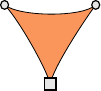} &
    \includegraphics{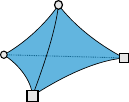} &
    \includegraphics{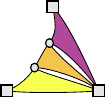} \\
    \tformatax{FS}_0 &
    \tformatax{FS}_1 &
    \tformatax{FS}_2 &
    \tformatax{FK} \\
  \end{array}$
  \caption{\hypertarget{anchoraxioms}{Forget axioms} for anchors (swap and kite); $x,y$ are the inner vertices.}
  \label{fig:anchor:axioms}
\end{figure}
\begin{figure}[t]
  \centering
  $\begin{array}{c@{\qquad\qquad}c}
    \includegraphics{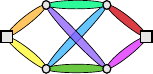} &
    \includegraphics{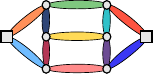} \\
    \tformatax{FX} &
    \tformatax{FD}
  \end{array}$
  \caption{\hypertarget{hardaxioms}{Forget axioms} for hard graphs (cross and domino); $x,y$ are the topmost inner vertices.}
  \label{fig:hard:axioms}
\end{figure}
The first eight items are universally quantified over all appropriate
arities. %
For instance, associativity of parallel composition (in \ax1) is an axiom
at each arity, and $a,b$ and $\top$ may have arity up to three in \ax4. %
Since we restrict globally to arities up to four, the list of axioms is
nevertheless finite.

Also note that even though the axioms are expressed using letters,
they yield laws which hold under all term substitutions since $\teq$
is defined as an equational theory.

\axFS comprises three axioms depending on the arity of $a$ ($2,3,$ or $4$), which we shall sometimes refer to as \axFS[i] (with $i=0,1,$ or $2$, respectively).
These are forget axioms for the first three graphs in
Figure~\ref{fig:anchor:axioms}, and we could have presented them as such.
We cannot express \axFS[3] as it would require an edge of arity five; \axFK intuitively is a weakened version of it, expressible at treewidth three.

We shall see at the end of Section~\ref{sec:completeness} that the
concrete choice of parsings for the forget axioms \axFK, \axFX, and \axFD is irrelevant thanks to \ax{1-7}. %
The interested reader may find concrete equations for them in \citeapp{app:axioms}{C}.
\begin{toappendix}
  \label{app:axioms}
  We give possible choices of concrete terms for the forget axioms in Definition~\ref{def:axioms}.
  To ease notation, we first define the following three derived operations:
  \begin{align*}
    \arraycolsep=20pt
    \begin{array}{ccc}
      u\bullet v\eqdef\tss(u,v,\emptyset)&
      \tc u\eqdef (12)u&
      \star(u,v,w)\eqdef\tss((21)\tl v,(13)\tl w,(23)\tl u) \\[1em]
      \mbox{\hspace{-6mm}\includegraphics{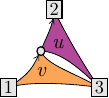}}&
      \mbox{\hspace{-5mm}\raisebox{6mm}{\includegraphics{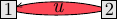}}}&
      \mbox{\hspace{-22mm}\includegraphics{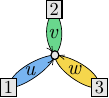}}\\
    \end{array}
  \end{align*}
  Observe that \axFK is just associativity of $\bullet$, i.e., $a\bullet (b\bullet c)\teq (a\bullet b)\bullet c$. We can write the two other axioms as follows, using the edge orientations and labelling given on the right:
  \begin{align*}
    \arraycolsep=4.5pt
    \begin{array}{lcc}
      \text{\axFX:}&
      \begin{gathered}
        \tff\big((23)\tll a \parallel
        (42)\tll b \parallel
        (14)\tl(\star(d,\tc e,c) \parallel \star(f,\tc h,g))\big)\\
        \teq\\
        \tff\big((13)\tll e\parallel(14)\tll h\parallel(24)\tl(
        \star(a,\tc g,\tc c)
        \parallel
        \star(b,\tc f,\tc d))\big)
      \end{gathered}&~
      \raisebox{-6mm}{\includegraphics{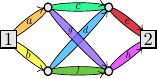}}\\[3em]
      \text{\axFD:}&
      \begin{gathered}
        \tff\big((23)\tll a\parallel(42)\tll b\parallel(14)\tl(
        \tss(
        \star(\tc i,\tc h,c),
        \star(f,\tc g,e),
        \star(d,\tc k,\tc j)
        ))\big)\\
        \teq\\
        \tff\big((13)\tll k\parallel(14)\tll h\parallel(24)\tl(
        \tss(
        \star(g,i,j),
        \star(b,\tc f,\tc d),
        \star(a,\tc c,\tc e)
        ))\big)
      \end{gathered}&~
      \raisebox{-6mm}{\includegraphics{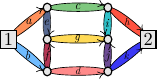}}\\
    \end{array}
  \end{align*}
\end{toappendix}

\begin{proposition}[Soundness]
  If $t\teq u$ then $t\tiso u$.
\end{proposition}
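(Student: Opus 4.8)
The plan is to exploit the fact that $\teq$ is defined as the \emph{least} equational theory containing the listed axioms. By Proposition~\ref{prop:equational}, the relation $\tiso$ (given by $t\tiso u$ iff $\goft(t)\giso\goft(u)$) is itself an equational theory: it is an equivalence relation relating only terms of equal arity, and it is closed under contexts and substitutions. Hence, to prove $\teq\,\subseteq\,\tiso$, it suffices to check that every axiom is \emph{sound}, i.e.\ that its two sides denote isomorphic graphs; closure under contexts and substitutions, together with reflexivity, symmetry and transitivity, then come for free from $\tiso$ being an equational theory. The proof thus reduces to a finite case analysis, one case per axiom.

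For the explicit axioms \ax1 through \ax7 and \axFS, I would simply unfold the interpretation map $\goft$ on both sides and compare the resulting graphs. Each of these amounts to a routine structural identity for the operations of Definition~\ref{def:graph-op}: associativity, commutativity and unitality of $\parallel$ (\ax1); the facts that permutations act as a group and commute with $\parallel$, $\tl$ and $\tf$ in the expected way (\ax{2}, \ax{3}, \ax{5}); distributivity of $\tl$ over $\parallel$ (\ax4); the commutations \ax6 and \axFS between successive lifts and forgets; and the absorption law \ax7 relating $\tf a\parallel b$ with $\tf(a\parallel\tl b)$. In each case one exhibits the obvious vertex/edge bijection witnessing the isomorphism; since these operations are defined directly on the underlying sets of vertices and edges, the verifications are immediate, if slightly tedious, and they hold uniformly at every relevant arity up to four.

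The forget axioms \axFK, \axFX and \axFD require one extra observation, but are conceptually the cleanest. By construction each is a \emph{forget axiom} for a fixed graph $G$ with two forget points $x,y$: an equation $\tf t\teq\tf u$ where $t$ is a parsing of $(G,x)$ and $u$ a parsing of $(G,y)$, so $\goft(t)\giso(G,x)$ and $\goft(u)\giso(G,y)$. Since $(G,x)$ is obtained from $G$ by appending the inner vertex $x$ to the interface, forgetting the last source recovers $G$: $\goft(\tf t)=\tf(\goft(t))\giso\tf(G,x)\giso G$, and symmetrically $\goft(\tf u)\giso\tf(G,y)\giso G$. Both sides therefore denote $G$ up to isomorphism, whence $\tf t\tiso\tf u$. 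This argument is uniform in the chosen parsings, which foreshadows why (as announced) the concrete choice of parsings in these three axioms is irrelevant.

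I do not expect a genuine obstacle: soundness is the easy direction of the main theorem, once the reduction to per-axiom checking is in place. The only mildly delicate point is bookkeeping---making sure the explicit axioms are verified at all the relevant arities (in particular \axFS at the arities where $a$ has arity $2,3,4$, and \ax4 with arguments up to arity three), and that the interface orderings and edge orientations implicit in the figures for \axFK, \axFX and \axFD are matched consistently on the two sides of each forget axiom, so that the appended vertex is genuinely the one being forgotten. None of this affects the structure of the argument: every case closes by exhibiting an explicit isomorphism.
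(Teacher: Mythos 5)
Your proposal is correct and follows essentially the same route as the paper: since $\tiso$ is an equational theory (via Proposition~\ref{prop:equational}) and $\teq$ is the least one containing the axioms, soundness reduces to checking each axiom denotes a pair of isomorphic graphs, which is routine for \ax{1-7} and \axFS and holds by construction for the forget axioms. Your explicit unfolding of why $\tf(G,x)\giso G$ for the forget axioms is just a spelled-out version of what the paper dismisses as ``by definition.''
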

\begin{proof}
  Since $\tiso$ is an equational theory, it suffices to check that the
  two members of each axiom denote the same graph. This is by definition
  for the forget axioms, and a routine verification for the other
  ones.
\end{proof}

The axioms given textually in Definition~\ref{def:axioms} (\ax{1-7}
and \axFS) intuitively correspond to those given for arbitrary graphs
by Courcelle and Engelfriet~\cite[p.117]{DBLP:books/daglib/0030804},
restricted to terms of width at most three. (Modulo some translation
since our syntaxes and sets of operations differ.)  This finite
restriction is however incomplete for graphs of treewidth at most
three: in their completeness proof, Courcelle and Engelfriet need
axioms of the form \axFS[i] for arbitrary large arity $i$, even when
dealing with graphs of small treewidth. Somehow, we prove below that
the forget axioms \axFS[0], \axFS[1], \axFS[2], \axFK, \axFX, and
\axFD suffice at treewidth at most three to fulfil the role played
by this infinite family of axioms. Also observe that while \axFK
easily generalises to deal with an arbitrary treewidth $k$ (by
replacing the source touching the three edges by $k-2$ sources), this
is not the case for \axFX and \axFD.

\section{Outline of the completeness proof}
\label{sec:completeness}

Using the first seven axioms, we easily obtain the following
proposition. %
Together with Lemmas~\ref{lem:glift:decomp} and~\ref{lem:prime:decomp},
it makes it possible to normalise terms and to focus on full prime
ones.

\begin{propositionrep}
  \label{prop:tm:decomp}
  For all terms $t$, letters $a$, and graphs $G,H$, we have:
  \begin{enumerate}
  \item\label{tm:top} if $g(t)\giso \top$ then $t\teq\top$;
  \item\label{tm:at} if $g(t)\giso a$ then $t\teq a$;
  \item\label{tm:p} if $g(t)\giso pG$ then there is a parsing $u$ of
    $G$ such that $t\teq pu$;
  \item\label{tm:l} if $g(t)\giso \tl G$ then there is a parsing $u$
    of $G$ such that $t\teq \tl u$;
  \item\label{tm:par} if $g(t)\giso G\parallel H$ then there are
    parsings $u$ of $G$ and $v$ of $H$ such that $t\teq u\parallel v$.
  \end{enumerate}
\end{propositionrep}
\begin{appendixproof}
  All items but the third one are proven by induction on $t$.
  \begin{enumerate}
  \item Here $t$ cannot be a letter or a forget; in the other cases we
    use \ax{1-4}.
  \item We show that for all permutations $p$, $g(t)\giso pa$ implies
    $t\teq pa$; %
    $t$ cannot be $\top$, a forget or a lift; in the other cases we
    use \ax{1-2} and the previous item. %
    The initial statement follows by choosing the identity permutation.
  \item We take $u=p^{-1}t$ and use \ax2.
  \item We show by induction on $t$ that for all permutations $p$ and
    graphs $G$ such that $g(t)\giso p\tl G$, there is a parsing $u$ of
    $G$ such that $t\teq p\tl u$; the initial statement follows by
    choosing the identity permutation. The term $t$ cannot be a letter;
    if it is $\top$ then we take $u=\top$ and we conclude using
    \ax{3-4}; if it is a parallel composition we use the induction
    hypothesis and \ax{3-4} again. %
    It remains three cases:
    \begin{itemize}
    \item if $t=q t'$ for some permutation $q$, then
      $g(t')\giso q^{-1}p\tl G$. By induction we find a parsing $u$ of
      $G$ such that $t'\teq (q^{-1}\circ p)\tl u$. We deduce
      $t\teq p\tl u$ using \ax2;
    \item if $t=\tf t'$, then $g(t)=\tf(g(t'))\giso p\tl G$ so that,
      adding the source corresponding to the forget operation to $G$,
      there is a graph $G'$ such that $G\giso \tf G'$ and
      $g(t')\giso \dot p r\tl G'$, where $r$ is the permutation that
      swaps the last two elements, and $\dot p$ is the extension of
      $p$ (fixing the last element). %
      By induction we find a parsing $u$ of $G'$ such that
      $t'\teq (\dot p\circ r)\tl u$. %
      Then we have
      \begin{align*}
        t=\tf t'&\teq \tf (\dot p \circ r)\tl u\\
        \tag{by \ax2}
                &\teq \tf \dot p r\tl u\\
        \tag{by \ax5}
                &\teq p\tf r \tl u\\
        \tag{by \ax6}
                &\teq p\tl \tf u
      \end{align*}
    \item if $t=\tl t'$ and $g(t)=\tl(g(t'))\giso p\tl G$ then there
      are two cases:
      \begin{itemize}
      \item if the two lift operations in the latter graph isomorphism
        concern the same source in $G$, then it is the last one and $p$
        must fix the last element, i.e., $p=\dot q$ for some
        permutation $q$ and $g(t')\giso q G$. By the third item we
        find a parsing $u$ of $G$ such that $t'\teq q u$, and we have
        $t=\tl t'\teq \tl q u\teq \dot q \tl u = p\tl u$ using \ax5;
      \item otherwise, the source $s$ corresponding to the lift
        operation in $p\tl G$ must be isolated in $\goft(t')$, and the
        last element is not fixed by $p$. We assume it is mapped to
        the $i$th one. By Lemma~\ref{lem:glift:decomp} restricted to
        the source $s$ only there is a permutation $q$ such that
        $\goft(t')\giso q\tl G'$, the lift corresponds to $s$, and $q$
        maps the $i$th element to the last one. The permutation
        $r^{-1}\circ\dot q^{-1}\circ p$ must fix its last element so
        we assume it is equal to $\dot q'$ for some $q'$. By
        definition we have that $p=\dot{q}\circ r\circ\dot{q'}$. By
        induction and \ax2 we find a parsing $u$ of $G'$ such that $t'\teq q \tl u$.
        Then we have
        \begin{align*}
          t=\tl t'&\teq \tl q \tl u\\
          \tag{by \ax5}
                  &\teq \dot q \tll u \\
          \tag{by \ax6}
                  &\teq \dot qr\tll u \\
          \tag{by \ax2}
                  &\teq\dot qr\tl(q'\circ q'^{-1})\tl u \\
          \tag{by \ax5}
                  &\teq \dot qr\dot q'\tl q'^{-1}\tl u \\
                  &= p\tl q'^{-1}\tl u
        \end{align*}
      \end{itemize}
    \end{itemize}
  \item The cases where $t$ is $\top$, a letter, a lift or a
    permutation are easy. %
    In the case where $t=t_1\parallel t_2$, we split $G$ and $H$
    accordingly: $G\giso G_1\parallel G_2$, and
    $H\giso H_1\parallel H_2$ such that $g(t_1)\giso G_1\parallel H_1$
    and $g(t_2)\giso G_2\parallel H_2$; we use the induction
    hypothesis on $t_1$ and $t_2$ and we conclude using associativity
    and commutativity.

    The forget case is the interesting one. In that case, $t=\tf t'$,
    then $g(t)=\tf g(t')\giso G\parallel H$ and we may assume without
    loss of generality that the forget point appears in $G$:
    $G\giso \tf G'$ for some graph $G'$ such that
    $g(t')\giso G'\parallel \tl H$. By induction hypothesis we obtain
    parsings $u$ of $G'$ and $v'$ of $\tl H$ such that
    $t'\teq u\parallel v'$. By the fourth point we get a parsing $v$
    of $H$ such that $v'\teq \tl v$. We conclude using \ax7: $\tf u$
    and $v$ are parsings of $G$ and $H$ such that
    $t=\tf t'\teq \tf(u\parallel \tl v)\teq \tf u\parallel v$.
    \qedhere
  \end{enumerate}
\end{appendixproof}
\begin{proofsketch}
  All items but the third are proven by induction on $t$. %
  The first three items only need \ax{1-4}; %
  the fourth one needs \ax{5-6}; the last one rests on the fourth one
  and \ax7.
\end{proofsketch}

An item is patently missing from the previous proposition, for the
forget operation. %
In fact, a statement similar to the third and fourth items does not
hold for the forget operation: when $t$ is a parsing of $\tf G$,
nothing guarantees that $G$ has a parsing at all: its treewidth may be
four; we need to restrict to those cases where the forgotten vertex is
a forget point.

Given a term $t$ denoting a graph $G$ with an inner vertex $x$, we say
that $t$ \emph{reaches} $x$ if there is a parsing $t'$ of $(G,x)$ such
that $t\equiv \tf t'$. We can easily prove the following property.
\begin{lemma}
  \label{lem:full_prime_is_forget}
  Every non-atomic full prime term reaches some forget point.
\end{lemma}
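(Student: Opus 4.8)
The plan is to argue by induction on the term $t$, carrying along the hypothesis that $G \eqdef g(t)$ is full prime and non-atomic. Three of the top-level shapes are immediately vacuous under this hypothesis: if $t=\top$ then $G$ is empty, contradicting primality; if $t=a$ is a letter then $G$ is atomic, contradicting non-atomicity; and if $t=\tl t'$ then the last source of $G$ is isolated by construction, contradicting fullness. So only the forget, permutation and parallel cases remain.

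The forget case $t=\tf t'$ is the productive one and requires no real work: the vertex $x$ forgotten by the outermost $\tf$ is an inner vertex of $G$, and re-appending it to the interface reconstructs $g(t')$, so $(G,x)\giso g(t')$. Since $t'$ is a subterm of the width-at-most-three term $t$, Proposition~\ref{prop:tw:terms} ensures that $g(t')$---hence $(G,x)$---has treewidth at most three, so $x$ is a forget point; and $t=\tf t'$ exhibits $t'$ as a parsing of $(G,x)$, witnessing that $t$ reaches $x$.

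In the permutation case $t=pt'$ I would invoke the induction hypothesis on $t'$, whose graph $p^{-1}G$ is still full prime and non-atomic (a permutation only reorders the interface, leaving isolatedness of sources, inner connectivity and atomicity untouched). This yields $t'\teq\tf s'$ with $s'$ a parsing of $(p^{-1}G,x)$ for some forget point $x$. Pushing the permutation through the forget with \ax5 gives $t\teq p\tf s'\teq\tf\dot p s'$, and since $\dot p$ fixes the last source---namely $x$---the term $\dot p s'$ parses $(G,x)$; as treewidth ignores the ordering of the interface, $x$ is still a forget point, so $t$ reaches $x$.

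The parallel case $t=t_1\parallel t_2$ is where the interesting argument lies, and I expect it to be the main obstacle. Here I would use primality of $G$ crucially: an inner path joining an inner element of $g(t_1)$ to one of $g(t_2)$ would have to traverse a shared source in its interior, which inner paths forbid. Hence all inner elements of $G$ sit in a single component, say $g(t_1)$, and the other component $g(t_2)$ has no inner elements and is therefore empty. Then $g(t_2)\giso\top$ forces $t_2\teq\top$ by Proposition~\ref{prop:tm:decomp}, so $t\teq t_1$ by \ax1 while $g(t_1)\giso G$ is again full prime and non-atomic; applying the induction hypothesis to the strict subterm $t_1$ (and using commutativity for the symmetric subcase) concludes.
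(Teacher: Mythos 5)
Your proposal is correct and takes essentially the same approach as the paper: its proof is exactly this structural induction on the term, ruling out the head constructors $\top$, letters and lifts by primality, atomicity and fullness, and pushing through permutations and parallel compositions (via Proposition~\ref{prop:tm:decomp} and the basic axioms) until the head forget operation is found. Your write-up simply makes explicit the case analysis that the paper's one-line proof leaves implicit.
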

\begin{proof}
  By induction on $t$, using Proposition~\ref{prop:tm:decomp} until we
  find a forget operation.
\end{proof}
However, this property is not enough, because it gives no control on
the forget point which is reached; instead, we would like to obtain:
\begin{statement}[Reaching forget points]
  \label{st:fp:parsing}
  Full prime terms reach all their forget points.
\end{statement}
This statement holds, but we do not know how to prove it directly: we
only get it \emph{a posteriori}, from the completeness. We prove a
variant of it in the next section (Lemma~\ref{lem:anchor:parsing}),
for \emph{anchors}.

Instead, the cornerstone of our proof is the following lemma: any two
parsings of a non-atomic full prime graph may reach a common forget
point.
\begin{lemma}[Forget point agreement]
  \label{lem:fp:sync}
  For all parsings $t,u$ of a non-atomic full prime graph, there is a
  forget point reached by both $t$ and $u$.
\end{lemma}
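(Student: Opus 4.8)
The plan is to split on whether the non-atomic full prime graph $G$ underlying $t$ and $u$ possesses an \emph{anchor}. By Proposition~\ref{prop:fp:analysis} we already know $G$ has at least one forget point, and by Lemma~\ref{lem:full_prime_is_forget} each of $t$ and $u$ reaches \emph{some} forget point; the whole difficulty is to make these two choices coincide. I expect two regimes, governed by the connectivity of $G$: a well-behaved one where a single distinguished vertex serves as a universal meeting point, and a hard one requiring a genuine structural analysis.

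First I would treat the case where $G$ has an anchor $z$. Since anchors are the generalised cutvertices around which $G$ cannot be separated, I would establish the anchor-restricted form of Statement~\ref{st:fp:parsing}, namely that every full prime parsing reaches every anchor (Lemma~\ref{lem:anchor:parsing}). Applying this to both $t$ and $u$ shows that the forget point $z$ is reached by both, which settles this case immediately.

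The substantial case is when $G$ is \emph{hard}, i.e. has no anchor. Here I would first prove the structure theorem that every hard graph of treewidth at most three has the shape on the left of Figure~\ref{fig:sep}, exhibiting \emph{separation pairs}, and that every parsing can be rewritten (modulo $\teq$) as a double forget $\tf\tf(\cdot)$ on one such pair (Lemma~\ref{lem:reaching:sep}). Granting this, $t\teq\tf\tf s$ and $u\teq\tf\tf w$ for parsings $s,w$ whose two forgotten vertices form separation pairs $\{x_1,x_2\}$ and $\{y_1,y_2\}$ respectively. Axiom \axFS (commuting the two innermost forgets, $\tf\tf a\teq\tf\tf ra$) then shows that $t$ reaches \emph{both} $x_1$ and $x_2$, and likewise $u$ reaches both $y_1$ and $y_2$. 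Hence it suffices to bring the two separation pairs into contact: if they share a vertex we are done, and otherwise I must relate two disjoint separation pairs across $\teq$.

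This relating step is where I expect the real obstacle, and it is what forces the axioms \axFK, \axFX and \axFD. The tool is a structural analysis of hard graphs of treewidth at most three obtained by excluding the triangle between the sources as a minor (Proposition~\ref{prop:triangle_property}); it constrains how two separation pairs can sit inside $G$ to only a handful of configurations. In the benign ones a third, diagonal separation pair links the two given pairs (second graph of Figure~\ref{fig:sep}), and \axFK lets me pivot from one pair to an adjacent one while preserving the graph, so that the reachability witnesses can be chained. The genuinely new situations reduce, after this analysis, to the cross and domino graphs on the right of Figure~\ref{fig:sep}, whose two disjoint separation pairs are the only ones and correspond to provably distinct parsings: there \axFX and \axFD respectively supply the missing equality, transforming a double forget on one pair into a double forget on the other. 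Chaining all these moves brings $t$ and $u$ to reach a common vertex, the desired common forget point. The bulk of the work — and the part I expect to be most delicate — is proving Proposition~\ref{prop:triangle_property} and establishing the exhaustiveness of the resulting case analysis, i.e. checking that no configuration of disjoint separation pairs escapes \axFK, \axFX and \axFD.
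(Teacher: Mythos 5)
Your proposal is correct and follows essentially the same route as the paper: split on whether the graph has an anchor, settle the anchor case by applying Lemma~\ref{lem:anchor:parsing} to both parsings, and settle hard graphs by reaching separation pairs (Lemma~\ref{lem:reaching:sep}) and then transferring between pairs through the case analysis built on Proposition~\ref{prop:triangle_property} (the paper's Proposition~\ref{prop:two-sep-pairs} and Lemma~\ref{lem:sep:jump}), with the axioms FX and FD covering the cross and domino configurations. The only ingredient you leave implicit is the restriction to \emph{minimal} separation pairs with respect to the well-founded order $\glt$, which the paper needs both to make the configuration analysis exhaustive and to ground the induction that your ``chaining'' step amounts to.
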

The rest of the paper is devoted to proving this lemma, with which we
conclude as follows.
\begin{theorem}[Completeness]
  \label{theo:completeness}
  For all parsings $t,u$ of a given graph, we have $t \teq u$.
\end{theorem}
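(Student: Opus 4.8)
The plan is to argue by well-founded induction on the number of inner vertices of the graph $G$ denoted by the two parsings $t$ and $u$; by the observation that this number equals the number of forget operations in any parsing of $G$, it will strictly decrease at each recursive call.

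The first move is to peel off the non-prime and non-full structure. By Lemma~\ref{lem:prime:decomp} I fix a decomposition $G\giso G_1'\parallel\dots\parallel G_n'$ into prime components, and by Lemma~\ref{lem:glift:decomp} a decomposition $G_i'\giso p_i\tl^{m_i}H_i$ of each into a permutation of lifts of its full prime reduced component $H_i$. Iterating the parallel case of Proposition~\ref{prop:tm:decomp}, then its permutation and lift cases, rewrites both $t$ and $u$ modulo $\teq$ onto this one fixed skeleton: $t\teq\parallel_i p_i\tl^{m_i}t_i$ and $u\teq\parallel_i p_i\tl^{m_i}u_i$ with $t_i,u_i$ parsings of $H_i$. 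Here I use only that these normalisation clauses are existential and apply to any parsing of the relevant graph, so the same skeleton can be imposed on $t$ and on $u$; since $\teq$ is closed under contexts, it now suffices to establish $t_i\teq u_i$ for each full prime component $H_i$.

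Fix such a component $H_i$. If $H_i$ is atomic then $H_i\giso p\,a$ for a permutation $p$ and a letter $a$, and the permutation and letter cases of Proposition~\ref{prop:tm:decomp} give $t_i\teq p\,a\teq u_i$ with no recursion. Otherwise $H_i$ is non-atomic full prime, and this is exactly where the cornerstone Lemma~\ref{lem:fp:sync} is used: it produces a forget point $x$ reached by both parsings, that is, parsings $t_i'$ and $u_i'$ of $(H_i,x)$ with $t_i\teq\tf t_i'$ and $u_i\teq\tf u_i'$. The graph $(H_i,x)$ has one inner vertex fewer than $H_i$, hence fewer than the number of inner vertices of $G$, so the induction hypothesis yields $t_i'\teq u_i'$; applying the forget context to both sides gives $t_i\teq\tf t_i'\teq\tf u_i'\teq u_i$, as required.

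The base case, $G$ with no inner vertex, needs no separate treatment: a full prime graph without inner vertices has a single edge and is therefore atomic, so every $H_i$ falls in the atomic subcase and the induction hypothesis is never invoked. I expect the only delicate points to be bookkeeping: checking that the existential normalisation of Proposition~\ref{prop:tm:decomp} really drives both $t$ and $u$ onto the identical skeleton $\parallel_i p_i\tl^{m_i}(\cdot)$ so that the correct components are paired, and confirming that the measure drops strictly along $(H_i,x)$. All the genuine combinatorial content has been isolated in Lemma~\ref{lem:fp:sync}, which the present argument consumes as a black box.
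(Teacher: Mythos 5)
Your proposal is correct and follows essentially the same route as the paper's own proof: both normalise the two parsings onto a common decomposition skeleton via Proposition~\ref{prop:tm:decomp} (backed by Lemmas~\ref{lem:glift:decomp} and~\ref{lem:prime:decomp}), dispatch atomic components with the letter and permutation items, and otherwise combine Lemma~\ref{lem:fp:sync} with the induction hypothesis at the agreed forget point. The only deviations are organisational and harmless---the paper recurses at every decomposition stage under the lexicographic measure (graph size, then $4$ minus arity), whereas you flatten the normalisation and recurse solely at forget points under the inner-vertex count, which is equally sound---apart from one trivial omission: when the graph under consideration (at top level or in a recursive call) is empty, there are zero prime components, and your skeleton argument must instead invoke item~(\ref{tm:top}) of Proposition~\ref{prop:tm:decomp} to equate both parsings with $\top$, exactly as the paper does in its first bullet.
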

\begin{proof}
  We follow the same pattern as in the proof of the forward
  implication of Proposition~\ref{prop:tw:terms}: we proceed by
  lexicographic induction on the size of the graph followed by $4$
  minus its arity.
  If the graph has isolated sources, we use
  Lemma~\ref{lem:glift:decomp} and
  Proposition~\ref{prop:tm:decomp}(\ref{tm:p},\ref{tm:l}) to rewrite
  the parsings into permuted lifts and proceed recursively. %
  Otherwise it is full, and we decompose it into primes via
  Lemma~\ref{lem:prime:decomp}:
  \begin{itemize}
  \item if there are no components then both parsings are equivalent to
    $\top$ by Proposition~\ref{prop:tm:decomp}\eqref{tm:top};
  \item if there are at least two components then we use
    Proposition~\ref{prop:tm:decomp}\eqref{tm:par} on both parsings
    and we proceed recursively.
  \item otherwise the graph must be (full) prime. Either it is atomic and
    we conclude by
    Proposition~\ref{prop:tm:decomp}(\ref{tm:at},\ref{tm:p}), or
    Lemma~\ref{lem:fp:sync} gives us two terms $t',u'$ such that
    $t \teq \tf(t')$, $t' \tiso u'$, and $\tf(u') \teq u$, and we can
    conclude since $t' \teq u'$ follows by induction hypothesis.
    \qedhere
  \end{itemize}
\end{proof}

We have only used Axioms \ax{1-7}
up to this point. We prove below
that those axioms are complete when there are few forget points.  Say
that a graph is \emph{easy} if each of its non-atomic reduced
components $G$ has only one forget point $x$, and in turn $(G,x)$ is
easy. (This definition is well-founded by the same lexicographic
ordering as we used in the above proof.)
\begin{proposition}
  \label{lem:easy}
  For all parsings $t,u$ of an easy graph, we have $t \teq u$.
\end{proposition}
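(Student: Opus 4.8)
The plan is to replay the proof of Theorem~\ref{theo:completeness} essentially verbatim, by lexicographic induction on the size of the graph followed by $4$ minus its arity, with the single difference that the appeal to Lemma~\ref{lem:fp:sync} is replaced by an argument resting only on \ax{1-7}. This is the whole point of the statement: the decomposition steps of that proof (isolated sources, zero components, at least two components) use only Lemmas~\ref{lem:glift:decomp} and~\ref{lem:prime:decomp} together with Proposition~\ref{prop:tm:decomp}, all of which are justified by \ax{1-7} alone. The \emph{only} place where the forget axioms and Lemma~\ref{lem:fp:sync} entered the general proof was the non-atomic full prime case, so that is the one case I need to handle differently here.

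Before treating that case, I would check that easiness propagates along the recursion, so that the induction hypothesis always applies. Removing isolated sources (Lemma~\ref{lem:glift:decomp}) does not change the reduced components, and the reduced components of a parallel composition of primes are precisely the reduced versions of its factors; hence both decompositions send an easy graph to easy graphs (the full graph, respectively each prime factor) while strictly decreasing the size. Moreover, if $G$ is a non-atomic reduced component of an easy graph with forget point $x$, then $(G,x)$ is easy by the very definition of easy, with the same size but larger arity. Thus in every recursive call the argument is again easy and strictly smaller in the lexicographic order, matching the well-foundedness already noted for the definition of easy.

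For the crucial non-atomic full prime case, let $t,u$ be two parsings of such an (easy) graph $G$. By Lemma~\ref{lem:full_prime_is_forget}, each of $t$ and $u$ reaches some forget point; since $G$ is easy it has a \emph{unique} forget point $x$, so both $t$ and $u$ reach $x$. This yields parsings $t',u'$ of $(G,x)$ with $t\teq\tf t'$ and $u\teq\tf u'$. As $(G,x)$ is easy and strictly smaller in the ordering, the induction hypothesis gives $t'\teq u'$, and closing the equational theory under the forget context yields $t\teq\tf t'\teq\tf u'\teq u$, as required.

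The main thing to get right is not a calculation but the bookkeeping: one must verify that the definition of easy is preserved by both decompositions and by the passage from $G$ to $(G,x)$, and that each recursive call is strictly smaller. Once this is in place, the uniqueness of the forget point lets Lemma~\ref{lem:full_prime_is_forget} do all the work that Lemma~\ref{lem:fp:sync} did in the general proof, so no forget axiom is ever invoked. I therefore expect the only delicate point to be confirming that each prime factor (possibly carrying isolated sources) and each graph $(G,x)$ really is easy in the formal sense, which follows directly by unfolding the recursive definition.
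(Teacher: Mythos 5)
Your proposal is correct and follows essentially the same route as the paper: the paper's proof likewise replays the completeness proof of Theorem~\ref{theo:completeness}, substituting Lemma~\ref{lem:full_prime_is_forget} for Lemma~\ref{lem:fp:sync} in the non-atomic full prime case, where uniqueness of the forget point forces both parsings to reach it. Your extra bookkeeping about easiness propagating through the decompositions is exactly the (implicit) content behind the paper's remark that the full prime graphs encountered are reduced components of easy graphs.
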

\begin{proof}
  By adapting the previous completeness proof. %
  Lemma~\ref{lem:full_prime_is_forget} may replace
  Lemma~\ref{lem:fp:sync} when we attain non-atomic full prime graphs:
  those are reduced components of the starting graph and since those
  have only one forget point, all their parsings reach that one.
\end{proof}
None of the graphs in Figures~\ref{fig:anchor:axioms}
and~\ref{fig:hard:axioms} are easy: they are all full prime, each with
two forget points $x,y$ (plus two symmetrical ones for \axFX and
\axFD). %
Nevertheless, adding either $x$ or $y$ to their interface makes them
easy (cf. \iflong Lemma~\ref{lem:axioms:easy}\else\cite[Lemma~D.1]{dhp:icalp24:tw3:hal}\fi). %
This is why the precise choice of parsings does not matter when we use
them as forget axioms in Definition~\ref{def:axioms}.
\begin{toappendix}
  \begin{lemma}
    \label{lem:axioms:easy}
    For all graphs $G$ with forget points $x,y$ in
    Figures~\ref{fig:anchor:axioms} and~\ref{fig:hard:axioms}, both
    $(G,x)$ and $(G,y)$ are easy.
  \end{lemma}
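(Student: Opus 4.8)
The statement ranges over the six explicit graphs \axFS[0], \axFS[1], \axFS[2], \axFK, \axFX and \axFD of Figures~\ref{fig:anchor:axioms} and~\ref{fig:hard:axioms}, so the lemma is a finite verification. To show that $(G,z)$ is easy for a forget point $z\in\{x,y\}$, the plan is to unfold the recursive definition of easiness: decompose $(G,z)$ into prime components (Lemma~\ref{lem:prime:decomp}), discard isolated sources to obtain its reduced components (as in Lemma~\ref{lem:glift:decomp}), and then check that every \emph{non-atomic} reduced component has a single forget point whose promotion yields an again-easy graph. Well-foundedness of this recursion follows from the lexicographic measure (size, then four minus the arity) used in the completeness proofs: passing to a reduced component does not increase the size, and promoting a forget point keeps the size but strictly increases the arity.

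Two observations will keep the bookkeeping short. First, a full prime graph with no inner vertex is necessarily atomic: two distinct edges could only be joined by an inner path through a shared neighbour, and that neighbour, being interior to the path, would have to be an inner vertex. Hence every non-atomic reduced component contains at least one inner vertex. Second, by Proposition~\ref{prop:fp:atomic} a full prime graph of arity four is atomic, so as soon as successive promotions raise the arity of a component to four the recursion stops. Together with the symmetry of each axiom graph---the involution exchanging $x$ and $y$, and the additional symmetric pair of forget points for \axFX and \axFD---these remarks let me treat a single representative forget point per graph and bound the recursion depth.

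For the anchor graphs \axFS[0], \axFS[1], \axFS[2] and \axFK, the argument is immediate: each has exactly two inner vertices $x,y$, so $(G,x)$ retains the single inner vertex $y$. By the first observation there is at most one non-atomic reduced component, and it contains $y$; I would check directly from the pictures that $y$ is its unique forget point and that, once $y$ is promoted, every reduced component is atomic, so the resulting graph is vacuously easy. The kite \axFK, being associativity of $\bullet$, needs only one extra but entirely routine unfolding.

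The main obstacle is the pair of hard graphs \axFX and \axFD, which carry four inner vertices (the two forget points $x,y$ and their symmetric copies). After promoting $x$, two inner vertices survive, so the recursive check acquires an extra layer. Here I would read off the reduced components of $(G,x)$ explicitly from Figure~\ref{fig:hard:axioms}, verify that the surviving forget point in the non-atomic component is forced---promoting any other remaining inner vertex produces a $K_5$ minor and hence treewidth four---and then close each next-level component using the arity-four atomicity of the second observation. This final case is purely mechanical but tedious, and is the part where the companion software is most useful as a cross-check.
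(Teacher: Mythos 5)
Your overall strategy is the same as the paper's: reduce to $(G,x)$ by symmetry, unfold the recursive definition of easiness, use Proposition~\ref{prop:fp:atomic} to show that other inner vertices cannot be forget points, and terminate with graphs having no inner vertices left, which are vacuously easy. Your two preliminary observations are correct, and your treatment of \axFS[i] and \axFK matches the paper, which disposes of all four anchor graphs at once by noting that $(G,x)$ has a single inner vertex left (incidentally, \axFK needs no ``extra unfolding'': like the \axFS[i], it has exactly two inner vertices).

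The gap is in the hard-graph half, which is the actual content of the lemma: your description does not match the graphs, so the finite check you outline has not really been performed. First, \axFD (the domino) has six inner vertices, not four --- count the forget operations in its defining term: two outer ones, one for the $\tss$, and one for each of the three $\star$'s. Second, even for \axFX, promoting $x$ leaves three inner vertices ($x'$ below $x$, plus $y$ and its symmetric copy), not two; the verification runs: $(G,x)$ is full prime with unique forget point $x'$, and $(G,x,x')$ splits into two edges and two components with one inner vertex each. Third, and most importantly, for \axFD the recursion is deeper than your plan allows: $(G,x,x')$ still has a reduced component containing four inner vertices, whose unique forget point (the middle inner vertex on the right) must itself be identified among several candidates, and only after that extra promotion does one reach components with a single inner vertex each. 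Finally, your closing step is misattributed: the terminal components are easy because no inner vertices remain (your first observation), not by arity-four atomicity --- Proposition~\ref{prop:fp:atomic} is instead the tool for excluding the other candidate forget points along the way. Your method would succeed if executed on the actual graphs, but as written the proposal verifies a different graph than \axFD, and that verification is precisely what the lemma asserts.
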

  \begin{proof}
    By symmetry, it suffices to prove the statement for $(G,x)$.
    \begin{itemize}
    \item This is trivial for \axFS[i] and \axFK: for each of them, $(G,x)$ as only one inner vertex left.
    \item For \axFX, observe that $(\text\axFX,x)$ is full prime and its only forget point is the vertex depicted below $x$ (call it $x'$); then $(\text\axFX,x,x')$ has four reduced components: two edges and two graphs which are easy since they have only one inner vertex.
    \item For \axFD, $(\text\axFD,x)$ is full prime and its only forget point is the bottom-most vertex below $x$ (call it $x'$); then $(\text\axFD,x,x')$ has three reduced components: two edges with the source on the left, and the remainder of the graph. Again, there is a single forget point: the middle inner vertex on the right. From there we reach three reduced components, each with only one inner vertex.
    \end{itemize}
    (In those arguments, in order to see that an inner vertex cannot be a forget point, it often suffices to show that it would give a full prime graph of arity four which is not atomic.)
  \end{proof}
\end{toappendix}

\section{Series decompositions}
\label{sec:series}

As explained in the introduction, there is a \emph{series} operation
on graphs which plays an important role. We slightly generalise it
here, and we show how to use it to analyse graphs with a forget point.

Given $k$ graphs $G_1,\dots,G_k$ of arity $k$ and a graph $H$ of arity
$k+1$, we define the following graph of arity $k+1$, where $p_i$
denotes the permutation which swaps $i$ and $k+1$.
\begin{align*}
  \ts(G_1,\dots,G_k;H) \eqdef p_1\tl G_1 \parallel \dots \parallel p_k \tl G_k  \parallel H
\end{align*}
As is explicit from its definition, this operation on graphs is also a
derived operation on terms. %
We illustrate its behaviour at each arity in Figure~\ref{fig:series}.
We recover the operation from the introduction when the last argument
is empty, and using a forget operation: we have
$\tss(u,v,w)=\tf\ts(u,v,w;\top)$.
\begin{figure}[t]
  \centering
  \includegraphics{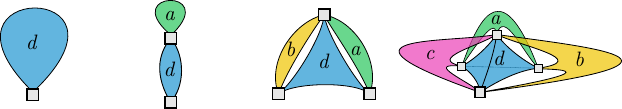}
  \caption{The series operations $\ts(;d)$, $\ts(a;d)$, $\ts(a,b;d)$, and $\ts(a,b,c;d)$.}
  \label{fig:series}
\end{figure}

We use this operation in order to decompose full prime graphs along a
given inner vertex.
\begin{proposition}[Series decomposition]
  \label{prop:sdeccomp}
  For all full prime graphs $G$ of arity $k$ with an inner vertex $x$,
  there are graphs $G_1,\dots,G_k,H$ such that
  $(G,x)\giso \ts(G_1,\dots,G_k;H)$ and
  \begin{enumerate}
  \item all components of $H$ are full, and
  \item if the $j$th source of a component of $G_i$ is isolated, then
    $i<j$.
  \end{enumerate}
  Such a decomposition is unique up to isomorphism.
\end{proposition}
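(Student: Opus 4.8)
The plan is to read the decomposition off the \emph{prime decomposition} of $(G,x)$ and then verify existence, the two conditions, and uniqueness in turn. First I would apply Lemma~\ref{lem:prime:decomp} to $(G,x)$, whose interface is $(s_1,\dots,s_k,x)$ with $s_1,\dots,s_k$ the sources of $G$, obtaining prime components $P_1,\dots,P_m$ (unique up to reindexing and isomorphism), each carrying this common interface with some sources possibly isolated. The crux of the whole statement --- and the step I expect to be the main obstacle --- is that \emph{every $P_\ell$ touches $x$}. This is exactly where full primeness of $G$ is used: given any inner element $c$ of $(G,x)$, full primeness provides a simple inner path $\pi$ from $c$ to $x$ in $G$; writing $e$ for the edge of $\pi$ incident to $x$, the portion $\pi'$ of $\pi$ running from $c$ to $e$ has all its internal elements inner in $G$ (hence distinct from the $s_j$) and, by simplicity, never meets $x$, so $\pi'$ is an inner path of $(G,x)$. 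Thus $c$ and $e$ lie in the same component of $(G,x)$, and since $e$ is incident to $x$ that component touches $x$. As every prime component contains an inner element, all of them touch $x$.

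Next I would classify the components. If $P_\ell$ touches all of $s_1,\dots,s_k$ then, touching $x$ as well, it is full; I collect these into $H$, so condition~1 holds by construction. Otherwise I set $i_\ell\eqdef\min\{j : P_\ell \text{ does not touch } s_j\}$ and assign $P_\ell$ to index $i_\ell$. Writing $Q_i$ for the parallel composition of the components assigned to $i$ (and $Q_i\eqdef\top$ if there are none), the source $s_i$ is isolated in $Q_i$; hence, by Lemma~\ref{lem:glift:decomp} restricted to the single source $s_i$, there is a unique graph $G_i$ of arity $k$ with $Q_i\giso p_i\tl G_i$ (the fresh vertex created by $\tl$ and moved to position $i$ by $p_i$ plays the role of the isolated $s_i$, while the old position~$i$, now at $k{+}1$, becomes $x$). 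Taking the parallel composition over all $i$ together with $H$ recovers $P_1\parallel\dots\parallel P_m\giso(G,x)$, which is precisely $\ts(G_1,\dots,G_k;H)$, establishing existence.

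Finally, for condition~2 I note that in $G_i$ the $i$-th source is $x$ and the $j$-th source ($j\neq i$) is $s_j$; since every component touches $x$ the $i$-th source is never isolated, and if the $j$-th is isolated then the corresponding $P_\ell$ misses $s_j$, so $i_\ell=i\leq j$ and therefore $i<j$ by minimality of $i_\ell$. For uniqueness, any decomposition meeting the constraints gives, by the uniqueness part of Lemma~\ref{lem:prime:decomp} applied to $\ts(G_1,\dots,G_k;H)$, the same multiset of prime components as $(G,x)$, distributed among $H$ and the $p_i\tl G_i$; condition~1 forces the full components into $H$, and a component placed in $G_i$ must miss $s_i$ while (by condition~2) touching every $s_j$ with $j<i$, so $i$ is its least missed source and the placement is forced. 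This pins down $H$ and each $G_i$ up to isomorphism.
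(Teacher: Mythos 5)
Your proof is correct and follows essentially the same route as the paper's: decompose $(G,x)$ into prime components via Lemma~\ref{lem:prime:decomp}, use full primeness of $G$ to see that $x$ is isolated in none of them, send full components to $H$, and assign each remaining component to the index of its least isolated source. The only difference is one of detail: you spell out the path-truncation argument for why every component touches $x$ and the uniqueness argument (forced placement of each component), both of which the paper leaves implicit.
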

We call $G_i$ the \emph{$i$th series argument}, and $H$ the
\emph{series factor} of such a decomposition (at $x$). %
\begin{proof}
  We decompose $(G,x)$ into prime components, which we classify
  according to their isolated sources.  Since $G$ is full prime, $x$
  is never isolated. %
  Full components go into the series factor. %
  Components where the first source is isolated go into the first
  series argument (under the permutation $p_1$, the first source gets
  swapped with $x$). %
  Components where the second source is isolated but not the first one
  go into the second series argument. \emph{Et caetera}.
\end{proof}

Also note that we can follow such decompositions at the term level,
modulo \ax{1-7}:
\begin{proposition}
  \label{prop:tm:sdeccomp}
  If a term $t$ reaches the last source of a graph of the form $\ts(G_1,\dots,G_k;H)$,
  then there are parsings $u_1,\dots,u_k,v$ of $G_1,\dots,G_k,H$ such
  that $t\teq\tf\ts(u_1,\dots,u_k;v)$.
\end{proposition}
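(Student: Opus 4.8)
The plan is to reduce this to the decomposition Proposition~\ref{prop:tm:decomp}, using only \ax{1-7}. Write $x$ for the last source, so that the hypothesis ``$t$ reaches $x$'' unfolds, by definition of reaching, into a parsing $t'$ of the graph $\ts(G_1,\dots,G_k;H)$ together with an equation $t\teq\tf t'$. It therefore suffices to produce parsings $u_1,\dots,u_k,v$ of $G_1,\dots,G_k,H$ with $t'\teq\ts(u_1,\dots,u_k;v)$: since $\teq$ is an equational theory, hence closed under the forget context, this equation may be pushed under a $\tf$ to give $t\teq\tf t'\teq\tf\ts(u_1,\dots,u_k;v)$, as required.

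To obtain such parsings, recall that $\ts(G_1,\dots,G_k;H)$ is by definition the $(k+1)$-ary parallel composition $p_1\tl G_1\parallel\dots\parallel p_k\tl G_k\parallel H$, every factor having arity $k+1$. I would peel the factors off one at a time with Proposition~\ref{prop:tm:decomp}\eqref{tm:par}: applied to $t'$ it yields a parsing $w_1$ of $p_1\tl G_1$ and a parsing of the remaining $k$-ary composition, to which the same item applies again, and so on until a parsing $v$ of $H$ is isolated. The associativity and commutativity of parallel composition (\ax1), being part of $\teq$, justify this regrouping. For each series argument, $w_i$ is a parsing of the permuted lift $p_i\tl G_i$, so Proposition~\ref{prop:tm:decomp}\eqref{tm:p} supplies a parsing $w_i'$ of $\tl G_i$ with $w_i\teq p_i w_i'$, and Proposition~\ref{prop:tm:decomp}\eqref{tm:l} supplies a parsing $u_i$ of $G_i$ with $w_i'\teq\tl u_i$; hence $w_i\teq p_i\tl u_i$. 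Reassembling the factors gives $t'\teq p_1\tl u_1\parallel\dots\parallel p_k\tl u_k\parallel v$, which is exactly $\ts(u_1,\dots,u_k;v)$.

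The argument is thus pure bookkeeping layered on top of Proposition~\ref{prop:tm:decomp}, and I do not expect any genuine obstacle. The only points requiring mild care are the iteration of the binary parallel-decomposition item into the $(k+1)$-ary composition and the correct stripping of each swap $p_i$ and lift, both routine modulo \ax1. It is worth stressing that this clean behaviour is specific to the parallel, permutation, and lift operations: there is deliberately no item of Proposition~\ref{prop:tm:decomp} for the forget operation, which is precisely why reaching a \emph{prescribed} inner vertex is delicate and ultimately rests on the substantive Lemma~\ref{lem:fp:sync} rather than on such elementary manipulations.
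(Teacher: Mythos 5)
Your proposal is correct and takes essentially the same route as the paper, whose entire proof is the single line ``Consequence of Proposition~\ref{prop:tm:decomp}'': your argument---unfolding the definition of reaching, peeling off the parallel factors with item~\eqref{tm:par}, stripping each permutation and lift with items~\eqref{tm:p} and~\eqref{tm:l}, and pushing the resulting equation under the forget context---is precisely the intended unfolding of that consequence.
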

\begin{proof}
  Consequence of Proposition~\ref{prop:tm:decomp}.
\end{proof}

\section{Anchors}
\label{sec:anchors}

In this section we define the concept of \emph{anchors}---inner
vertices with specific properties, and we prove the following variant
of Statement~\ref{st:fp:parsing}:
\begin{lemma}[Reaching anchors]
  \label{lem:anchor:parsing}
  Full prime terms reach all their anchors.
\end{lemma}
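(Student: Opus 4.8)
The plan is to prove the statement by the same lexicographic induction used in the completeness proof, on the size of the graph followed by four minus its arity, and to reduce the whole problem to a single swap of forget operations. Let $t$ be a full prime term denoting $G$, and let $x$ be an anchor of $G$. If $t$ is atomic it has no inner vertices, so the claim is vacuous; assume $t$ is non-atomic. By Lemma~\ref{lem:full_prime_is_forget}, $t$ already reaches \emph{some} forget point $z$, so $t\teq\tf t'$ for a parsing $t'$ of $(G,z)$. If $z=x$ we are done, so the only difficulty is to move the reached vertex from $z$ onto the prescribed anchor $x$.

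The observation driving the argument is that, in the graph $(G,z)$, the anchor $x$ of $G$ is still an inner vertex, and the augmented graph is strictly smaller in our ordering: it has the same elements as $G$ but larger arity, hence a smaller value of ``four minus arity''. I would therefore decompose $(G,z)$ via Proposition~\ref{prop:tm:decomp}, isolate the reduced, full prime component $C$ containing $x$, and argue that $x$ remains an anchor of $C$; this is where the cutvertex-like defining property of anchors is used, since marking an extra vertex as a source cannot destroy the separation witnessed by a cutvertex. The induction hypothesis applied to the sub-parsing of $C$ then yields that $t'$ reaches $x$ as well, i.e.\ $t'\teq\tf t''$ for a parsing $t''$ of $(G,z,x)$ whose two last sources are $z$ then $x$.

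It remains to commute the two outermost forgets. We now have $t\teq\tf\tf t''$, and applying \axFS gives $\tf\tf t''\teq\tf\tf\,r\,t''$ where $r$ swaps the last two sources; reading the result from the inside out, $\tf\,r\,t''$ is a parsing of $(G,x)$ and the outer forget removes exactly $x$, so $t$ reaches $x$ as desired. This swap is legitimate via \axFS only when the intermediate graph $(G,z,x)$ has arity at most four, i.e.\ when $G$ has arity at most two. For a non-atomic full prime $G$ of arity three (the maximum possible, by Proposition~\ref{prop:fp:atomic}) the intermediate would have arity five, beyond our syntax; this is precisely the configuration handled by the weakened axiom \axFK, and here the anchor structure is essential: because $x$ is a cutvertex, the two forgets do not genuinely require all five interface vertices at once, and the separation lets us realise the swap through the bounded pattern of \axFK rather than the unavailable \axFS[3].

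The main obstacle is exactly this arity-three case: one must show that the combinatorial separation guaranteed by the anchor property always matches one of the finitely many local patterns captured by \axFS and \axFK, so that the passage from ``$z$ outermost'' to ``$x$ outermost'' is justified by a fixed axiom rather than by an arity-growing family. Supporting this will require analysing how the series decomposition of $G$ at $x$ (Proposition~\ref{prop:sdeccomp}, with its series factor and arguments controlled by Proposition~\ref{prop:tm:sdeccomp}) interacts with the reached vertex $z$, together with careful tracking of the permutations that realign interfaces during the swap. This bookkeeping, rather than any single clever step, is where the weight of the proof lies.
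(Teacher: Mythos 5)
Your opening frame (reach \emph{some} forget point $z$ via Lemma~\ref{lem:full_prime_is_forget}, then move from $z$ to the prescribed anchor $x$) matches the paper's, but the mechanism you propose for the move has a genuine gap: the claim that $x$ ``remains an anchor'' of the component $C$ of $(G,z)$ containing it is false in general, and worse, $x$ need not even be a \emph{forget point} of $(G,z)$, so the inductive step you rely on cannot be carried out at all. Concretely, take $G$ of arity two with sources $s_1,s_2$, inner vertices $x,a,b,z$, and binary edges joining $a$ to each of $s_1,s_2,x$ and $b$ to each of $s_1,s_2,x,z$. Then $x$ is an anchor of $G$ (of the third kind: $(G,x)$ has the two full prime components containing $a$ and $b$ respectively), $z$ is a forget point, and $t=\tf t'$ with $t'$ a parsing of $(G,z)$ is a legitimate outcome of Lemma~\ref{lem:full_prime_is_forget}. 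But in the skeleton of $(G,z,x)$ the vertex $b$ is adjacent to all four interface vertices $s_1,s_2,z,x$, which themselves form a clique, so $(G,z,x)$ contains $K_5$ and has treewidth four. Hence $(G,z,x)$ has no parsing, no parsing of $(G,z)$ can reach $x$, and no application of the induction hypothesis, \axFS, or \axFK can ever produce $t\teq\tf\tf t''$ with $t''$ a parsing of $(G,z,x)$. Accordingly $x$ is indeed not an anchor of $C=(G,z)$: exactly one component of $(G,z,x)$ is full prime, because components that were full prime in $(G,x)$ stop being full once they must also touch $z$ --- which is exactly what your informal justification (``marking an extra vertex as a source cannot destroy the separation'') overlooks.

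The correct proof must therefore be able to \emph{undo} the forget of $z$ and re-route through other vertices, and this is what the paper does. At arity three it shows that the reached forget point is itself an anchor (Proposition~\ref{prop:arity_3_forget_points_are_central}) and that two distinct anchors of a full prime graph of arity three are both checkpoints between the same pair of sources (Proposition~\ref{prop:central_central_central}); it is this chain shape, not a generic ``separation'' argument, that legitimises the \axFK step after recursing into the piece containing $x$. At lower arities it proves a strengthened statement about terms of the form $\tf^i t$, where the decisive distinction is whether $x$ lies in a series argument associated with a genuine source or with a forgotten one (Lemma~\ref{lem:anchor_point_technical_lemma}); in the latter case --- exactly the situation of the counterexample above --- the anchor property does not transfer, and the proof instead uses \axFS to exchange the forgotten source for the newly reached forget point (here, $b$), strictly decreasing the size of the component containing $x$ before recursing. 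Your proposal has no counterpart for this re-routing step, and that step, rather than permutation bookkeeping, is where the real weight of the proof lies.
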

This lemma implies Lemma~\ref{lem:fp:sync} when the considered graph
has an anchor, by applying it to both parsings. Therefore, once
Lemma~\ref{lem:anchor:parsing} proved, it will only remain to prove
Lemma~\ref{lem:fp:sync} for anchor-free graphs
(Section~\ref{sec:hard}).

The definition of anchor is the following; we only consider them in
full prime graphs.
\begin{definition}
  \label{def:anchor_points}
  An \emph{anchor} in a (full prime) graph $G$ is an inner vertex $x$
  such that either:
  \begin{enumerate}
  \item there are no full prime components in $(G,x)$, or
  \item there is an edge whose neighbours comprise at least $x$ and all sources of $G$, or
  \item there are two or more full prime components in $(G,x)$.
  \end{enumerate}
\end{definition}
Using series decompositions, $x$ is an anchor whenever the series
factor at $x$ is empty, or has only one edge, or has at least two
components.

This last condition generalises the usual notion of
cutvertex~\cite[Section~1.4]{DBLP:books/daglib/0030488}: such anchors
at arity zero when there are only binary edges are exactly cutvertices
as usual. Removing them disconnects the graph. %
The second branch in our definition may only arise with sources and/or
hyperedges; it is convenient to have it for the present proof, but it
would probably be more natural to disallow it in other contexts. %
Thanks to this alternative, all inner vertices in the graphs in
Figure~\ref{fig:anchor:axioms} are anchors. In contrast, the graphs in
Figure~\ref{fig:hard:axioms} have no anchors.

We prove below that all forget points are anchors at arity three. %
The converse holds at all arities, but we only get it \emph{a
  posteriori}, as a consequence of Lemma~\ref{lem:anchor:parsing}.
\begin{proposition}
  \label{prop:arity_3_forget_points_are_central}
  At arity three, all forget points are anchors.
\end{proposition}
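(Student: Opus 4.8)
The plan is to go directly through the definition of anchor, examining the full prime components of $(G,x)$. Since $x$ is a forget point, $(G,x)$ has treewidth at most three; it has arity four, being obtained from the arity-three graph $G$ by promoting the inner vertex $x$ to a source. First I would decompose $(G,x)$ into its prime components using Lemma~\ref{lem:prime:decomp}. All of these share the interface of $(G,x)$ and hence have arity four, and the whole argument is then a case analysis on how many of them are full.

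If there are no full prime components, the first anchor condition holds; if there are at least two, the third one holds. So the only remaining case is that of a unique full prime component $K$. Here I claim $K$ must be atomic, which immediately yields the second anchor condition: an atomic graph of arity four has a single inner edge whose neighbours comprise all four sources of $(G,x)$, that is, $x$ together with the three sources of $G$, which is exactly what that condition asks for.

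To prove $K$ atomic, I first bound its treewidth. As a component, $K$ is a subgraph of $(G,x)$ obtained by deleting the inner elements of the other components; its skeleton is therefore a subgraph of the skeleton of $(G,x)$, so the treewidth of $K$ is at most that of $(G,x)$, which is at most three. Now $K$ is a full prime graph of treewidth at most three and arity four, and Proposition~\ref{prop:fp:atomic} instantiated at $k=3$ forces such a graph to be atomic. This closes the last case, so $x$ is an anchor.

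The only delicate points I would spell out carefully are the bookkeeping that every prime component of $(G,x)$ indeed has arity four, and that treewidth does not increase when passing to the subgraph $K$ (recalling that treewidth is computed on skeletons, where deleting edges and inner vertices only removes cliques and vertices). Everything else is a routine case distinction, so I expect no serious obstacle: the real content is concentrated entirely in Proposition~\ref{prop:fp:atomic}. Equivalently, one could phrase the argument through the series decomposition at $x$ (Proposition~\ref{prop:sdeccomp}), where the series factor collects precisely the full prime components of $(G,x)$ and the three anchor conditions correspond to that factor being empty, a single edge, or having at least two components; again Proposition~\ref{prop:fp:atomic} is exactly what rules out a single non-atomic component.
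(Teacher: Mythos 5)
Your proposal is correct and follows essentially the same route as the paper: apply Proposition~\ref{prop:fp:atomic} to the full prime components of $(G,x)$, which have arity four and treewidth at most three, and then read off an anchor condition. The only cosmetic difference is that you split into three cases (zero, one, or several full prime components) where the paper observes that atomicity makes the first or second anchor condition suffice; your extra bookkeeping about component arities and subgraph treewidth is sound but routine.
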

\begin{proof}
  If $x$ is a forget point of a graph $G$ of arity three, the full
  prime components of $(G,x)$ must be atomic by
  Proposition~\ref{prop:fp:atomic}. %
  Thus $x$ is an anchor, by either the first or the second condition
  in Definition~\ref{def:anchor_points}.
\end{proof}
We also find anchors easily at arities zero and one.
\begin{proposition}
  \label{prop:anchors013}
  Non-atomic full prime graphs of arity in $\set{0,1,3}$ have some
  anchor.
\end{proposition}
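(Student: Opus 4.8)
*Non-atomic full prime graphs of arity in $\set{0,1,3}$ have some anchor.*

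Let me think about how to prove this proposition.

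We need to show that a non-atomic full prime graph $G$ of treewidth at most three and arity in $\set{0,1,3}$ always has an anchor. The cases are naturally split by arity.

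**Arity 3.** This is the easiest case. Since $G$ is non-atomic and full prime, by Proposition~\ref{prop:fp:analysis} it has a forget point $x$. By Proposition~\ref{prop:arity_3_forget_points_are_central}, at arity three every forget point is an anchor. So $x$ itself is an anchor, and we are done immediately.

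**Arity 1.** Here $G$ has a single source $s$. Again $G$ has a forget point $x$ by Proposition~\ref{prop:fp:analysis}. I would argue that at arity one, any forget point is an anchor. Consider the series decomposition of $G$ at $x$ (Proposition~\ref{prop:sdeccomp}): $(G,x) \giso \ts(G_1; H)$, with $H$ of arity $2$ and $G_1$ of arity $1$, the series factor $H$ having only full components. The key is to count: since $(G,x)$ has treewidth at most three and arity two, and $x$ is a genuine forget point, the full prime components of the series factor $H$ should be atomic or absent. I expect the correct argument is that either $H$ is empty (condition 1), or has a single edge (condition 2), or has at least two components (condition 3), in each case yielding that $x$ is an anchor. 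Checking this exhaustively, using that $H$ has arity two so its full prime components are binary/ternary edges between the two sources plus possibly $x$, should close the case.

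**Arity 0.** This is the main obstacle. $G$ has no sources, so every vertex is inner, and full prime just means connected (by inner paths through everything). We want to find an anchor: an inner vertex $x$ such that the series factor of $(G,x)$ is empty, is a single edge, or splits into at least two components. Since at arity zero with only binary edges anchors coincide with cutvertices, the natural strategy is: if $G$ has a cutvertex, take it; otherwise $G$ is $2$-connected, and I would use the treewidth-at-most-three bound to force the existence of an anchor of the second kind (an edge whose neighbours comprise $x$ plus all sources — vacuously just containing $x$, so really any vertex lying on a ``small'' hyperedge) or a vertex realizing condition~1. Concretely, I would take a forget point $x$ (which exists by Proposition~\ref{prop:fp:analysis}), perform the series decomposition at $x$, and argue that because $G$ has treewidth $\leq 3$, the series factor at a suitably chosen inner vertex cannot have a single non-atomic full prime component; a single such component would, together with the rest of $G$, build a $K_5$ minor by contracting along inner paths (mirroring the contraction argument in the proof of Proposition~\ref{prop:fp:atomic}), contradicting treewidth three. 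This forces condition~1, 2, or 3 to hold. The delicate part is handling the case where $(G,x)$ has exactly one full prime component that is itself non-atomic: here I expect we must either re-choose $x$ deeper inside that component, invoking well-foundedness on graph size, or directly extract the $K_5$ minor; identifying which vertex to pick and verifying the minor carefully is where the real work lies.

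The overall plan is thus to reduce everything to the series-factor characterization of anchors stated just after Definition~\ref{def:anchor_points}, and to use the $K_5$-minor obstruction from treewidth three to rule out the only bad case (a lone non-atomic full prime series factor). The main difficulty is the arity-zero case, specifically pinning down a vertex whose series factor is forced to be empty, atomic, or disconnected.
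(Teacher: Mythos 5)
Your arity-three case is exactly the paper's argument and is fine. The rest has a genuine gap, and the root cause is that you overlooked how the second condition of Definition~\ref{def:anchor_points} degenerates at low arities: an ``edge whose neighbours comprise at least $x$ and all sources'' needs to contain no source at all when the arity is zero, and only the unique source when the arity is one. Hence at arity zero \emph{every} vertex lying on an edge is an anchor of the second kind (and if the graph has no edge, it is a single isolated vertex, which is an anchor of the first kind), while at arity one every neighbour of the unique source is an anchor of the second kind (such a neighbour exists since the graph is full). That two-line observation is the paper's entire proof for arities $0$ and $1$; no treewidth bound, forget point, series decomposition, or connectivity analysis is needed.

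Your substitute arguments do not close these cases. At arity one, your plan rests on the claim that every forget point is an anchor, together with the trichotomy ``series factor empty / a single edge / at least two components''; both fail, because Proposition~\ref{prop:fp:atomic} forces atomicity only at arity $k+1=4$, not at arity two, so the series factor may consist of a \emph{single non-atomic} full prime component. Concretely, take $G$ with source $s$, inner vertices $a,b,x$, and binary edges $sa$, $ab$, $bs$, $ax$, $bx$ (the skeleton is $K_4$ minus the edge between $s$ and $x$). Then $x$ is a forget point, since $(G,x)$ has treewidth $3$, yet $(G,x)$ has exactly one full prime component, that component is not a single edge, and no edge contains both $x$ and $s$: none of the three conditions holds, so $x$ is not an anchor (the anchors here are $a$ and $b$, the neighbours of $s$). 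At arity zero you explicitly leave the key case open (``where the real work lies''), proposing cutvertices and a $K_5$-minor extraction that the statement does not require; as written this is not a proof.
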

\begin{proof}
  Every inner vertex is an anchor at arity zero, either because it is isolated (anchor of the first kind), or because it belongs to some edge (anchor of the second kind). %
  Every immediate neighbour of the only source is an anchor at arity one (of the second kind). %
  At arity three we use
  Propositions~\ref{prop:arity_3_forget_points_are_central}
  and~\ref{prop:fp:analysis}.
\end{proof}

We first prove Lemma~\ref{lem:anchor:parsing} for graphs of arity
three, and then we show how to reduce the other cases to that one
(Section~\ref{ssec:anchors:lower}).

\subsection{Parsing on anchors at arity three}
\label{ssec:anchors:three}

A \emph{checkpoint} between two vertices $y,z$ of a graph $G$ is an
inner vertex $x$ such that every inner path from $y$ to $z$ goes
through $x$.  When $y,z$ are not mentioned explicitly, they are
implicitly assumed to be sources. (This definition coincides with that
from~\cite{DBLP:conf/mfcs/Cosme-LlopezP17} at arity two.)
Analysing the components of $(G,x)$ by the sources they touch, we see
that the checkpoints of a graph $G$ of arity three are the inner
vertices $x$ for which $G$ has the following shape:
\[
  \includegraphics{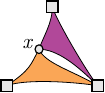}
\]
This characterisation shows that checkpoints are anchors of the first
kind in Definition~\ref{def:anchor_points}.
Another important property at arity three is that if there are two
anchors, then they must be checkpoints for the same sources:
\begin{toappendix}
  In the rest of the appendices we use some usual notations for
  paths. We call \emph{simple} a path without vertex repetition. Every
  (inner) path can be reduced to a simple (inner) path with the same
  endpoints.
  \begin{convention}
    In the rest of the appendices, we only work with simple paths,
    except if stated otherwise.
  \end{convention}
  Given a path $P$ and $x$ a vertex in $P$, we write $xP$ for the
  suffix of $P$ starting at $x$, and $Px$ prefix of $P$ stopping at
  $x$.  Given two inner disjoint paths $P$ and $Q$ such that $P$ ends
  at $Q$'s first vertex, we note $PQ$ for their concatenation. We
  write $\overline{P}$ for the reversal of a path $P$.  We extend
  these notations whenever they make sense on simple paths. For
  example we may write $xQy\overline{P}zR$. Given subgraphs $H$ and
  $K$ of a graph $G$, we call \emph{$H$-$K$ path} a path from an inner
  vertex of $H$ to one of $K$ that meets $H$ and $K$ only at its
  endpoints.  Most often $H$ will be just a vertex and $K$ will be a
  path.

  See \cite{DBLP:books/daglib/0030488} for a reference; there
  paths are called \emph{walks}, and simple paths just \emph{paths}.
  \begin{lemma}
    \label{lem:anchor:series_arg_checkpoint}
    Let $G$ be a full prime graph with two inner vertices $x,y$.
    If $x$ is in the $i$th series argument at $y$, then $y$ is a
    checkpoint between $x$ and the $i$th source in $G$.
  \end{lemma}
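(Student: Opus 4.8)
The plan is to argue by contradiction, transporting a path from $G$ to $(G,y)$. Write $s_i$ for the $i$th source of $G$, and recall that forming $(G,y)$ simply appends $y$ as a new last source, so the sources of $(G,y)$ are $s_1,\dots,s_k$ together with $y$. By the construction underlying Proposition~\ref{prop:sdeccomp}, the statement that $x$ lies in the $i$th series argument $G_i$ means that the prime component $C$ of $(G,y)$ containing $x$ is one whose $i$th source $s_i$ is isolated (its earlier sources being non-isolated). The only consequence I will exploit is that \emph{$s_i$ is isolated in $C$}. Note also that, since $G$ is full prime, $x$ is not isolated and hence genuinely sits in an inner-path-connectivity class of $(G,y)$; and $y$ is an inner vertex by hypothesis, as required of a checkpoint.

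First I would assume, towards a contradiction, that $y$ is not a checkpoint between $x$ and $s_i$ in $G$, i.e.\ that some inner path $P$ of $G$ runs from $x$ to $s_i$ while avoiding $y$. The key observation is that $P$ is then already an inner path of $(G,y)$: every interior element of $P$ is inner in $G$, hence not among $s_1,\dots,s_k$, and by assumption distinct from $y$, so no interior element of $P$ is a source of $(G,y)$; only the endpoint $s_i$ is. Since $x$ is inner and $s_i$ is a source, $x\neq s_i$, so $P$ contains at least one edge; let $e$ be its last edge, incident to $s_i$.

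Then I would delete the final vertex $s_i$ from $P$, leaving an inner path of $(G,y)$ that connects the two inner elements $x$ and $e$. By the definition of prime components as inner-path-connectivity classes (Lemma~\ref{lem:prime:decomp}), $x$ and $e$ therefore lie in the same component, namely $C$. But $e$ is incident to $s_i$, so $s_i$ belongs to the neighbourhood of an edge of $C$ and is \emph{not} isolated in $C$, contradicting the fact recorded above. Hence no such $P$ exists: every inner path from $x$ to $s_i$ in $G$ passes through $y$, which is precisely the assertion that $y$ is a checkpoint between $x$ and $s_i$.

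I do not expect a serious obstacle here; the argument is a single transport of a path across the reinterpretation of $y$ as a source. The only points demanding care are the bookkeeping of which vertices count as sources in $G$ versus in $(G,y)$, and checking that reading ``$s_i$ is isolated in the component of $x$'' off the series decomposition is legitimate — for which it suffices to unfold how components are sorted into series arguments in the proof of Proposition~\ref{prop:sdeccomp}.
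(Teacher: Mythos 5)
Your proof is correct and takes essentially the same route as the paper's: both exploit that the prime component of $x$ in $(G,y)$ has the $i$th source isolated, so an inner path of $G$ from $x$ to $s_i$ avoiding $y$ would stay an inner path of $(G,y)$, confined to that component, which cannot reach $s_i$. The paper states this directly (``as $P$ is inner, it can only get out from the $i$th series argument at $y$''), whereas you phrase it as a contradiction with the truncation and source-bookkeeping made explicit; the underlying mechanism is identical.
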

  \begin{proof}
    By definition the $i$th series argument of a series decomposition
    of $G$ does not contain its $i$th source. Let $P$ be an inner path
    from $x$ to the $i$th source of $G$. It must start in the $i$th
    series argument at $x$ and end outside at the $i$th source, as was
    just observed. As $P$ is inner, it can only get out from the $i$th
    series argument at $y$.
  \end{proof}
  \begin{lemma}
    \label{lem:anchor:two_checkpoints}
    Let $G$ be a graph with two inner vertices $x$ and $y$ and two designated
    sources $s_x$ and $s_y$.
    If $x$ is a checkpoint between $y$ and $s_y$ and $y$ a checkpoint
    between $x$ and $s_x$, then $x$ and $y$ are both checkpoints between
    $s_x$ and $s_y$.
  \end{lemma}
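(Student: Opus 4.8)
The plan is to read \emph{``$v$ is a checkpoint between $a$ and $b$''} as \emph{``$v$ separates $a$ from $b$ for inner-path connectivity''}, and to exploit the symmetry of the hypotheses under the simultaneous renaming $x\leftrightarrow y$, $s_x\leftrightarrow s_y$: this renaming exchanges the two assumptions and turns the claim that $y$ is a checkpoint between $s_x$ and $s_y$ into the claim that $x$ is one. Hence it suffices to prove a single statement, say that every inner path from $s_x$ to $s_y$ passes through $y$, the other half following by relabelling. I work in the standing setting where $G$ is full prime, so that every pair of elements is joined by an inner path; if no inner path from $s_x$ to $s_y$ exists the lemma holds vacuously, so I may fix such a path $P$ and argue by contradiction, assuming $y\notin P$.

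First I would produce auxiliary inner paths avoiding $y$. Using full-primeness I pick an inner path from $y$ to $s_y$; by hypothesis it meets $x$, and its suffix $R$ from $x$ to $s_y$ is then an inner path avoiding $y$ (the path being simple, $y$ occurs only before $x$). Symmetrically, from an inner path $x\to s_x$, which meets $y$ by the other hypothesis, I extract an inner path $Q$ from $y$ to $s_x$ avoiding $x$. Two elementary surgeries then handle the generic situation. If $x$ lies on $P$, the prefix $Px$ is an inner path $s_x\to x$ avoiding $y$, and reversing it contradicts the assumption that $y$ is a checkpoint between $x$ and $s_x$; so I may assume $P$ avoids both $x$ and $y$. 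Letting $w$ be the first vertex of $R$ (read from $x$) lying on $P$, if $w$ is an \emph{inner} vertex I splice the prefix of $R$ up to $w$ with the portion of $P$ from $w$ back to $s_x$, obtaining an inner path $x\to s_x$ avoiding $y$, again contradicting the second hypothesis. The symmetric splice of $Q$ with $P$ produces, when the meeting vertex is inner, an inner path $y\to s_y$ avoiding $x$, contradicting the first hypothesis.

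The main obstacle is the degenerate case where these auxiliary paths meet $P$ only at a source: $R$ meeting $P$ only at $s_y$ and $Q$ meeting $P$ only at $s_x$. There the splices would route through a source and are no longer inner, so the surgery breaks down. This is exactly where full-primeness must be used essentially (and where the lemma genuinely fails without it, as one sees by attaching a direct $s_x$--$s_y$ edge disconnected from $x,y$): since every edge and inner vertex of $P$ is joined to $x$ by an inner path, $P$ must carry some inner vertex, or an edge with an inner neighbour, that is inner-reachable from $x$. Taking a shortest inner path from $x$ to the interior of $P$ and examining its last vertex $w$ on $P$ --- which is inner by minimality, since a source cannot occur internally to an inner path --- reinstates an inner meeting point; and depending on whether this connecting path uses $y$, one recovers either the $x\to s_x$ contradiction with the second hypothesis or, via its suffix from $y$, the $y\to s_y$ contradiction with the first. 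I expect this source-avoidance bookkeeping, rather than any deep idea, to be the fiddly heart of the argument.
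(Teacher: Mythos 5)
Your proof is correct and takes essentially the same route as the paper's: both exploit full primeness to obtain a shortest inner connecting path from one of $x,y$ to the source-to-source path $P$, then splice it with the two portions of $P$ and play the two checkpoint hypotheses against each other (the paper argues directly via a shortest $y$-$P$ path and a minimality contradiction, while you argue by contradiction from $x$'s side). The only remark is that your auxiliary paths $R$ and $Q$ and their splices are redundant: your final ``degenerate case'' argument, which uses nothing beyond the fact that $P$ avoids both $x$ and $y$, already covers all cases.
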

  \begin{proof}
    By symmetry, it suffices to show that $x$ is a checkpoint between $s_x$ and $s_y$.
    Let $P$ be an inner path from $s_x$ to $s_y$.
    Consider a shortest $y$-$P$ inner path $Q$ in $G$.
    Such a path exists as $G$ is full prime and $y$ is thus inner connected to any inner element of $P$.
    Say $Q$ is from $y$ to $z$ for some $z$ in $P$.
    The inner path $QzP$ goes from $y$ to $s_y$, so that it must contain $x$ by assumption.
    If $x$ belongs to $zP$ then we are done. Otherwise $x$ belongs to $Qz$.
    In that case, the inner path $xQz\overline{P}$ goes from $x$ to $s_x$ and must contain $y$ by assumption, which contradicts the minimality assumption about $Q$.
  \end{proof}
  \begin{lemma}
    \label{lem:sfa}
    In a full prime graph of arity three, the series factor at each anchor cannot contain inner vertices.
  \end{lemma}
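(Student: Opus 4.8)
The plan is to use the series decomposition of $(G,x)$ at the anchor $x$ given by Proposition~\ref{prop:sdeccomp}. By construction, the series factor $H$ is the parallel composition of the \emph{full} prime components of $(G,x)$, each of arity four with interface $\{s_1,s_2,s_3,x\}$. Since sources are never inner vertices and $H$ is built only from these components, it suffices to prove that every full prime component of $(G,x)$ is atomic, i.e.\ has no inner vertex: then $H$ reduces to a parallel composition of four-ary edges on the sources, and has no inner vertex.

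I would proceed by contradiction, assuming that some full prime component $K$ has an inner vertex $z$. First I would extract a second full prime component $K'\neq K$ from the anchor hypothesis (Definition~\ref{def:anchor_points}). As $K$ is a full prime component, the first alternative is excluded; so either the third alternative applies and directly produces a second full prime component, or the second one applies and provides an edge incident to $x$ and all three sources. Such an edge forms an atomic prime component of its own (any inner path leaving it immediately reaches a source), hence a full prime component distinct from the non-atomic $K$. In both cases we obtain the desired $K'$.

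The heart of the argument, and the step I expect to be the most delicate, is to exhibit $K_5$ as a minor of the skeleton of $G$, contradicting the fact that treewidth-$3$ graphs have no $K_5$ minor (as already used in Proposition~\ref{prop:fp:atomic}). I would take as branch sets the three singletons $\{s_1\},\{s_2\},\{s_3\}$, a set $Z$ consisting of $z$ together with inner paths in $K$ from $z$ to $s_1,s_2,s_3$ and to $x$ (all endpoints deleted), and a set $X$ consisting of $x$ together with inner paths in $K'$ from $x$ to $s_1,s_2,s_3$ (endpoints deleted). These paths exist because $K$ and $K'$ are full prime, so all their elements are joined by inner paths. The three sources are pairwise adjacent through the interface clique, $Z$ is adjacent to each $s_i$ and to $x$ through the last edges of the corresponding paths in $K$, and $X$ is adjacent to each $s_i$ through the paths in $K'$; hence all ten pairs are adjacent. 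The key point to verify is disjointness: inner paths never traverse sources, so $Z$ and $X$ avoid $\{s_1,s_2,s_3\}$ and $Z$ avoids $x$; and since $K,K'$ are distinct prime components they share no inner vertex, whence $Z$, which consists of inner vertices of $K$, is disjoint from $X\subseteq\{x\}\cup(\text{inner vertices of }K')$. This produces the required $K_5$ minor, and the desired contradiction. The only remaining case, when $(G,x)$ has no full prime component at all, is immediate since then $H$ is empty.
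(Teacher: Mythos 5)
Your proof follows essentially the same route as the paper's: both derive a contradiction by producing $K_5$ as a minor of the skeleton, using one full prime component of $(G,x)$ to link $x$ to the three sources, and the component containing the offending inner vertex to link that vertex to the sources and to $x$. Your branch-set construction is the paper's contraction argument in different notation, and your disjointness and adjacency checks are correct.

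The weak point is your reduction of the second anchor condition to the two-component case. Definition~\ref{def:anchor_points} only promises an edge whose neighbours comprise \emph{at least} $x$ and the three sources, whereas your parenthetical justification (``any inner path leaving it immediately reaches a source'') tacitly assumes that these four vertices are \emph{all} of its neighbours. A priori the edge could have a fifth neighbour, an inner vertex of $(G,x)$, possibly lying in $K$ itself; in that case the edge belongs to $K$, $(G,x)$ may have no second full prime component, and the branch set $X$ cannot be formed. What saves the step is the standing restriction to treewidth at most three, under which edges have arity at most four (as observed in Section~\ref{sec:graphs}), so the edge's neighbours are exactly $x,s_1,s_2,s_3$ and it is indeed an atomic component of its own. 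This is precisely the situation the paper's proof devotes its first case to: when the series factor has a single component, the anchor edge would need a fifth, inner neighbour in order to be inner-connected to the inner vertex of that component, forcing arity at least five, a contradiction. Once you make the arity bound explicit, your argument is complete; as written, this case is a gap.
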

  \begin{proof}
    We use the fact that graphs of treewidth at most three cannot have
    the clique $K_5$ over five vertices as a minor.
    Let $x$ be an anchor in a full prime graph $G$ of arity three and
    assume by contradiction that the series factor at $x$ has a
    component $C$ with some inner vertex $y$. There are two cases:
    \begin{itemize}
     \item $C$ is the only component of the series factor. Since $x$ is
      an anchor, $C$ must have an edge $e$ containing $x$ and the
      three sources. As $C$ is full prime and contains $y$, $e$ must contain
      another inner vertex of $C$ (possibly $y$): its arity is at least five, a
      contradiction.
    \item The series factor has (at least) one other component $D$.
      Since components of series factors are full prime, all inner
      vertices of $D$ are inner connected to all sources of $G$ and
      $x$, and similarly for the inner vertices of $C$.  Contract all
      inner vertices of $D$ onto $x$, and all inner vertices of $C$
      onto $y$.  This yields a minor with edges between $x$ and each
      and every source, similarly for $y$, and an edge between $x$ and
      $y$.  Since we can always assume a clique between the sources,
      we have obtained $K_5$ as a minor of $G$, a contradiction.
      \qedhere
    \end{itemize}
  \end{proof}
\end{toappendix}
\begin{propositionrep}
  \label{prop:central_central_central}
  At arity three, every full prime graph with two distinct anchors
  $x,y$ has the following shape:
  \[
    \includegraphics{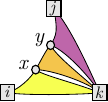}
  \]
\end{propositionrep}
\begin{proof}
  By Lemma~\ref{lem:sfa}, $x$ (resp. $y$) cannot be in the series
  factor of $G$ at $y$ (resp. $x$): it must be in a series argument of
  the series decomposition on $y$ (resp. $x$), say the $i$th
  (resp. $j$th) one.
  By Lemma~\ref{lem:anchor:series_arg_checkpoint} this makes
  $x$ (resp. $y$) a checkpoint between $y$ (resp. $x$) and the
  $i$th (resp. $j$th) source of $G$.
  We conclude with Lemma~\ref{lem:anchor:two_checkpoints}.
\end{proof}

\begin{proof}[Proof of Lemma~\ref{lem:anchor:parsing} at arity three]
  We proceed by induction on the size of the graph.
   
  Let $t$ be a parsing of a full prime graph $G$ of arity three, with
  an anchor $x$. %
  By Lemma~\ref{lem:full_prime_is_forget}, $t$ reaches some forget
  point $y$. %
  If $x=y$ then we are done. %
  Otherwise, $y$ is a second anchor by
  Proposition~\ref{prop:arity_3_forget_points_are_central}, and $G$
  has the shape given by
  Proposition~\ref{prop:central_central_central}.  Call $A$ the graph
  between $i$,$k$,$y$, and $B$ the one between $j,k,y$.  Since $t$
  reaches $y$ we have a parsing $t'$ of $(G,y)$ such that
  $t\teq\tf t'$. %
  By analysing the full prime decomposition of $(G,y)$, we can put
  $t'$ under the form $(j4)\tl u\parallel (i4)\tl v$, where $u$ is a
  parsing of $A$ and $v$ a parsing of $B$. %
  Then we observe that $x$ is a checkpoint in $A$, and thus an anchor,
  so that $u$ reaches $x$ by induction hypothesis. Therefore,
  $u\equiv \tf u'$ for some parsing $u'$ of $(A,x)$. %
  By analysing the full prime decomposition of $(A,x)$ as before, we
  deduce that $t$ reaches $x$ using \axFK.
\end{proof}

\subsection{Parsing on anchors at lower arities}
\label{ssec:anchors:lower}

We now finish the proof of Lemma~\ref{lem:anchor:parsing} by showing
how to reduce the other cases to that of arity three. We use for that
the two following lemmas. The first one intuitively makes it possible
to zoom on a specific inner vertex by going under a forget
operation. The second one states that under some conditions, an anchor
in $\tf^i G$ is also an anchor in $G$, allowing to use the case of
arity three.
\begin{lemmarep}
  \label{lem:forget-vers-x}
  Let $t$ be a term of arity up to two with an inner vertex $x$ in a
  full prime component. There is a term $u$ such that $t\teq\tf u$ and
  $x$ is either the last source of $u$ or an inner vertex of a full
  prime component of $u$.
\end{lemmarep}
\begin{proof}
  We do a proof by induction on the size of the component $C$ of $t$
  containing $x$. First we use Proposition~\ref{prop:tm:decomp} to
  find a parsing $t_x$ of $C$ and a term $t'$ such that
  $t\teq t_x\parallel t'$. By hypothesis $t_x$ is full prime. Hence we
  can use Lemma~\ref{lem:full_prime_is_forget} to find a term $u$ such
  that $t_x\teq\tf u$.

  If $u$ is a parsing of $(\goft t_x,x)$, or if $x$ is in a full prime
  component of $u$ then we conclude using \ax7.

  Otherwise $x$ is in a component $D$ of $u$ that is smaller in size
  than $C$ but not full prime. We use Proposition~\ref{prop:tm:decomp}
  to find terms $u_x$ and $u'$ and a permutation $p$ such that
  $u\teq p\tl^iu_x\parallel u'$ where $u_x$ is a parsing of the
  reduced component associated to $D$.

  As $u_x$ is full prime and contains $x$ we can use the induction
  hypothesis: there exists a term $v$ such that $u_x\teq\tf v$
  and either $v$'s last source is $x$ or $x$ is in a full prime
  component of $v$. Let $y$ be the last source of $v$; in subsequent
  term equations we write $\f_y$ for the forget operation
  that forgets $y$.

  At this point we have:
  \begin{align*}
    t&\teq\tf(p\tl^i\tf_y v\parallel u')\parallel t'\\
    \tag{by \ax6}
     &\teq\tf(p\tf_y(r\tl)^iv\parallel u')\parallel t'\\
    \tag{by \ax5}
     &\teq\tf(\tf_y\dot p(r\tl)^iv\parallel u')\parallel t'\\
    \tag{by \ax7}
     &\teq\tf\tf_y((\dot p(r\tl)^iv\parallel\tl u')\parallel\tl t')\\
    \tag{by \axFS}
     &\teq\tf_y~\tf r((\dot{p}(r\tl)^iv\parallel \tl
          u')\parallel\tl t')
  \end{align*}
  where $\dot p$ is the extension of $p$ fixing the last element
  and $r$ is the permutation swapping the last two elements (at various arities, here).

  If $x$ is $v$'s last source, then $x=y$ and the derivation above
  proves that there exists a term $w$ such that $t\teq\tf w$ and $w$'s last
  source is $x$. Otherwise $x$ is in a full prime component of $v$ and
  to end the proof we show that it is also in a full prime component
  of $w\eqdef\tf r((\dot{p}(r\tl)^iv\parallel \tl u')\parallel\tl t')$, i.e.
  that $x$ is inner connected to all sources in $w$.
  
  Call $z$ the source forgotten by the forget operation in $w$, and
  consider the shape of $(\goft t,y)$ given by the definition of $w$:
  \begin{center}
    \includegraphics{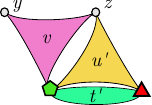}
  \end{center}
  In this picture, the red triangle is used to represent all sources of $t$ that are not
  sources of $v$ (those corresponding to the operation $(r\tl)^i$), and the green
  pentagon is used to represent the common sources of $t$ and $v$.

  As $x$ is in a full prime component of $v$, we easily get that $x$ is inner
  connected to $y$, $z$, and to all sources represented by the green pentagon.

  Moreover, $z$ is a checkpoint between $x$ and all sources represented by the red triangle.
  As $x$ is in a full prime component of $t$, there must be inner paths from $z$ to any
  such source in $t'$, and hence in $w$ too. By transitivity, $x$ is
  also inner connected to all sources represented by the red triangle.
\end{proof}

\begin{lemmarep}
  \label{lem:anchor_point_technical_lemma}
  For $i>0$ let $\tf^i G$ be a full prime graph of arity $k$ and $x,y$
  two distinct inner vertices in a full prime component $C$ of
  $G$. If $x$ is an anchor of $\tf^i G$ and is in one of the first
  $k$ series arguments of the series decomposition of $C$ at $y$, then
  $x$ is a checkpoint in $C$.
\end{lemmarep}
(Note that the only isolated sources of the components of the $i$ last
series arguments in the decomposition of $C$ as above, are the
$i$ forgotten sources in $\tf^iG$.)

Before we finish the proof of Lemma~\ref{lem:anchor:parsing}, also
observe that \axFS implies $\tf^jpu\teq \tf^jqu$ for all terms $u$
of arity $k$ and permutations $p,q$ agreeing on the first $k-j$
elements.

\begin{proof}
  We reason by case analysis on the conditions that may apply to make
  $x$ an anchor in $\tf^iG$. We note $x_1,\dots,x_i$ the sources of
  $G$ forgotten by the operations $\tf^i$ in the latter graph.

  Suppose that $x$ is in the $j$th series argument of $C$ at $y$. By
  Lemma~\ref{lem:anchor:series_arg_checkpoint}, $y$ is a checkpoint
  between $x$ and the $j$th source $s$ in $C$ and furthermore in $G$.

  In particular no edge can contain both $x$ and $s$. As $j\le k$, $s$
  is also a source of $\tf^i G$ and $x$ cannot be an anchor of the
  second kind in Definition~\ref{def:anchor_points}.

  If $x$ is an anchor of the first type in
  Definition~\ref{def:anchor_points} then the series factor of
  $\tf^i G$ at $x$ must be empty and inner vertices of $\tf^iG$ must
  be in the associated series arguments. This is the case of all
  $x_i$s. Assume, for example, that $x_1$ is in the $n$th series
  argument. By Lemma~\ref{lem:anchor:series_arg_checkpoint} $x$ is a
  checkpoint between $x_1$ and the $j$th source in $\tf^iG$. This
  obviously remains true in $C$.

  We can now assume that $x$ is an anchor of the third kind in
  $\tf^iG$: there are at least two full prime components in the series
  factor of $\tf^iG$ at $x$. At least one of these components does not
  contain $y$. Being full prime, this component provides us with an
  inner path $P$ from $x$ to the $j$th source $s$ in $\tf^iG$ that
  does not contain $y$. It cannot be an inner path in $G$ as in $G$,
  $y$ is a checkpoint between $x$ and $s$. As $G$ and $\tf^iG$ differ
  only on their sources, this means that $P$ must contain some $x_k$.

  Vertices $x_k$ and $y$ being in different components in the series
  decomposition of $\tf^iG$ at $x$, $x$ is a checkpoint between $x_k$
  to $y$ in $\tf^iG$, and it remains so in $G$ and hence in $C$ as all
  inner paths of $C$ are inner paths of both $G$ and $\tf^iG$.

  We have shown that in $C$, $x$ is a checkpoint between the source
  $x_k$ and $y$, and $y$ is a checkpoint between $x$ and the $j$th
  source. We conclude using Lemma~\ref{lem:anchor:two_checkpoints}.
\end{proof}

\begin{proof}[Proof of Lemma~\ref{lem:anchor:parsing}, at all arities]
  We prove the following generalisation, by induction on the
  lexicographic product of $|C|$ and $3-k$:
  \begin{quote}
    For all terms $t$ of arity $k\leq 3$, for all $i\leq k$, if $\tf^it$ is full prime with an anchor
     $x$ in a full prime component $C$ of $t$, then $\tf^it$ reaches $x$.
  \end{quote}
  If $k<3$ then we use Lemma~\ref{lem:forget-vers-x} to obtain $u$
  such that $\tf^it\teq\tf^{i+1}u$; if $x$ is the last source of $u$
  then we are done, otherwise we conclude by induction, since
  $x$ lies in a component of $C$.

  We now assume $k=3$, and we rewrite $t$ as
  $t_C\parallel t'$, where $t_C$ is the full prime component
  containing $x$. By Lemma~\ref{lem:full_prime_is_forget}, $t_C$
  reaches some forget point $y$.  Take a series decomposition
  $t_C\teq\tf\ts(t_1,t_2,t_3;u)$ of $t_C$ at $y$. The series factor
  $u$ cannot contain $x$ since all its components are atomic by
  Proposition~\ref{prop:fp:atomic}. Thus $x$ must be in a series
  argument.

  If $x$ is in one of the first $3-i$ series arguments, then by
  Lemma~\ref{lem:anchor_point_technical_lemma} $x$ must be a
  checkpoint, and thus an anchor of $t_C$. Using
  Lemma~\ref{lem:anchor:parsing} at arity three
  (Section~\ref{ssec:anchors:three}), we obtain that $t_C$ reaches
  $x$. So does $t$ by \ax7, and finally $\tf^it$ by \axFS.

  \smallskip\noindent
  \begin{minipage}{.82\linewidth}
    \hspace{1.5em}Otherwise, $x$ is in one of the last $i$ series arguments, say the
    $j$th one. The $j$th source $s_j$ is the only source that is necessarily
    isolated in this argument. It is a source that is forgotten by the
    operations $f^i$ (see the adjacent figure for an illustration). Our
    goal is to ``swap'' this $j$th source for $y$ in order to decrease
    the size of $C$. Indeed the new component containing $x$ will then
    be contained in $t_3$, which is smaller in size than $C$.
  \end{minipage}
  \begin{minipage}{.17\linewidth}
    \begin{flushright}
      \includegraphics{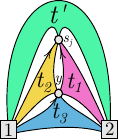}
    \end{flushright}
  \end{minipage}

  \medskip\noindent%
  Let $S$ be the set of non-isolated sources in the component $C'$ of
  $s(t_1,t_2,t_3;u)$ containing $x$.
  Using \axFS we get a term $t''$ such that $\tf^it\teq\tf^{|S|}t''$
  and $t''$ denotes $\goft(t)$ with the inner vertices in $S$ upgraded
  as sources.  By definition of $t''$, $C'$ is a full prime component
  of it. Furthermore, up to isolated sources, $C'$ is the same
  component as the one of $s(t_1,t_2,t_3;u)$ containing $x$. This
  gives $|C'|<|C|$ and we conclude by induction on $\tf^{|S|}t''$.
\end{proof}
At this point, we have used axioms \ax{1-7}, \axFS, and \axFK, but not \axFX and \axFD.

\section{Separation pairs}
\label{sec:hard}

We call \emph{hard} the non-atomic anchor-free full prime graphs.
Examples of such graphs were given in Figure~\ref{fig:sep}.
Hard graphs have arity two by Propositions~\ref{prop:anchors013} and~\ref{prop:fp:atomic}, and it only remains to prove Lemma~\ref{lem:fp:sync} for these graphs.

A \emph{forget pair} in a graph $G$ is a pair $(x,y)$ of inner
vertices such that $(G,x,y)$ has treewidth at most three. %
A parsing $t$ of a graph $G$ \emph{reaches a pair} $(x,y)$ of inner
vertices if there is some parsing $t'$ of $(G,x,y)$ such that
$t\teq \tff t'$. %
By definition, a term may only reach forget pairs. %
By \axFS, a term reaches $(x,y)$ iff it reaches $(y,x)$; it follows
that every term reaching $(x,y)$ reaches both $x$ and $y$.

Given two inner vertices $x,y$ of a full prime graph of arity two, we write
$y\diamond x$ when the graph has the shape on the left of
Figure~\ref{fig:order:sep}, and $y\glt x$ when the graph has the shape
on the right (equivalently, when $y$ is a checkpoint between $x$ and
some source). In the first case, we say that $(x,y)$ is a \emph{separation pair}.
\begin{figure}
  \centering
  \includegraphics{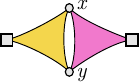} \hspace{3cm} \includegraphics{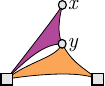}
  \caption{Separation pairs  $(y\diamond x)$, and partial order on vertices $(y\glt x)$.}
  \label{fig:order:sep}
\end{figure}
We prove a few properties about such inner vertices.
\begin{lemma}
  \label{lem:glt:reaches}
  If $y\glt x$ or $y \diamond x$ in a full prime graph $G$, then $y$ is an
  anchor in $(G,x)$, and every parsing of $G$ reaching $x$ also reaches $(x,y)$.
\end{lemma}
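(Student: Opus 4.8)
The plan is to reduce the statement to the anchor machinery of Lemma~\ref{lem:anchor:parsing}. The key point is that in both configurations $y$ is an anchor of $(G,x)$, so that reaching $x$ can be refined into reaching $(x,y)$ by one further application of that lemma. Throughout, write $1,2$ for the two sources of $G$; note that $(G,x)$ is full, since $1$ and $2$ are non-isolated ($G$ being full) and $x$ is non-isolated ($G$ being prime with inner vertex $x$).

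First I show that $(G,x)$ is full prime with $y$ as an anchor, by analysing its series factor at $y$, that is, the full components of $(G,x,y)$ (the prime components touching all of $1,2,x,y$). In the $\glt$ case, $y$ is a checkpoint between $x$ and a source, say $1$. If some full component had an inner vertex $w$, then, being prime, it would inner-connect $w$ to an edge incident to $x$ and to an edge incident to $1$; splicing these gives an inner path from $x$ to $1$ avoiding $y$, contradicting the checkpoint property. Hence every full component is a single edge, so the series factor is empty, or a single edge, or has at least two components, and $y$ is an anchor. In the $\diamond$ case, $(x,y)$ is a separation pair, so removing $\{x,y\}$ separates $1$ from $2$; then no component of $(G,x,y)$ can touch both $1$ and $2$, the series factor is empty, and $y$ is again an anchor (of the first kind). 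In both shapes $y$ stays inner in $(G,x)$ and inner-connects the two sides, so promoting $x$ cuts off no inner element and $(G,x)$ is prime; together with fullness, $(G,x)$ is full prime.

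For the second part, let $t$ be a parsing of $G$ reaching $x$: by definition $t\teq\tf t'$ for some parsing $t'$ of $(G,x)$. Since $(G,x)$ is full prime and $y$ is one of its anchors, $t'$ is a full prime term to which Lemma~\ref{lem:anchor:parsing} applies, yielding that $t'$ reaches $y$: there is a parsing $t''$ of $((G,x),y)=(G,x,y)$ with $t'\teq\tf t''$. Then $t\teq\tf t'\teq\tff t''$, which is exactly what it means for $t$ to reach $(x,y)$. The only delicate point is the first part: checking from the two shapes of Figure~\ref{fig:order:sep} that $(G,x)$ is genuinely full prime (in particular that $x$ is not a checkpoint), so that the anchor vocabulary and Lemma~\ref{lem:anchor:parsing} may be invoked; once this is in place, the reduction is immediate.
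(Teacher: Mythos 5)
Your proof has the same skeleton as the paper's (show that $y$ is an anchor once $x$ is promoted, then use Lemma~\ref{lem:anchor:parsing} to upgrade ``reaches $x$'' to ``reaches $(x,y)$''), and your analysis of the full components of $(G,x,y)$ is essentially the paper's first step. However, the claim you yourself flag as the delicate point --- that $(G,x)$ is full prime --- is false, and it is load-bearing, since Lemma~\ref{lem:anchor:parsing} is stated for full prime terms only. Counterexample for the $\diamond$ case: let $G$ be the four-cycle $s_1,x,s_2,y$ with a chord between $x$ and $y$ (binary edges $a(s_1,x)$, $b(x,s_2)$, $c(s_2,y)$, $d(y,s_1)$, $e(x,y)$). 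This graph is full prime, indeed hard, and $y\diamond x$; yet $(G,x)$ has \emph{three} prime components, namely $\set{a}$, $\set{b}$, and $\set{c,d,e,y}$, because $a$ and $b$ now have both endpoints among the sources $s_1,s_2,x$ and are cut off from everything else by inner paths. Similarly for the $\glt$ case: in the path $s_1\,c\,y\,b\,x\,a\,s_2$ we have $y\glt x$, but $(G,x)$ splits into $\set{a}$ and $\set{b,y,c}$. Your justification (``$y$ inner-connects the two sides, so promoting $x$ cuts off no inner element'') overlooks precisely those elements of the $x$-side that attach only to $x$ and to the original sources; promoting $x$ severs them.

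Consequently the application of Lemma~\ref{lem:anchor:parsing} to $t'$ is not licensed as written. The repair --- which is also how the paper's one-line invocation of that lemma has to be read --- is to localise: use Proposition~\ref{prop:tm:decomp}\eqref{tm:par} (together with items \eqref{tm:p} and \eqref{tm:l}) to rewrite $t'$ as a parallel composition of parsings of the prime components of $(G,x)$, isolate the parsing $u$ of the \emph{reduced} component containing $y$, check that $y$ is an anchor of that full prime component (your no-full-prime-component argument must be redone relative to the sources that this component actually touches, not the sources of $(G,x)$), apply Lemma~\ref{lem:anchor:parsing} to $u$ to get $u\teq\tf u'$, and then pull this forget back out through the lifts and the parallel composition using \ax{5-7} (and \axFS where needed) to obtain $t'\teq\tf t''$ with $t''$ a parsing of $(G,x,y)$. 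Without this component-localisation step the argument does not go through, and the localisation is not cosmetic: as the examples above show, $(G,x)$ genuinely decomposes.
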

\begin{proof}
  In both cases, there is no full prime component in $(G,x,y)$, so
  that $y$ is indeed an anchor in $(G,x)$. Now if $t\teq \tf t'$ for
  some parsing $t'$ of $(G,x)$, then $t'$ reaches $y$ by
  Lemma~\ref{lem:anchor:parsing}: $t'\teq \tf t''$ for some parsing
  $t''$ of $(G,x,y)$. Thus $t$ reaches $(x,y)$.
\end{proof}
It follows that a full prime term reaches a separation pair iff it reaches any of its constituents.

We write $\sfa x$ for the series factor of a graph at some inner
vertex $x$. When the graph is hard, $x$ is not an anchor, so that $\sfa x$
must be full prime and cannot contain just one edge.
\begin{proposition}
  \label{prop:glt:wf}
  For all hard graphs, $\glt$ is well-founded.
\end{proposition}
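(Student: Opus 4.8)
The plan is to exploit finiteness. Since a graph has finitely many inner vertices, $\glt$ is well-founded as soon as it has no cycle $v_1\glt v_2\glt\dots\glt v_n\glt v_1$. First I would record, for each step $y\glt x$, a \emph{witnessing source} $s\in\set{s_1,s_2}$ (the two sources of the hard graph) such that $y$ is a checkpoint between $x$ and $s$; this is exactly the content of the shape on the right of Figure~\ref{fig:order:sep}. All the analysis then proceeds by looking at how these witnesses can be arranged along a hypothetical cycle.

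The first ingredient is a strict-monotonicity lemma at a \emph{fixed} source $s$. Writing $T_s(x)$ for the set of inner vertices $z\neq x$ whose every inner path to $s$ passes through $x$, I would show that a step $y\glt x$ witnessed by $s$ satisfies $x\in T_s(y)$ and $T_s(x)\subseteq T_s(y)$, hence $T_s(x)\subsetneq T_s(y)$. The inclusion is a one-line path-surgery: a $z$-to-$s$ path through $x$ is forced to continue through $y$, and the case $z=y$ is excluded because $x$ and $y$ cannot both be checkpoints between each other and the same source (a path from $x$ to $s$ through $y$ would then have to revisit $x$). Consequently, for each fixed $s$, the relation ``$\cdot$ is a checkpoint between $\cdot$ and $s$'' is a strict partial order, so there is no \emph{monochromatic} cycle, and I can contract any maximal run of equal witnesses along a cycle by transitivity.

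The second ingredient comes from hardness. No inner vertex of a hard graph is a checkpoint between the two sources: such a vertex $x$ would leave no full prime component in $(G,x)$, making it an anchor of the first kind in Definition~\ref{def:anchor_points} and contradicting anchor-freeness (this is the arity-two analogue of the remark that checkpoints are anchors, and uses that hard graphs have arity two by Propositions~\ref{prop:anchors013} and~\ref{prop:fp:atomic}). Combined with Lemma~\ref{lem:anchor:two_checkpoints}, this rules out every two-step cycle $x\glt y\glt x$ whose two steps use \emph{different} sources, since it would force both $x$ and $y$ to be checkpoints between $s_1$ and $s_2$. After contracting monochromatic runs, a putative cycle therefore reduces to one whose witnesses strictly \emph{alternate} between $s_1$ and $s_2$, of even length at least two; the length-two case is precisely this forbidden mixed cycle.

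The hard part will be excluding the remaining alternating cycles of length at least four. Here the naive size measures fail, because an $s_2$-step can genuinely shrink the $s_1$-territory $T_{s_1}$, so no single $T_s$ (nor their sum) is monotone along an alternating cycle. I would instead argue structurally, by a path- and minor-analysis in the spirit of Lemma~\ref{lem:sfa} and Proposition~\ref{prop:fp:atomic}: an interleaved chain of one-source separations between the two sources of a graph of treewidth at most three cannot close up without producing either a checkpoint between $s_1$ and $s_2$ (again an anchor, contradicting hardness) or a $K_5$ minor. This interlocking-separations argument is the step I expect to carry the real weight of the proof.
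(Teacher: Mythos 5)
Your reduction to acyclicity is fine (on a finite set of inner vertices, well-foundedness is the same as having no $\glt$-cycle), and your two ingredients are both correct: for a fixed witnessing source $s$ the territories $T_s$ do grow strictly along $\glt$, and Lemma~\ref{lem:anchor:two_checkpoints} together with hardness does exclude mixed two-cycles, since a checkpoint between the two sources of an arity-two full prime graph is an anchor of the first kind of Definition~\ref{def:anchor_points}. The problem is precisely the step you defer: excluding alternating cycles of length at least four is where your argument stops, and what you offer there is a hope (``an interleaved chain of one-source separations cannot close up without producing either a checkpoint or a $K_5$ minor''), not a proof. No indication is given of how that path-and-minor analysis would run, and the appeal to treewidth is itself a warning sign: well-foundedness of $\glt$ is a pure connectivity statement about anchor-free full prime graphs of arity two, and no treewidth bound enters the paper's argument. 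So the proposal has a genuine gap at its acknowledged crux.

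The paper dispenses with cycles and witnesses altogether, via a single decreasing measure; its proof is one line: if $y\glt x$ then $|\sfa{y}|<|\sfa{x}|$, so series-factor sizes strictly decrease along any descending chain, and well-foundedness is immediate --- alternating cycles die in the same breath as everything else. The inequality is proved by an inclusion argument very close in spirit to your $T_s$-monotonicity, but applied to the right invariant, one that is monotone \emph{irrespective of the witnessing source}. Concretely, suppose $y$ is a checkpoint between $x$ and a source $s$. The full component $\sfa{y}$ of $(G,y)$ cannot contain $x$ (its elements reach $s$ by paths avoiding $y$, which $x$ cannot do), hence its elements are pairwise connected by paths avoiding both $y$ and $x$; so $\sfa{y}$ lies inside a single component $D$ of $(G,x)$, which still touches both sources, and which touches $x$ because $G$ is prime (a component of $(G,x)$ in which $x$ is isolated would be a component of $G$ missing $x$). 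Since $x$ is not an anchor, $D=\sfa{x}$, giving $\sfa{y}\subseteq\sfa{x}$; the inclusion is strict because $\sfa{x}$ contains an edge incident to $x$, and any component of $(G,y)$ containing such an edge would contain $x$. If you want to salvage your own line of attack, this is the fix: replace the source-indexed sets $T_s$ by the series factor, whose size decreases under every $\glt$-step, and the entire case analysis on witnesses becomes unnecessary.
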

\begin{proof}
  Observe that if $y\glt x$ then $|\sfa y|<|\sfa x|$.
\end{proof}

\begin{proposition}
  \label{prop:gltsep:sep}
  For all vertices $x,y,z$ such that $y\glt x$ and $x\diamond z$, we have $y\diamond z$.
\end{proposition}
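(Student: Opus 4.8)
The plan is to read both hypotheses through their path semantics inside the hard graph $G$, which has arity two by Propositions~\ref{prop:anchors013} and~\ref{prop:fp:atomic}; write $s_1,s_2$ for its two sources. My reading of the left shape of Figure~\ref{fig:order:sep} is that $x\diamond z$ (i.e.\ $(z,x)$ is a separation pair) means that $\{x,z\}$ separates the two sources---every inner path between $s_1$ and $s_2$ meets $\{x,z\}$---and that $G$ splits into four regions, one joining each of $x,z$ to each source. The relation $y\glt x$ means that $y$ is a checkpoint between $x$ and one source, which by symmetry I take to be $s_1$; since every inner path from $x$ to $s_1$ lies in the region joining $x$ to $s_1$, the checkpoint $y$ must sit inside that region. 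The goal is then to show $y\diamond z$, with $y$ taking over the role of $x$ and the sub-region lying between $y$ and $x$ absorbed into $x$'s far side.

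First I would establish the separating property, which is the combinatorial core and the easy step. Take any inner path $P$ from $s_1$ to $s_2$. By $x\diamond z$ it meets $x$ or $z$; if it meets $z$ we are done, and otherwise it meets $x$ but not $z$, so its prefix from $s_1$ to the first occurrence of $x$ is an inner path from $s_1$ to $x$ and therefore contains the checkpoint $y$. Hence $P$ meets $\{y,z\}$ in every case, so $\{y,z\}$ separates $s_1$ from $s_2$. (Had $y$ instead been a checkpoint towards $s_2$, the symmetric argument applied to the suffix of $P$ would work, so the orientation of $\glt$ is immaterial here.)

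The main obstacle is upgrading this separating cut to the full \emph{diamond shape} demanded by $\diamond$: the relation is defined by a shape, not merely by being a two-element separator, so I must exhibit the four regions and verify the treewidth condition. I would do this by \emph{sliding} the separation from $x$ down to $y$. The region joining $s_1$ to $x$ is cut by the checkpoint $y$ into a near part (towards $s_1$) and a far part (towards $x$); the four regions witnessing $y\diamond z$ are then the near part (joining $s_1$ to $y$), the far part together with $x$ and the old region between $x$ and $s_2$ (joining $y$ to $s_2$), and the two regions incident to $z$, which are left unchanged. Each of these is a subgraph of $G$, hence of treewidth at most three, and gluing them along $\{y,z\}$ shows that $(G,y,z)$ has treewidth at most three and exactly the required shape; crucially, no full prime component of $(G,y,z)$ can bridge the two sides, as such a component would yield an inner $s_1$--$s_2$ path avoiding $\{y,z\}$, contradicting the separation just proved.

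Finally I would record that $y$ is an inner vertex distinct from $z$ (it lies in a region disjoint from the ones incident to $z$), so that the left shape of Figure~\ref{fig:order:sep} is realised for the pair $(z,y)$, which is precisely $y\diamond z$. The delicate bookkeeping---checking that the re-partitioned regions are genuinely the components required by the definition of a separation pair, rather than a bare separating cut---is where care is needed, and it is cleanest to phrase it through the series decompositions of $G$ at $x$ and at $y$, tracking how the factor $\sfa y$ sits below $\sfa x$ as in Proposition~\ref{prop:glt:wf}.
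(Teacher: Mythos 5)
Your second paragraph is, almost word for word, the paper's entire proof of this proposition: take an inner path between the sources, it meets $x$ or $z$ because $x\diamond z$, and if it meets $x$ then its segment from $x$ towards the source guarded by $y$ contains $y$ because $y\glt x$. The ``main obstacle'' you then set yourself rests on a misreading of the definitions. A separation pair carries \emph{no} treewidth condition at all---you are conflating it with the separate notion of \emph{forget pair} ($(G,x,y)$ of treewidth at most three), which is why the paper speaks of ``separation \emph{forget} pairs'' when both properties are meant. Moreover the shape on the left of Figure~\ref{fig:order:sep} is not a four-region diamond: as it is used throughout (e.g.\ the two sides $G_1,G_2$ in the proof of Proposition~\ref{prop:two-sep-pairs}, or Lemma~\ref{lem:sep_pair_checkpair}, which extracts the path property ``by definition''), it is the gluing of two pieces along $\set{x,y}$, one containing each source, and this is equivalent to the separator property by exactly the one-line remark you make yourself: no prime component of $(G,y,z)$ touches both sources, so the components sort into an $s_1$-side and an $s_2$-side. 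So your first step, plus that remark, is the whole proof; everything after it should be deleted (note also that the proposition is not restricted to hard graphs, and your argument never uses hardness anyway).

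Beyond being unnecessary, the extra step contains an unsound claim: that the regions, ``being subgraphs of $G$, hence of treewidth at most three'', can be glued along $\set{y,z}$ to show $(G,y,z)$ has treewidth at most three. Being a subgraph bounds the treewidth only \emph{without} an interface; the gluing argument would need a decomposition of each region with a bag containing that region's whole boundary, which is a strictly stronger property. Indeed the conclusion itself can fail for separation pairs: take $G$ with sources $s_1,s_2$, two adjacent inner vertices $v_1,v_2$ each adjacent to $s_1$, $y$ and $z$, and edges from $y$ and $z$ to $s_2$. This graph has treewidth three and $\set{y,z}$ separates the sources, yet cliquing the interface of $(G,y,z)$ creates a $K_5$ on $\set{s_1,y,z,v_1,v_2}$, so $(G,y,z)$ has treewidth four. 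That promoting inner vertices can raise treewidth is precisely why forget points and forget pairs are nontrivial notions in this paper; in Lemma~\ref{lem:reaching:sep} the pair eventually reached is known to be a forget pair via Lemma~\ref{lem:glt:reaches} and term reachability, not via any such cut-and-paste bound.
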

\begin{proof}
  Consider an inner path between the two sources; we have to show that it visits either $y$ or $z$.
  It visits either $x$ or $z$ since $(x,z)$ is a separation pair. If it visits $x$ then it also visits $y$ since $y$ is a checkpoint between $x$ and one of the two sources.
\end{proof}
When the graph is hard in the previous proposition, it must have the following shape:
\vspace{-.2em}
\begin{center}
  \includegraphics{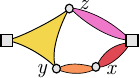}
\end{center}
\vspace{-.3em}
In such a case, we intuitively prefer the separation pair $(y,z)$ to $(x,z)$. This leads us to the notion of minimal separation pair.
A vertex $x$ is \emph{minimal} if there is no vertex $y$ such that $y\glt x$.
A pair of vertices is minimal when its two elements are so.

The proposition below makes it possible to get separation pairs out of minimal vertices.
\begin{proposition}
  \label{prop:form_of_hard_graphs}
  Let $x$ be a forget point in a hard graph.
  For all forget points $y$ of $\sfa x$, we have either $y\glt x$ or $y\diamond x$.
\end{proposition}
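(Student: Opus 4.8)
The plan is to make the structure of $G$ explicit by a series decomposition at $x$, and then to read off the position of $y$ from the connectivity of $(G,y)$, exploiting that $G$, being hard, has no anchors. First I would apply the series decomposition (Proposition~\ref{prop:sdeccomp}) to $(G,x)$. Writing $s_1,s_2$ for the two sources of $G$, this yields $(G,x)\giso\ts(A_1,A_2;H)$, where $H=\sfa x$ is the series factor and $A_1,A_2$ the two series arguments. Since $x$ is not an anchor (Definition~\ref{def:anchor_points}), $H$ is a single full prime graph of arity three with sources $s_1,s_2,x$, it is not a single edge, and---because $x$ is not an anchor of the second kind---no edge of $H$ (hence of $G$) has $\{s_1,s_2,x\}$ among its neighbours. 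The argument $A_1$ (resp. $A_2$) is glued to the rest of $G$ only at $x$ and $s_2$ (resp. at $x$ and $s_1$), and may be empty. As $y$ is a forget point of $H$, the graphs $(H,y)$, hence also $(G,x,y)$ and $(G,y)$, have treewidth at most three; in particular $(x,y)$ is a forget pair, and $(G,x,y)$---of arity four with no $4$-edge on $\{s_1,s_2,x,y\}$---contains no full prime (atomic, by Proposition~\ref{prop:fp:atomic}) component.

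The heart of the argument is to analyse the full prime components of $(G,y)$. Since $G$ is hard, $y$ is not an anchor of $G$, so $(G,y)$ has exactly one full prime component. I would show that, up to gluing through the inner vertex $x$, there are only two candidates: a component $\Gamma_{12}$ coming from a piece of $H$ that joins $s_1$ and $s_2$ by an inner path avoiding both $x$ and $y$, and the component $\Gamma_x$ gathering everything touching $x$ (the $x$-touching pieces of $H$ together with $A_1$ and $A_2$). A piece touching a single source, or only $y$, never produces a full prime component, and two distinct instances of $\Gamma_{12}$ would give two full prime components, contradicting anchor-freeness. Hence exactly one of the following holds: either $\Gamma_{12}$ is present and $\Gamma_x$ is not full prime, or $\Gamma_{12}$ is absent and $\Gamma_x$ is full prime.

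It then remains to match these two cases with $\diamond$ and $\glt$. If $\Gamma_{12}$ is absent, no inner path joins $s_1$ to $s_2$ while avoiding $\{x,y\}$ (such a path would have to stay inside a single piece of $H$, since the only way to pass between $H$, $A_1$ and $A_2$ without using a source is through $x$); together with the absence of a full prime component in $(G,x,y)$, this is exactly the separation-pair shape, giving $y\diamond x$. If instead $\Gamma_{12}$ is present, then $\Gamma_x$ fails to be full prime, which unfolds into: either no inner path joins $s_1$ to $x$ while avoiding $y$---neither inside $H$ nor through $A_2$, so that $A_2$ is empty---making $y$ a checkpoint between $x$ and $s_1$; or symmetrically $y$ is a checkpoint between $x$ and $s_2$. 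In either case $y\glt x$.

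The main obstacle is the bookkeeping of the side arguments $A_1,A_2$: they may create direct inner paths between $x$ and a source that bypass $y$, and it is precisely these shortcuts that decide between the checkpoint configuration ($\glt$) and the separation-pair configuration ($\diamond$). Getting the dichotomy clean therefore requires carefully translating \emph{``$(G,y)$ has a unique full prime component''} into statements about which pairs among $s_1,s_2,x$ are joined by inner paths avoiding $y$, while tracking the contributions of $A_1$ and $A_2$. The two structural inputs that make this possible are that $x$ is not a second-kind anchor (ruling out the hyperedge that would let an $s_1$--$s_2$ path skip both $x$ and $y$) and that $y$ is a forget point of $H$ (ensuring $(G,x,y)$ genuinely has treewidth three, so that a separating pair is a genuine forget pair).
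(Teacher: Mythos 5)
Your proposal is correct and takes essentially the same route as the paper's proof: both classify the components of $(G,x,y)$, exclude full components using Proposition~\ref{prop:fp:atomic} together with anchor-freeness (no edge can contain $x$ and both sources), and then exploit that $y$ is not an anchor of the hard graph---so $(G,y)$ has exactly one full prime component---to obtain the dichotomy between the separation-pair configuration ($y\diamond x$) and the checkpoint configuration ($y\glt x$). The only difference is organisational: the paper runs a three-way case analysis whose ``otherwise'' branch produces two full prime components in $(G,y)$ and hence a contradiction with hardness, which is precisely your ``exactly one full prime component'' argument turned around.
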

\begin{proof}
  Let us classify the components of $(G,x,y)$ by their
  isolated sources. Full components cannot exist
  as by Proposition~\ref{prop:fp:atomic} they would be edges and both
  $x$ and $y$ would be anchors in $G$. So $G$ must have the
  following shape:
  \[
    \includegraphics{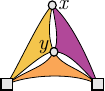}
  \]
  If there is no inner path avoiding $y$ between the sources in the bottom component, then $y\diamond x$.
  If there is no inner path avoiding $y$ from $x$ to one of the sources, then $y\glt x$.
  Otherwise there are at least two full prime components in $(G,y)$, making $y$ an anchor of $G$, which contradicts $G$ being hard.
\end{proof}
It follows easily that every hard graph has some separation pair---even a minimal one, which will be useful to reduce the number of cases to study.
However, we need to be more precise and to keep track of terms.
\begin{lemma}
  \label{lem:reaching:sep}
  Every hard term reaches some minimal separation pair.
\end{lemma}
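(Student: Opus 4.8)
The plan is to build the pair in three movements: first descend along $\glt$ to a minimal forget point, then extract a separation pair from it, and finally descend a second time to minimise the other component—propagating reachability at every step via Lemma~\ref{lem:glt:reaches} and terminating each descent by well-foundedness (Proposition~\ref{prop:glt:wf}).

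Let $t$ be a hard term, denoting a hard graph $G$. By Lemma~\ref{lem:full_prime_is_forget}, $t$ reaches some forget point $x$. I would then descend repeatedly: as long as $x$ is not minimal there is some $z\glt x$, and since $t$ reaches $x$, Lemma~\ref{lem:glt:reaches} gives that $t$ reaches $(x,z)$, hence (as noted after \axFS, reaching a pair entails reaching both its members) $t$ reaches $z$; being reached, $z$ is itself a forget point. As $\glt$ is well-founded on hard graphs, this terminates at a \emph{minimal} forget point $x_0$ still reached by $t$.

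Next I would extract a separation pair from $x_0$. Since $G$ is hard, $x_0$ is not an anchor, so its series factor $\sfa{x_0}$ is full prime and non-atomic; by Proposition~\ref{prop:fp:analysis} it has a forget point $y$. Proposition~\ref{prop:form_of_hard_graphs} then yields $y\glt x_0$ or $y\diamond x_0$, and minimality of $x_0$ rules out the first alternative, so $(x_0,y)$ is a separation pair. By Lemma~\ref{lem:glt:reaches} again, $t$ reaches $(x_0,y)$, and hence reaches both $x_0$ and $y$. Finally I would minimise the second component while preserving the separation pair: while $y$ is not minimal, pick $w\glt y$; since $t$ reaches $y$, Lemma~\ref{lem:glt:reaches} gives that $t$ reaches $w$, and Proposition~\ref{prop:gltsep:sep} applied to $w\glt y$ and $y\diamond x_0$ gives $w\diamond x_0$, so $(x_0,w)$ is again a separation pair. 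Replacing $y$ by $w$ and iterating, well-foundedness terminates the descent at a minimal $y_0$ with $(x_0,y_0)$ a separation pair reached by $t$. As $x_0$ was left untouched by this second descent, it remains minimal, so $(x_0,y_0)$ is the desired minimal separation pair.

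All the structural content is carried by the cited results, so this proof is essentially an assembly. The two delicate points are purely bookkeeping: checking that reachability genuinely descends each $\glt$-step (exactly the combination of Lemma~\ref{lem:glt:reaches} with the fact that reaching a pair entails reaching both members), and that the separation-pair relation with the fixed vertex $x_0$ survives the second descent (Proposition~\ref{prop:gltsep:sep}). The main conceptual obstacle is choosing the order of the two descents: minimising $x$ \emph{first} is essential, because it is precisely minimality of $x_0$ that upgrades the dichotomy of Proposition~\ref{prop:form_of_hard_graphs} from ``$\glt$ or $\diamond$'' to the separation pair ``$\diamond$''.
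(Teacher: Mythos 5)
Your proof is correct and follows essentially the same route as the paper's: reach a forget point via Lemma~\ref{lem:full_prime_is_forget}, make it $\glt$-minimal using Lemma~\ref{lem:glt:reaches} and well-foundedness (Proposition~\ref{prop:glt:wf}), extract a forget point $y$ of the series factor and use minimality with Proposition~\ref{prop:form_of_hard_graphs} to get a separation pair, then minimise $y$ via Proposition~\ref{prop:gltsep:sep}. The paper merely states the two $\glt$-descents tersely (``which we can choose minimal by\dots''), whereas you spell them out, including the key bookkeeping point that reaching a pair entails reaching both of its members.
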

\begin{proof}
  Let $t$ be a hard term.
  By Lemma~\ref{lem:full_prime_is_forget} $t$ reaches some forget point $x$, which we can choose minimal by Proposition~\ref{prop:glt:wf} and Lemma~\ref{lem:glt:reaches}.
  Accordingly, let $t'$ be a parsing of $(G,x)$ such that $t\teq \tf t'$.
  As $\sfa x$ is full prime and non-atomic, it has a forget point $y$.
  Since $x$ is minimal, $(x,y)$ is a separation pair by Proposition~\ref{prop:form_of_hard_graphs}.
  We can choose $y$ to be minimal by Proposition~\ref{prop:gltsep:sep}, and $t$ reaches $(x,y)$ by Lemma~\ref{lem:glt:reaches}.
\end{proof}

As before with forget points, separation pairs are not unique (even
minimal ones), and we need to show how to move from one separation
pair to another.

When studying the possible shapes of a hard graph $H$ with distinct
separation forget pairs, we end up with a few shapes $S$ such as the
following one:
\begin{align*}
  \includegraphics{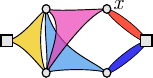}
\end{align*}
In such situations, we know that $(H,x)$ has treewidth at most three,
but $(S,x)$ has treewidth four. Therefore, for an appropriate notion
of \emph{minor}, $(S,x)$ cannot be a minor of $(H,x)$: the treewidth
may not increase when taking minors. We use this information in order
to refine the actual shape of $H$.

We use \emph{sourced simple graphs} in order to define such a notion
of minor.  Those are triples $(V,E,S)$ made of a set $V$ of
\emph{vertices}, a set $E$ of unordered binary \emph{edges}, and a
subset $S\subseteq V$ of \emph{sources}.  We write $K_k$ for the
clique over $k$ sources.  \emph{Tree decompositions} and
\emph{treewidth}~\cite[Section 12.4]{DBLP:books/daglib/0030488} are
adapted to sourced simple graphs by requiring the subset of sources to
be contained in some bag. A \emph{sourced minor of a (sourced simple) graph} is a
(sourced simple) graph obtained from it by a sequence of the following
operations: remove an edge, contract an edge, remove an isolated
vertex.  Those operations do not increase the treewidth.

The \emph{footprint} of a graph is the sourced simple graph obtained from it by replacing each edge by a clique over its neighbours, forgetting labels, and turning the interface into a mere set---note that we do not add a clique on the sources.
A graph and its footprint have the same tree decompositions, and thus the same treewidth.
A \emph{sourced minor of a graph} is a sourced minor of its footprint.

Note that graphs excluding $K_3$ as a sourced minor need not be acyclic, as cycles may occur away from the sources.
The point of using sourced minors is that they behave well with respect to substitutions, and thus shapes:
\begin{proposition}
  \label{prop:subst_minor}
  Given a graph $G$ and a substitution $\sigma$, if $K_k$ is a sourced minor
  of $\sigma(a)$ for all letters $a$ of arity $k$, then the footprint of
  $G$ is a sourced minor of $G\sigma$.
\end{proposition}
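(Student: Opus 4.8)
The plan is to exhibit an explicit sequence of sourced minor operations turning the footprint of $G\sigma$ into the footprint of $G$, by handling each edge of $G$ independently. Recall that a sourced minor of $G\sigma$ means a sourced minor of its footprint, so it suffices to reduce the footprint of $G\sigma$ to that of $G$ using only edge removals, edge contractions, and removals of isolated vertices, all the while preserving the global sources (the interface of $G$, which substitution leaves untouched).

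First I would describe the footprint of $G\sigma$ concretely. Since $G\sigma$ is obtained by replacing every $a$-labelled edge $e$ of $G$ by a copy $H_e$ of $\sigma(a)$, gluing the interface of the copy onto the (distinct, as neighbourhoods are duplicate-free) neighbours of $e$, the edges of $G\sigma$ are exactly those occurring inside the copies. Replacing each of them by a clique on its neighbours, the footprint of $G\sigma$ is therefore the union, over all edges $e$ of $G$, of a copy of the footprint of $\sigma(a)$, where two such footprints share vertices only at the sources of the corresponding copies---that is, at the common neighbours of the two edges in $G$---together with the isolated vertices of $G$.

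Next I would use the hypothesis edge by edge. For a fixed $a$-edge $e$ of arity $k$, the hypothesis provides a sequence of minor operations reducing the footprint of $\sigma(a)$ to $K_k$ on its $k$ sources; since $K_k$ retains all $k$ sources as distinct source vertices, this witness leaves the sources intact and only removes or contracts inner material. Transporting this sequence into the copy $H_e$ inside the footprint of $G\sigma$ reduces $H_e$ to the clique on the neighbours of $e$, while touching only inner vertices and edges of $H_e$. I would then concatenate these per-edge reductions into one valid sequence: distinct copies overlap only at their sources, and each reduction preserves its own sources, so the operations coming from different edges act on pairwise disjoint inner vertices and edges and do not interfere; moreover a global source of $G$ is either isolated in $G$, hence untouched, or a neighbour of some edges, hence a source of the corresponding copies and thus preserved by each of their reductions. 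After performing them all, what remains is exactly a clique on the neighbours of each edge of $G$ together with the isolated vertices, with source set the interface of $G$, i.e. the footprint of $G$.

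The step I expect to be the main obstacle is the bookkeeping of sources. The sources of a copy $H_e$ are the neighbours of $e$, which may well be \emph{inner} vertices of $G$, whereas the global source set is the interface of $G$; one must check that reducing each copy to a clique on its own sources is legitimate as a sequence of sourced minor operations \emph{for the global source set}, namely that it never disturbs a global source and that the independent reductions glue coherently at shared copy-sources. Once this is settled, the rest is a direct transport of the minor witnesses supplied by the hypothesis.
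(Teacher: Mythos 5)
Your proof is correct. Note that the paper states Proposition~\ref{prop:subst_minor} without any proof, treating it as a routine consequence of the definitions, and your argument --- decomposing the footprint of $G\sigma$ into copies of the footprints of the $\sigma(a)$ glued only along their copy-sources, transporting each $K_k$-witness into its copy, and using the fact that the number of sources cannot increase under sourced-minor operations (so a witness ending in $K_k$ can never merge or delete a source) to justify that the per-edge reductions preserve all gluing points and hence concatenate without interference --- is precisely the justification the authors leave implicit.
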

Now observe that full prime graphs of arity two admit the sourced edge $K_2$ as a
sourced minor.  At arity three, we have the following property instead.
\begin{toappendix}
  The following three lemmas (Lemmas~\ref{lem:three_paths_triangle}, \ref{lem:sourced_star}, \ref{lem:sep_pair_checkpair}), as well as Proposition~\ref{prop:triangle_property}, hold without any restriction on the treewidth.
  \label{app:hard}
  \begin{lemma}
    \label{lem:three_paths_triangle}
    Let $G$ be a graph of arity three with three inner paths $P_1$, $P_2$,
    and $P_3$, respectively from the first to the second, the second to
    the third, and the third to the first sources.
    If no vertex is shared by the three paths then $G$ admits $K_3$ as a sourced minor.
  \end{lemma}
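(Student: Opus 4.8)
The plan is to work inside the footprint $\hat G$ of $G$, where $K_3$ denotes the triangle carried by the three sources $s_1,s_2,s_3$. Each inner path $P_i$ of $G$ induces a walk between the same two sources in $\hat G$, which I reduce to a simple path; overloading notation, let $P_1,P_2,P_3$ now be simple paths in $\hat G$ joining $s_1$--$s_2$, $s_2$--$s_3$, $s_3$--$s_1$, still with no vertex common to all three, and set $U=P_1\cup P_2\cup P_3$. Since a sourced minor of a subgraph of $\hat G$ is a sourced minor of $\hat G$, and hence of $G$, it suffices to exhibit $K_3$ as a sourced minor of $U$, i.e.\ to produce three pairwise disjoint connected branch sets $B_1\ni s_1$, $B_2\ni s_2$, $B_3\ni s_3$ that are pairwise adjacent in $U$: contracting each $B_i$ and discarding the rest then yields $K_3$.

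I would build these branch sets by induction on the total length $|P_1|+|P_2|+|P_3|$. In the base case the three paths meet only at the shared sources, so $U$ is a single cycle through $s_1,s_2,s_3$, which plainly has $K_3$ on them as a sourced minor. Otherwise some path, say $P_1$, meets another path in its interior; walking $P_1$ from $s_1$, let $v$ be the first such crossing vertex, so that the prefix $P_1v$ is \emph{clean}: its interior avoids $P_2\cup P_3$. The key fact is that $v$ cannot lie on all three paths, hence it misses one of them. In the favourable case $v\in P_1\cap P_3$ (the two paths through $v$ share the source $s_1$), I relocate the terminal $s_1$ to $v$: the paths $vP_1$ ($v$--$s_2$), $P_3v$ ($s_3$--$v$) and $P_2$ ($s_2$--$s_3$) form a strictly shorter instance on $\{v,s_2,s_3\}$ with no common vertex (note $P_2$ avoids $v$ since $v\notin P_2$), so induction yields branch sets $C_v\ni v$, $C_2\ni s_2$, $C_3\ni s_3$. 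Reattaching the clean prefix, $B_1\eqdef C_v\cup P_1v$ contains $s_1$, stays connected, and remains disjoint from $C_2,C_3$ because the interior of $P_1v$ avoids $P_2\cup P_3$ and its endpoints $s_1,v$ lie outside $C_2\cup C_3$; the three adjacencies are inherited, giving $K_3$.

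The main obstacle is the remaining configuration, where the first crossing $v\in P_1\cap P_2$ shares with $P_1$ only the \emph{far} source $s_2$, so relocating $v$ back to $s_1$ is no longer clean. Here I exploit that $v\notin P_3$: since $P_3$ avoids $v$, the sources $s_1,s_3$ lie in the same component of $U\setminus v$, while the clean prefix $P_1v$ attaches $s_1$ to $v$ inside that component. If $v$ separates $s_2$ from this component, I recurse on the side $U_1$ containing $s_1,s_3,v$ --- there $s_1,v,s_3$ are joined pairwise by $P_1v$, $vP_2$ and $P_3$ with no common vertex and strictly smaller total length --- and then reattach $s_2$ along $vP_1$, whose interior lies on the other side and which therefore meets $U_1$ only at $v$. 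If $v$ does not separate $s_2$, one extracts a $v$-avoiding $s_2$--$s_3$ path and falls back to a reduction of the same flavour on $\{v,s_2,s_3\}$. In every case the delicate point --- and the part I expect to require the most care --- is choosing the crossing and the relocation segment so that the reattached clean segment stays disjoint from the inductively constructed branch sets; the hypothesis that no vertex lies on all three paths is exactly what guarantees the existence of such a clean segment.
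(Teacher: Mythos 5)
Your overall plan --- pass to the footprint, induct on the total length of the three paths, cut at the first crossing $v$ along $P_1$, recurse on a relocated triple of terminals, then reattach the clean prefix $P_1v$ --- is the same induction as the paper's (the paper implements each step by contracting the first edge of $P_1$ or rerouting a path through it, rather than by assembling branch sets), and your Case A ($v\in P_3$, the path sharing $s_1$) as well as the separating half of Case B are correct. The gap is the non-separating half of Case B, and it is genuine, not a matter of care. Take $P_1=s_1\,x\,v\,y\,s_2$, $P_2=s_2\,a\,v\,y\,s_3$, and $P_3$ a single edge $s_3s_1$, with $x,v,y,a$ inner: no vertex lies on all three paths, the first crossing along $P_1$ is $v\in P_2\setminus P_3$, and $v$ does not separate $s_2$ from $\{s_1,s_3\}$ (the path $s_2\,y\,s_3$ avoids $v$). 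Here the \emph{only} $v$-avoiding $s_2$--$s_3$ path is $Q=s_2\,y\,s_3$, and $y$ lies on $vP_1$, on $Q$, and on $vP_2$, so the relocated triple $(vP_1,Q,\overline{vP_2})$ violates the no-common-vertex hypothesis; worse, $vP_1\cup Q\cup vP_2$ is a star centred at $y$, which has no $K_3$ minor rooted at $\{v,s_2,s_3\}$ at all, so no recursion confined to these three paths can succeed. The instance itself does satisfy the lemma (branch sets $\{s_1,x,v\}$, $\{s_2,a\}$, $\{s_3,y\}$ work), but the witnessing sets use the edge $av$, which your relocated triple discards; one can restore the hypothesis by taking the $v$--$s_2$ leg along $\overline{P_2v}$ instead of $vP_1$, but your proposal gives no rule for making such choices, and this is precisely where the difficulty lives: a wrong leg choice kills the no-common-vertex hypothesis or the disjointness needed for reattachment, and since the extracted path $Q$ has uncontrolled length, the induction measure need not even decrease.

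For comparison, the paper's proof avoids branch sets altogether: when the first inner vertex $x$ of $P_1$ is on no other path it contracts the first edge, and when $x\in P_3$ it reroutes $P_3$ through that edge (or contracts twice); the case $x\in P_2$ --- exactly your Case B --- is dispatched there with a ``say, $P_3$''. Note that the two cases are not exchangeable by any relabelling of sources and orientations (the rerouting $P_3xes_1$ uses that $P_3$ ends at $s_1$), and there are configurations in which \emph{every} path, read from \emph{either} end, has its first crossing on the opposite path: take the six-cycle $s_1a_1s_3b_1s_2a_2$ together with the triangle $a_1b_1a_2$, with $P_1=s_1a_1b_1s_2$, $P_2=s_2a_2a_1s_3$, $P_3=s_3b_1a_2s_1$. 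So you cannot escape Case B by re-reading the paths from their other ends; your instinct that this configuration is ``the part requiring the most care'' is right, but as written your treatment of it fails on small examples, and completing it requires a genuinely new argument rather than bookkeeping.
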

  \begin{proof}
    Endpoints and vertices of paths are preserved by taking the
    footprint of a graph: the lemma holds for a graph whenever it
    holds for its footprint. Hence we may assume that $G$ is a sourced
    simple graph.

    We write $s_1,s_2,s_3$ for the three sources of $G$.

    We reason by induction on the sum of the lengths of the paths, with the
    \emph{length} of a path being the number of inner elements it
    contains.

    If $P_1$, $P_2$, and $P_3$ are all of length $1$ then they
    are reduced to edges connecting pairs of sources. Said otherwise,
    $G$ contains $K_3$ as a subgraph, and by extension as a minor.

    Otherwise, at least one of the paths $P_i$, say $P_1$, has an inner vertex.
    We note $e$ for the first edge and $x$ for the
    first inner vertex of $P_1$.

    If $x$ does not appear in either $P_2$ or $P_3$ then we contract
    $e$ in $G$, obtaining a graph $G/e$.  This shortens $P_1$, while
    retaining the hypothesis: we conclude by induction on $G/e$.

    Otherwise, $x$ appears in, say, $P_3$. We distinguish two possibilities:
    \begin{enumerate}
    \item $x$ is not the last inner vertex of $P_3$. In this case, we
      replace $P_3$ by the shorter path $P_3xes_1$. Path $P_1$, $P_2$,
      and $P_3xes_1$, still have no vertex common to all three: so we conclude
      by induction hypothesis.
    \item $x$ is the last inner vertex of $P_3$, and is followed by an
      edge $e'$: $P_3=P_3xe'$. Note that $e=e'$ is possible but not
      necessary. We then contract both $e$ and $e'$ and conclude again
      by induction hypothesis on the resulting minor. (Note that
      $x$ not being in $P_2$ guarantees that contracting $e$ and $e'$
      does not modify $P_2$).\qedhere
    \end{enumerate}
  \end{proof}

  \begin{lemma}
    \label{lem:sourced_star}
    Let $G$ be a full prime graph of arity $k$. For all $i\le k$, $G$
    admits the following sourced simple graph $H_i$ as a sourced
    minor: $H_i$ has the same sources as $G$, and its only inner
    elements are $k-1$ edges between the $i$th source and the other
    sources.
  \end{lemma}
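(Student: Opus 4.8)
The plan is to work entirely with the footprint $F$ of $G$, which is legitimate since a sourced minor of $G$ is by definition a sourced minor of $F$. The first thing to record is that $F$ inherits from $G$ the inner-connectivity we shall exploit: a path $v_0,e_1,v_1,\dots,e_m,v_m$ in $G$ becomes a walk $v_0,v_1,\dots,v_m$ in $F$ (each $e_t$ becomes a clique on its neighbours, so $\{v_{t-1},v_t\}$ is an edge of $F$), and since the footprint preserves the vertex set and the set of sources, an \emph{inner} path in $G$ gives an inner walk in $F$ on the same vertices. As $G$ is full prime, all its vertices are pairwise connected by inner paths, hence so are all vertices of $F$. Write $s_1,\dots,s_k$ for the (distinct) sources, fix $i$, and set $X=\{s_i\}\cup\{\text{inner vertices}\}$.

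The heart of the argument is two observations. First, the induced subgraph $F[X]$ is connected: every inner vertex $v$ is joined to $s_i$ by an inner path in $F$ whose vertices all lie in $X$ (its endpoints are $v$ and $s_i$, its intermediate vertices are inner), so $v$ reaches $s_i$ within $F[X]$. Second, for each $j\neq i$ the source $s_j$ has a neighbour in $X$: an inner path from $s_j$ to $s_i$ leaves $s_j$ towards a vertex that is either inner or equal to $s_i$, and in both cases that vertex belongs to $X$.

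The minor is then produced in three moves. First, contract a spanning tree of $F[X]$, collapsing all of $X$ into a single vertex; since $s_i$ is the \emph{only} source contained in $X$ (an inner path never passes through another source internally), this contraction merges only inner vertices into $s_i$ and never identifies two sources, so the collapsed vertex is exactly $s_i$. Second, delete every self-loop created by the contraction, together with every edge joining two sources distinct from $s_i$. Third, observe that nothing else remains: since the vertex set of $F$ is $X\cup\{s_j:j\neq i\}$, every edge of $F$ either lies inside $X$ (now a deleted self-loop), joins two non-$i$ sources (deleted), or joins some $s_j$ to $X$; the latter become, after simplification, a single edge $\{s_j,s_i\}$, which survives precisely because $s_j$ had a neighbour in $X$. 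All inner vertices have disappeared into $s_i$, so what is left is exactly the star $H_i$ on the sources of $G$.

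The only points needing care — and the closest thing to an obstacle — are the two bookkeeping guarantees: that no two sources are merged during the contraction, which follows from $X$ containing the single source $s_i$; and that each leaf edge $\{s_j,s_i\}$ genuinely appears, which follows from full primality via the neighbour-in-$X$ observation. The degenerate case where $G$ has no inner vertices is subsumed: then $X=\{s_i\}$, no contraction occurs, and the edge-deletion step alone yields $H_i$. Since the construction is uniform in $i$, this establishes the statement for every $i\le k$.
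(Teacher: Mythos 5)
Your proof is correct and takes essentially the same route as the paper's: contract the inner vertices together with $s_i$ into a single source (connectedness of this set being exactly what full primality provides), observe that every other source is adjacent to it, and delete the leftover edges. The paper's version merely splits off the atomic case separately and is terser --- it leaves the final edge deletions and the no-source-merging check implicit, which you spell out.
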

  \begin{proof}
    If $G$ has no inner vertex then it must be atomic, it admits the clique on the $k$ sources as a sourced minor, and thus in particular the graphs $H_i$ from the statement.
    Otherwise, starting from the footprint of $G$, contract all inner vertices onto a single one, say $x$, which is connected to all sources since $G$ was full prime. For each $i$, it suffices to contract the edge between $x$ and the $i$th source to obtain $H_i$ as a sourced minor.
  \end{proof}
\end{toappendix}
\begin{propositionrep}
  \label{prop:triangle_property}
  Every graph of arity three
  either has the sourced triangle $K_3$ as a sourced minor, or has one of the two following shapes:
  \begin{center}
    \includegraphics{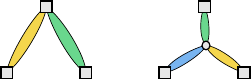}
  \end{center}
\end{propositionrep}
\begin{proof}
  Suppose $G$ is a graph of arity three that does not admit $K_3$ as a
  sourced minor.

  If $G$ has no full prime components, then it has the triangle as a
  shape. Not having $K_3$ as a sourced minor implies that some
  edge of this shape does not have any full prime component in $G$,
  making it possible to refine the triangle shape into the left-hand
  side one in the statement.

  If $G$ has two or more full prime components, we can use
  Lemma~\ref{lem:sourced_star} on these two components and two distinct
  sources of $G$ to get $K_3$ as sourced minor.

  It remains to deal with the case where $G$ is full prime.  We prove
  in that case that there is a vertex $x$ which is a checkpoint
  between all pairs of sources: it follows easily that $G$ has the
  right-hand side shape from the statement.
  
  Write $s_1$, $s_2$, and $s_3$ for the sources of $G$.  Let $P_1$,
  $P_2$, and $P_3$ be three paths in $G$ respectively from $s_1$ to
  $s_2$, $s_2$ to $s_3$, and $s_3$ to $s_1$, each of minimal length.
  By Lemma~\ref{lem:three_paths_triangle}, there is a vertex $x$
  shared by these three paths.

  Since segments such as $P_1x$ and $\overline{P_3}x$ share their
  endpoints, they must have equal length: otherwise one could be
  replaced by the other to obtain a shorter triple of paths.  Without
  loss of generality, we may actually chose them to be equal.
  Accordingly, we write $Q_1$ for $P_1x=\overline{P_3}x$, $Q_2$ for
  $P_2x=\overline{P_1}x$, and $Q_3$ for $P_3x=\overline{P_2}x$. Note
  that $Q_i$ is a path from $s_i$ to $x$.

  The paths $Q_i$ are pairwise inner disjoint. Indeed, would $Q_1$ and
  $Q_2$ share an inner vertex (for instance), then $P_1=Q_1\overline{Q_2}$
  would visit that vertex twice and would not be minimal.

  We finally prove that all inner paths between any two sources must
  contain $x$. Let $Q$ be an inner path between two sources, say $s_1$
  and $s_2$ without loss of generality, and assume by contradiction
  that $Q$ avoids $x$.

  By Lemma~\ref{lem:three_paths_triangle}, there is a vertex $y$
  shared by $Q$, $P_2$, and $P_3$. Since $Q_1$ and $Q_2$ are inner
  disjoint, $y$ must belong to $Q_3$.

  The path $Q$ starts at $s_1$, ends at $s_2$, and meets $Q_3$ at
  $y$. It cannot always remain on $Q_1$ as it does not contain
  $x$. Let $z$ be the last vertex of $Q$ in common with $Q_1$, and
  $z'$ be the first vertex after $z$ on $Q$ that meets $Q_j$ for
  $j\in\{2,3\}$. Note that the inner part of $R=zQz'$ is disjoint
  from all $Q_i$s. As $z$ and $z'$ cannot be $x$, we are in the
  following situation:
  \begin{center}
    \includegraphics{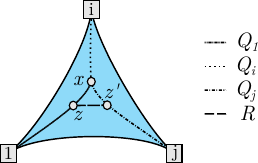}
  \end{center}
  It follows easily, using the fact that most drawn paths are disjoint,
  that $K_3$ is a sourced minor of $G$, a contradiction.
\end{proof}
Let us continue our example and come back to the previous hard graph $H$.
Assume that the binary edges in the given shape are instantiated by
full prime graphs, and consider the three graphs instantiating the
ternary edges.  If those three graphs have the triangle as a sourced minor,
then the footprint of $(S,x)$ must be a sourced minor of $(H,x)$, which is not
possible. Therefore, at least one of them must have one of the shapes
given by Proposition~\ref{prop:triangle_property}. By considering the
different possibilities and continuing the same kind of reasoning, we
end up proving that in this case, $H$ must have shape \axFX. In the
general case, we obtain the following classification.

\begin{toappendix}
  \begin{lemma}
    \label{lem:sep_pair_checkpair}
    Let $(x,y)$ be a separation pair in a graph $G$ of arity two. All
    paths from one source to the other in $G$ must visit either
    $x$ or $y$. If furthermore $G$ is hard then there exists a
    path that contains $x$ (resp. $y$) but not $y$ (resp. $x$).
  \end{lemma}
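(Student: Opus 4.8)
The plan is to read the separation property straight off the shape defining $y\diamond x$, and then to obtain the two asymmetric paths by a short contradiction argument whose only real ingredient is anchor-freeness.

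For the first claim, the defining shape $y\diamond x$ of Figure~\ref{fig:order:sep} means precisely that $\{x,y\}$ separates the two sources $s_1,s_2$ of $G$: every inner path between them visits $x$ or $y$ (this is the reading of $\diamond$ already used in the proof of Proposition~\ref{prop:gltsep:sep}). A general path from $s_1$ to $s_2$ reduces to a simple path with the same endpoints, which at arity two has no source in its interior and is therefore an inner path; it thus visits $x$ or $y$, and so does the original path, whose vertex set contains that of the reduction.

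For the second claim, assume $G$ is hard and suppose, towards a contradiction, that no source-to-source path contains $x$ while avoiding $y$, i.e.\ every such path through $x$ also meets $y$. Together with the first claim this forces \emph{every} $s_1$-$s_2$ path to visit $y$: a path either passes through $x$, hence through $y$ by assumption, or avoids $x$, hence meets $y$ by separation. Thus $y$ is a checkpoint between the two sources. But then $(G,y)$ cannot have any full prime component: such a component would connect $s_1$, $s_2$ and $y$ by inner paths and in particular yield an $s_1$-$s_2$ inner path avoiding the source $y$, contradicting that $y$ is a checkpoint. Hence $y$ would be an anchor of the first kind (Definition~\ref{def:anchor_points}), contradicting that $G$ is hard. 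Exchanging $x$ and $y$ gives, symmetrically, a path containing $y$ but not $x$.

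The subtle point is simply to locate where hardness is used, and the argument above pins it down: without anchor-freeness, $x$ and $y$ could sit \emph{in series} between the sources, so that every connecting path traverses both and no asymmetric path exists; it is precisely the exclusion of such a cutvertex-like configuration---which would make $y$ (or symmetrically $x$) an anchor of the first kind---that makes the two paths appear. I expect the only remaining care to be routine bookkeeping around the interplay of arbitrary paths, simple paths, and inner paths at arity two, and making sure the shape $y\diamond x$ is invoked only as the separation hypothesis rather than through any finer detail.
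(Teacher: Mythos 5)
Your proof is correct and takes essentially the same route as the paper's (very terse) proof: the first claim is read directly off the separation-pair shape, and the second follows because otherwise $y$ (respectively $x$) would be a checkpoint between the two sources, hence an anchor of the first kind, contradicting that hard graphs are anchor-free. Your spelled-out step that a full prime component of $(G,y)$ would yield an inner $s_1$-$s_2$ path avoiding $y$ is precisely the justification the paper leaves implicit in ``hard graphs are full prime and may not have checkpoints.''
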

  \begin{proof}
    By definition of separation pairs, and using the fact that hard graphs are full prime and may not have checkpoints.
  \end{proof}
\end{toappendix}
\newcommand\scase[1]{\hyperlink{scases}{\formatax{Case #1}}\xspace}%
\newcommand\zcase[1]{\hyperlink{zcases}{\formatax{Case Z#1}}\xspace}%
\begin{propositionrep}
  \label{prop:two-sep-pairs}
  Let $G$ be a hard graph with two minimal separation forget pairs
  $(x,y)$ and $(x',y')$.
  Possibly up to swapping $x$ and $y$, and/or $x'$ and $y'$, either $(x,y')$ is a separation
  pair $(\tformatax{Case Z})$, or $G$ has one of the following shapes, where in the leftmost case, $z$
  is minimal and $x',y'$ appear in the rightmost component%
  \hypertarget{scases}{:}
  \begin{align*}
    \arraycolsep=20pt
    \begin{array}{ccc}
      \includegraphics{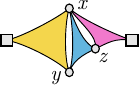}&
      \includegraphics{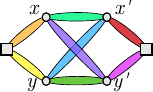}&
      \includegraphics{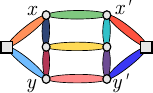}\\
      \tformatax{Case IH}&
      \tformatax{Case X}&
      \tformatax{Case D}\\
    \end{array}
  \end{align*}
\end{propositionrep}
The proof we provide in \citeapp{app:hard}{F} actually establishes that in the case where $(x,y')$ is a separation pair, then $G$ has one of the following shapes%
\hypertarget{zcases}{:}
\begin{align*}
  \arraycolsep=20pt
  \begin{array}{ccc}
    \includegraphics{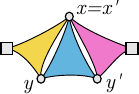}&
    \includegraphics{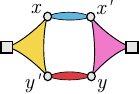}&
    \includegraphics{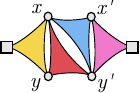}\\
    \tformatax{Case Z1}&
    \tformatax{Case Z2}&
    \tformatax{Case Z3}\\
  \end{array}
\end{align*}
\begin{proof}
  Let $G$ be a hard graph with $(x,y)$ and $(x',y')$ two
  minimal separation forget pairs. If $\set{x,y}$ meets $\set{x',y'}$ then we are trivially in \scase Z (in fact, \zcase1 by analysing paths between the two sources).
  Thus we assume in the sequel that $\set{x,y}$ and $\set{x',y'}$ are disjoint.

  We consider the graph from the point of view of the first separation pair $(x,y)$: by
  hypothesis $G$ has the following shape:
  \begin{center}
    \includegraphics{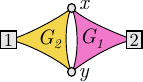}
  \end{center}
  We respectively write $s_1$ and $s_2$ for the first and second sources.
  We assume by symmetry that $x'$ appears in $G_1$, and we proceed by case analysis on whether $y'$ also appears together with $x'$ in $G_1$, or on the opposite side of $(x,y)$, in $G_2$.


  \subparagraph{1. when $x'$ is in $G_1$ and $y'$ is in $G_2$.}
  First observe that either $x'$ is a checkpoint between $x$
  and $s_2$ in $G_1$ or $y'$ is a checkpoint between $s_1$ and $x$ in
  $G_2$. Indeed, if we had both an $y'$-free path $P$ from $s_1$ to
  $x$ and an $x'$-free path $Q$ from $x$ to $s_2$, then their
  concatenation $PQ$ would be a path from $s_1$ to $s_2$ contradicting
  Lemma~\ref{lem:sep_pair_checkpair}.
  Therefore, either $(x',y)$ or $(x,y')$ is a separation pair, and we are in \scase Z.
  We show below that we are actually in \zcase2.

  Up to permutation of $x'$ and $y'$, assume that $x'$ is a checkpoint
  between $x$ and $s_2$ in $G_1$.
  By symmetry, we also have that either $x'$ is a checkpoint between $y$
  and $s_2$ in $G_1$ or $y'$ is a checkpoint between $s_1$ and $y$ in
  $G_2$. We now prove that only the latter case is possible. By
  contradiction suppose that $x'$ is a checkpoint both between $s_2$ and
  $y$ and between $s_2$ and $x$. Consider $P$ a $s_1$-$s_2$ path. By
  Lemma~\ref{lem:sep_pair_checkpair}, $P$ contains either $x$ or $y$.
  In the respective cases, $xP$ or $yP$ must contain $x'$ as $x'$ is a
  checkpoint between $x$ and $s_2$ and between $y$ and $s_2$ in $G_1$.
  Hence $x$ is a checkpoint in $G$, contradicting $G$ being hard.

  We end up with the following shape for $G$:
  \begin{center}
    \includegraphics{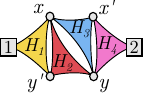}
  \end{center}

  Assume an inner $x$-$y$ path $P$ in either $H_2$ or $H_3$. Note that
  $(x,y')$ and $(x',y)$ are separation pairs. By
  Lemma~\ref{lem:sep_pair_checkpair} let be an inner path $Q$ (resp. $R$)
  from $s_1$ to $s_2$ containing $x$ (resp. $y$) but not $y'$ (resp.
  $x'$). The concatenation $QxPyR$ is a $s_1$-$s_2$ path in $G$
  containing neither $x'$ nor $y'$, contradicting
  Lemma~\ref{lem:sep_pair_checkpair}.

  Hence inner $x$-$y$ paths in $H_2$ or $H_3$ do not exist, meaning
  there are no reduced components of either $H_2$ or $H_3$ with both
  $x$ and $y$ as sources: we fall into \zcase2.

  \subparagraph{2. when $x'$ and $y'$ are both in $G_1$.}
  We first prove that $s_2$ is isolated in all components of $G_1$ not
  containing $x'$ nor $y'$.  Let $C$ be such a component ans suppose
  by contradiction that $s_2$ is not isolated.  If $x$ and $y$ are
  both isolated in $C$, then $C$ is actually a component of $G$,
  contradicting the fact $G$ is prime.  Otherwise, assume, without
  loss of generality, that $x$ is not isolated in $C$. Let $Q$ be an
  $x$-$s_2$ path in $C$. By Lemma~\ref{lem:sep_pair_checkpair}, let
  $P$ be an $s_1$-$s_2$ path in $G$ containing $x$ but not $y$. The
  path $PxQ$ is from $s_1$ to $s_2$, and contains neither $x'$ nor
  $y'$ as it intersects $H$ only at $x$ and $s_2$, a contradiction by
  Lemma~\ref{lem:sep_pair_checkpair}.

  Without loss of generality, we can move all those components from
  $G_2$ to $G_1$, so that either $G_1$ becomes prime and contains both
  $x'$ and $y'$, or $G_1$ has exactly two prime components,
  respectively containing $x'$ and $y'$.

  \subparagraph{2.1. when $G_1\giso H_1\parallel H_2$ with $x'$ in
    $H_1$, $y'$ in $H_2$, and $H_1,H_2$ prime.} We prove that $x'$ is
  a checkpoint between $x$ and $s_2$ in $H_1$. Up to replacing $x$ by
  $y$, and/or $x'$ by $y'$, this gives that $x'$ is also a checkpoint
  between $y$ and $s_2$ in $H_1$, and the same properties for $y'$ but
  in $H_2$.

  Consider an $x$-$s_2$ inner path $P$ in $H_1$. By
  Lemma~\ref{lem:sep_pair_checkpair}, let be $Q$ an $s_1$-$s_2$ inner
  path in $G$ that contains $x$ but not $y$. The path $QxP$ is from
  $s_1$ to $s_2$, and cannot contain $y'$, as it intersects $H_2$ only
  at $x$ and $s_2$. By Lemma~\ref{lem:sep_pair_checkpair}, $(x',y')$
  being a separation pair, $QxP$ must contain $x'$. It can do so only
  in $P$ as $x'$ is inner in $H_1$.

  We end up with the following shape for $G$:
  \begin{center}
    \includegraphics{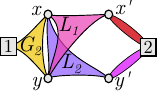}
  \end{center}
  Name $L$ this shape. A simple verification shows that $(L,x')$ has
  treewidth $4$. Note how, as $G$ is hard, the $2$-edges in $L$ must
  have full prime components in $G$. By
  Proposition~\ref{prop:subst_minor}, if all subgraphs of $G$
  corresponding to the $3$-edges in $L$ had $K_3$ as a sourced minor,
  $(G,x')$ would have the footprint of $(L,x')$ as a minor,
  contradicting the treewidth of the former graph. Hence one at least
  of $G_2$, $L_1$, or $L_2$ has a shape given by
  Proposition~\ref{prop:triangle_property}. Say this applies to
  $G_2$. The other cases are treated similarly and they all lead to the
  same final shape (\scase X).

  A simple verification shows that the only possibility given by
  Proposition~\ref{prop:triangle_property} for $G_2$ that keeps $G$ a
  hard graph and $x$ and $y$ minimal is the following:
  \begin{center}
    \includegraphics{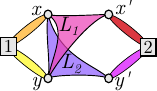}
  \end{center}
  Name $S$ this shape. Again $(S,x')$ has treewidth $4$, and both
  $L_1$ and $L_2$ cannot have $K_3$ as a sourced minor by
  Proposition~\ref{prop:subst_minor}. By symmetry, we can assume that
  a shape from Proposition~\ref{prop:triangle_property} applies to
  $L_1$. There is again only one possibility that keeps $G$ hard and
  $x'$ minimal:
  \begin{center}
    \includegraphics{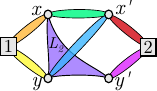}
  \end{center}
  Once again this shape with $x'$ as an added source has treewidth
  $4$, by Proposition~\ref{prop:subst_minor} $L_2$ cannot have $K_3$
  as a sourced minor, and the only possibility given by
  Proposition~\ref{prop:triangle_property} for $L_2$ gives the shape of \scase X for $G$.

  \subparagraph{2.2. when $G_1=H$ is prime.}
  As $x$ and $y$ are forget points, $H$ has treewidth at most $3$ as a
  component of $(G,x,y)$. Furthermore, $G$ being hard, $H=G_1$ must be
  full prime. As it contains at least two inner vertices $x'$ and
  $y'$, it must have a forget point $z$. Assume $z'\prec z$, i.e. $z'$
  is a checkpoint between $z$ and a source of $G$. We prove that $z'$
  is a forget point of $H$. First, we show that $z'$ is inner in $H$.
  If $z'$ is a checkpoint between $z$ and $s_2$, then this is obvious.
  Otherwise $z'$ is a checkpoint between $z$ and $s_1$. As $H$ is full
  prime, there are inner paths in $H$ from $z'$ to either $x$ or $y$.
  By Lemma~\ref{lem:sep_pair_checkpair}, there are inner paths in
  $G_2$ from $x$ (resp. $y$) to $s_1$ not containing $y$ (resp. $x$).
  The concatenation of the latter with the former proves that neither
  $x$ nor $y$ can be $z'$, and that if $z'$ is in $G_2$, then it would
  be a checkpoint between $x$ and $s_1$, and between $y$ and $s_1$. In
  particular, it would be a checkpoint in $G$, contradicting the fact
  that $G$ is hard. Hence $z'$ is inner in $H$. By
  Lemma~\ref{lem:glt:reaches} $z'$ is an anchor in $(G,z)$, and hence in
  $H$. Up-to permutation of $z$ and $z'$ we can always assume that $z$
  is minimal.

  Look at the series decomposition of $H$ at $z$. The series factor
  must be reduced to a parallel composition of edges by
  Proposition~\ref{prop:fp:atomic}. We prove that no such edge can
  exist. By contradiction, assume there is one. It provides an
  $x$-$s_2$ path $Q$ in $H$ that contains neither $x'$ nor
  $y'$. By Lemma~\ref{lem:sep_pair_checkpair}, let $P$ be a
  $s_1$-$s_2$ path containing $x$ but not $y$. The path $PxQ$ is from
  $s_1$ to $s_2$, and contains neither $x'$ nor $y'$, a contradiction
  to Lemma~\ref{lem:sep_pair_checkpair}.

  Naming $H_i$ the series arguments, $G$ has the following shape:
  \begin{center}
    \includegraphics{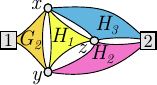}
  \end{center}

  Consider the case when $z\in\{x',y'\}$. Without loss of generality,
  suppose that $z=y'$. If $x'$ is in $H_1$, then we easily prove that
  there are no $x$-$s_2$ inner paths in $H_3$ or $y$-$s_2$ inner paths
  in $H_2$. By contradiction assume, for example, $Q$ to be a $x$-$s_2$
  inner path in $H_3$. By Lemma~\ref{lem:sep_pair_checkpair}, let
  $Q$ be a $s_1$-$s_2$ path in $G$ containing $x$ but not $y$.
  The path $PxQ$ is from $s_1$ to $s_2$, but contains neither $x'$
  nor $y'$, contradicting Lemma~\ref{lem:sep_pair_checkpair}.
  Hence $G$ has the following shape:
  \begin{center}
    \includegraphics{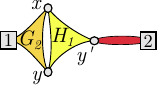}
  \end{center}
  There, $y'$ is a checkpoint, which contradicts $G$ being hard.

  Still assuming $z=y'$, we now assume that $x'$ is in $H_3$. Up-to permuting
  $x$ and $y$, and by extension, $H_2$ and $H_3$,
  this takes care of the case with $x'$ in $H_2$. Using
  Lemma~\ref{lem:sep_pair_checkpair} and similar reasoning as above,
  $x'$ must be a checkpoint between $x$ and $s_2$ in $H_3$, giving the
  shape of \zcase3.

  Thus we may assume $z\not\in\{x',y'\}$, and we do a case analysis
  depending on which $H_i$ contain $x'$ and/or $y'$.

  Assume that $x'$ and $y'$ are both in $H_1$. By
  Lemma~\ref{lem:sep_pair_checkpair} applied to the separation pair
  $(x,y)$, any $x$-$s_2$ inner
  path in $H_3$ or $H_2$ gives rise to a $s_1$-$s_2$ path containing
  neither $x'$ nor $y'$, a contradiction to the latter lemma applied
  to $(x',y')$; $G$ has the following shape:
  \begin{center}
    \includegraphics{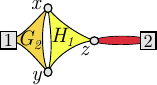}
  \end{center}
  As before, $z$ is a checkpoint in $G$ which is not possible as $G$ is
  hard.

  Assume that $x'$ and $y'$ are both in $H_3$. By symmetry this will
  take care of the case with $x'$ and $y'$ both being in $H_2$. By
  Lemma~\ref{lem:sep_pair_checkpair} again there cannot be $y$-$s_2$
  inner paths in $H_2$: we get the shape of \scase{IH} with $z$
  being indeed minimal.

  Assume $x'$ is in $H_3$ and $y'$ in $H_1$. By symmetry the cases of
  $x'$ in $H_2$ and $y'$ in $H_1$ and/or of permuting $x'$ and $y'$ are
  taken care of by the following too. Again by
  Lemma~\ref{lem:sep_pair_checkpair}, $y$-$s_2$ inner paths do not
  exists in $H_2$, and, without loss of generality, $H_2$ can be
  assumed empty. Similar reasoning as above using
  Lemma~\ref{lem:sep_pair_checkpair} can be used to prove the
  following: $x'$ is a checkpoint between $x$ and $z$, and
  between $x$ and $s_2$ in $H_3$, and $y'$ is a checkpoint between $y$
  and $z$ and between $x$ and $z$ in $H_1$. This is an instance of \zcase3.

  \medskip

  Finally, assume $x'$ and $y'$ are in $H_3$ and $H_2$ respectively.
  Using Lemma~\ref{lem:sep_pair_checkpair}, we prove, following the
  same reasoning as above, that $x'$ is a checkpoint between $x$ and
  $s_2$ in $H_3$ and $y'$ between $y$ and $s_2$ in $H_2$. Said
  otherwise, $G$ has the following shape:
  \begin{center}
    \includegraphics{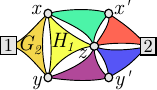}
  \end{center}

  Name $L$ this shape. The graph $(L,x')$ has treewidth $4$. As has
  been done above, we can use Propositions~\ref{prop:subst_minor}
  and~\ref{prop:triangle_property} repeatedly to refine the shapes
  of the subgraphs of $G$ corresponding to the $3$ edges of $L$. In
  all possible cases, retaining minimality of $x$, $x'$, $y$, and
  $y'$, as well as the fact $G$ is hard gives the shape of \scase D.
\end{proof}
We finally deduce that hard terms reach all their minimal separation forget pairs.
\begin{lemma}
  \label{lem:sep:jump}
  Let $G$ be a hard graph with minimal separation forget pair $(x',y')$.
  All parsings $t$ of $G$ reaching a minimal separation pair $(x,y)$ also reach $(x',y')$.
\end{lemma}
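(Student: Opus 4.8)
The plan is to argue by induction on the size of $G$ and to apply Proposition~\ref{prop:two-sep-pairs} to the two minimal separation forget pairs $(x,y)$ and $(x',y')$, treating each case it produces. Throughout I will use the fact recorded after Lemma~\ref{lem:glt:reaches}, that a full prime term reaches a separation pair if and only if it reaches any one of its two constituents (this is where \axFS enters, symmetrising the double forget). From the hypothesis that $t$ reaches $(x,y)$ I thus get at once that $t$ reaches $x$ and $y$. If $\set{x,y}$ meets $\set{x',y'}$ the conclusion is immediate, so I may assume the two pairs disjoint and invoke Proposition~\ref{prop:two-sep-pairs}, possibly after swapping $x$ with $y$ and/or $x'$ with $y'$.

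In \scase Z the pair $(x,y')$ is itself a separation pair, and the argument is a short chain of reachabilities: $t$ reaches $x$, so by Lemma~\ref{lem:glt:reaches} applied to $(x,y')$ it reaches $(x,y')$ and hence $y'$; applying Lemma~\ref{lem:glt:reaches} once more, to $(x',y')$ and using that separation pairs are symmetric (\axFS), gives that $t$ reaches $(x',y')$. This uniformly covers the subcases \zcase1, \zcase2 and \zcase3, as only the existence of the connecting pair $(x,y')$ is used.

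In \scase X and \scase D the graph $G$ has shape \axFX, respectively \axFD, with $(x,y)$ and $(x',y')$ the two disjoint separation pairs marked in the corresponding figure. Here I would first normalise $t$ to the left-hand side of the relevant axiom. This rests on the fact that any two parsings of $(G,x,y)$ are $\teq$: because $(x,y)$ separates the two sources, the reduced components of $(G,x,y)$ are strictly smaller than $G$, so the claim reduces via Proposition~\ref{prop:tm:decomp} to completeness for those, available by the induction underlying Theorem~\ref{theo:completeness}. In particular $t$ equals the substituted left-hand side of \axFX (respectively \axFD), the substitution being the one witnessing the shape. As $\teq$ is closed under substitutions, the axiom then applies and rewrites $t$ into the substituted right-hand side, a parsing reaching $(x',y')$. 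The symmetries allowed by Proposition~\ref{prop:two-sep-pairs} are handled identically, and Lemma~\ref{lem:axioms:easy} guarantees that the concrete parsings chosen in the axioms are immaterial.

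The remaining \scase{IH} is the genuinely inductive one, and I expect it to be the main obstacle. There $G$ contains a strictly smaller hard component $C$ (the rightmost one of the figure) in which $(x',y')$ is again a minimal separation pair, attached to the rest of $G$ through a minimal vertex $z$. The plan is to use the minimality of $z$, a series decomposition at $z$ (Proposition~\ref{prop:tm:sdeccomp}) and \ax7 to rewrite $t$ so that its behaviour on $C$ is witnessed by a parsing $t_C$ of $C$ reaching the minimal separation pair of $C$ through which $z$ attaches; the induction hypothesis applied to the smaller graph $C$ then yields that $t_C$ reaches $(x',y')$ inside $C$, and reassembling the decomposition transports this back to a reaching of $(x',y')$ by $t$. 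The delicate points will be to match these term-level manipulations (series factors, permutations, and the axioms \ax{1-7} and \axFS) to the graph-level descent into $C$, and to check that the separation pair of $C$ reached by $t_C$ is indeed the one to which the induction hypothesis must be applied.
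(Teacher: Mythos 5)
Your \scase Z is exactly the paper's argument, but the two remaining cases have problems, and the serious one is \scase{IH}. Your global induction is on the size of $G$, and in \scase{IH} you propose to apply the lemma inductively to the rightmost component $C$ containing $x'$ and $y'$. This cannot be set up: for the induction hypothesis to apply, $C$ would have to be a \emph{hard} graph in which $(x',y')$ is a minimal separation pair, and it never is. Indeed $C$ must attach to the rest of $G$ at three or more vertices: if it attached at only two vertices $u,v$, then, since every inner path between the two sources of $G$ meets $x'$ or $y'$ (which lie in the interior of $C$), every such path would have to enter and leave the interior of $C$ through $\set{u,v}$, forcing $u$ (and $v$) to be a checkpoint of $G$, hence an anchor, contradicting hardness of $G$. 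So $C$ has arity at least three, whereas hard graphs have arity exactly two (Propositions~\ref{prop:anchors013} and~\ref{prop:fp:atomic}), and separation pairs are not even defined for $C$. The paper's induction is organised precisely to avoid this descent: $G$, $t$ and the target pair $(x',y')$ stay fixed, the measure is the size of the component of $(G,x,y)$ containing $x'$, and in \scase{IH} one merely replaces the pair $(x,y)$ by $(x,z)$, which is again a minimal separation pair of the \emph{same} graph $G$, is reached by $t$ by Lemma~\ref{lem:glt:reaches}, and strictly decreases the measure. Your measure cannot support this move, since the graph does not shrink.

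A secondary issue is the normalisation step in \scase X and \scase D. You derive it from completeness for the (strictly smaller) reduced components of $(G,x,y)$, citing ``the induction underlying Theorem~\ref{theo:completeness}''. As stated this is circular: Theorem~\ref{theo:completeness} is proved using Lemma~\ref{lem:fp:sync}, which is obtained from Lemma~\ref{lem:reaching:sep} together with the very lemma you are proving. The idea could be repaired by recasting the whole development as one simultaneous induction on size followed by $4$ minus arity (the components of $(G,x,y)$ are strictly smaller, so no cycle arises), but that restructuring must be made explicit, and it is not what the paper does: the paper rewrites $t$ into the substituted left-hand side of \axFX or \axFD using Lemma~\ref{lem:anchor:parsing}, showing that the relevant inner vertices are anchors of appropriate subgraphs because $G$ is hard, and never appeals to completeness.
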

\begin{proof}
  We fix $G,x',y'$, we proceed by induction on the size of the component of $(G,x,y)$
  containing $x'$, and we use Proposition~\ref{prop:two-sep-pairs}.
  If $(x,y')$ is a separation pair (\scase Z), then we use Proposition~\ref{lem:glt:reaches} twice: $t$ reaches $x$, thus $(x,y')$, thus $y'$ in particular, thus $(x',y')$. 
  In \scase{IH} we proceed by induction on $(x,z)$, which is reached by $t$ by similar reasoning, and for which the induction measure decreases.
  In \scase{X} and \scase{D}, we conclude directly by \axFX and \axFD, respectively.
  (In those latter two cases, we first need to rewrite $t$ so that it agrees syntactically with the shape of the axiom we use. We use Lemma~\ref{lem:anchor:parsing} for that, which requires showing that some of the inner vertices are anchors in appropriate subgraphs. To do so, we exploit the fact that $G$ is hard, so that the subgraphs corresponding to the edges of the exhibited shape must be connected enough.)
\end{proof}
Together with Lemma~\ref{lem:reaching:sep}, Lemma~\ref{lem:fp:sync} follows, which concludes our completeness proof.

\section{Conclusion and future work}
\label{sec:ccl}

We have provided a finite presentation of graphs of treewidth at most three.
Our axioma\-tisation comprises axioms for dealing with full prime decompositions (\ax{1-7}), axioms for reaching anchors (\axFS, \axFK), and two axioms for hard graphs (\axFX and \axFD).

\medskip

An obvious question for future work is whether the approach presented
here generalises to the case of graphs of treewidth at most $k$.

It does so up to Section~\ref{sec:anchors}.
The syntax we use readily characterises those graphs
(Proposition~\ref{prop:tw:terms}), and the seven first axioms as well
as the results in Section~\ref{sec:completeness} are not specific to
the case $k=3$.
Axioms for anchors (Figure~\ref{fig:anchor:axioms}) also generalise.
Accordingly, our results about series decomposition and anchors
(Lemma~\ref{lem:anchor:parsing}) extend easily. In particular,
non-atomic full prime graphs of arity $0,1$ and $k$ would always have
an anchor.

This means we also have finite presentations for bounds $k=1,2$. (For
$k=2$ this axiomatisation differs from the one previously
proposed~\cite{DBLP:conf/mfcs/Cosme-LlopezP17,DBLP:conf/mfcs/DoczkalP18},
because the chosen syntax is different: there, graphs of arity three
are not considered, and parallel composition may be applied to graphs
of distinct arities.)

However, the results from Section~\ref{sec:hard} are specific to the
case $k=3$, and getting finite presentations for larger values of $k$
seems difficult. Following the strategy presented here, we would need
to prove the forget point agreement lemma (Lemma~\ref{lem:fp:sync})
for new hard graphs, which may have arities between $2$ and $k-1$.

Still, we would like to emphasise the utility of such a generalisation
in the context of graph theory. Robertson and Seymour proved
\cite{DBLP:journals/jct/RobertsonS04} that for all integers $k$ there
must be a finite list of excluded minors characterising the class of
treewidth at most $k$ graphs. Consider a minimal one amongst them, say
$H$, along with a maximal clique over vertices $x_1,\dots,x_i$. It is
not difficult to prove that either $H$ is $K_{k+2}$, or
$(H,x_1,\dots,x_i)$ is hard. Thus, studying hard graphs of treewidth
$k+1$ might lead to a better understanding of excluded minors for
treewidth at most $k$.

\clearpage
\bibliography{short}

\end{document}

